%%%%%%%
%%%%%%%  Dain invariants for black hole spacetimes
%%%%%%% 
%%%%%%% 
%%%%%%% Current version: 20.12.2022
%%%%%%% 

\documentclass[a4paper,10pt]{article}

\usepackage{amssymb}
\usepackage{amsmath}
\usepackage{subfigure}
\usepackage{slashed} 
\usepackage{tensor}

\usepackage{physics}
%%%%%%%%%%%%%%%%%%%%%%% needed for long lists of equations
\allowdisplaybreaks
%%%%%%%%%%%%%%%%%%%%%%%%

%%%%%%%%%%%% for crossing out the text
\usepackage{soul}
%%%%%%%%%%%%

\usepackage{graphicx}
\usepackage{amsthm}
\usepackage{latexsym}
\usepackage[dvips]{epsfig}

\usepackage{wasysym}
\usepackage{mathrsfs}
\usepackage{eufrak}
\usepackage{bm}
%\usepackage{tikz}
%\usetikzlibrary{calc,through,backgrounds}
\usepackage{authblk}
\usepackage{slashed}
\usepackage{yhmath} 
\usepackage{stmaryrd}

\theoremstyle{plain}
\newtheorem{proposition}{Proposition}
\newtheorem{lemma}{Lemma}
\newtheorem{theorem}{Theorem}
\newtheorem*{theoremnonumber}{Theorem}
\newtheorem{assumption}{Assumption}

\newtheorem{corollary}{Corollary}
\newtheorem*{main}{Main Result}
\newtheorem{definition}{Definition}
\newtheorem{remark}{Remark}
\newtheorem*{condition}{Condition}

\newtheorem*{notation}{Notation}

\setlength{\textwidth}{148mm}           % Width of text on page- max 148
\setlength{\textheight}{220mm}          % height of text on page-max 235
\setlength{\topmargin}{-5mm}            % Margin at top ofpage- max -5
\setlength{\oddsidemargin}{5mm}         % Odd page sidemargin max 15
\setlength{\evensidemargin}{5mm}

% Underlined/barred lowcase latin letters

\def\er{{\bar{r}}}
\def\bh{\bar{h}}
\def\bD{\bar{D}}

\def\bK{\bar{K}}

% Boldface mathmode lowcase latin letters
\def\bma{{\bm a}}

\def\bmf{{\bm f}}
\def\bmg{{\bm g}}
\def\bmh{{\bm h}}

\def\bml{{\bm l}}

\def\bmu{{\bm u}}
\def\bmv{{\bm v}}

% Boldface mathmode numbers

% Boldface mathmode uppercase latin letters
\def\bmA{{\bm A}}

% Mathbf letters

% Boldface mathmode lowcase greek letters

% Boldface mathmode uppercase greek letters

% Boldface operators

%Yang-Mills indices

%mathcal letters
\def\mcS{\mathcal{S}}
\def\mcM{\mathcal{M}}

%mathrm letters
\def\me{\mathrm{e}}
\def\mi{\mathrm{i}}

%normalderiv letters
\def\msD{D^{\perp}}

%Counter variable for the margin notes
\newcounter{mnotecount}%[section]

% This code generates the margin notes
\newcommand{\mnotex}[1]%{}
{\protect{\stepcounter{mnotecount}}$^{\mbox{\footnotesize $\bullet$\themnotecount}}$ 
\marginpar{%\color{red}%
\raggedright\tiny\em
$\!\!\!\!\!\!\,\bullet$\themnotecount: #1} }

%%%%%%%%%%%%%%%%%%%%%%%%%%%%%%%%%%%%%%%%%%%%%%%5

\usepackage[utf8]{inputenc}
\usepackage[english]{babel}
 
\usepackage{multicol}
\usepackage{color}
 
\usepackage{comment}
 
\setlength{\columnseprule}{1pt}

\begin{document}

\title{\textbf{Dain's invariant for black hole initial data}}

\author[1]{R. Sansom\footnote{E-mail address:{\tt
      r.sansom@qmul.ac.uk}}} \author[,1]{J. A. Valiente
  Kroon \footnote{E-mail address:{\tt j.a.valiente-kroon@qmul.ac.uk}}}

\affil[1]{School of Mathematical Sciences,
  Queen Mary, University of London, Mile End Road, London E1 4NS,
  United Kingdom.}

\maketitle

\begin{abstract}
Dynamical black holes in the non-perturbative regime are not
mathematically well understood. Studying approximate symmetries of
spacetimes describing dynamical black holes gives an insight into
their structure. Utilising the property that approximate symmetries
coincide with actual symmetries when they are present allows one to
construct geometric invariants characterising the symmetry. In this
paper, we extend Dain's construction of geometric invariants
characterising stationarity to the case of initial data sets for the
Einstein equations corresponding to black hole spacetimes. We prove
the existence and uniqueness of solutions to a boundary value problem
showing that one can always find approximate Killing vectors in black
hole spacetimes and these coincide with actual Killing vectors when
they are present. In the time-symmetric setting we make use of a 2+1
decomposition to construct a
geometric invariant on a MOTS that vanishes if and only if the Killing
initial data equations are locally satisfied.
\end{abstract}

\medskip

\section{Introduction}
For a generic initial data set of the vacuum Einstein field equations
an important question is whether the development of this initial data
set possesses a Killing vector or symmetry. This question first arose
in the study of linearisation stability \cite{Mon75} and is of utmost
importance as symmetries greatly simplify problems in all fields
within General Relativity. Moreover, some of the most important
solutions to the Einstein field equations admit a number of
symmetries. In the context of the initial value problem of General
Relativity, the existence of continuous symmetries is characterised by
solutions of the Killing initial data (KID) equations \cite{BeiChr97a}
---see also \cite{Mon75}. The KID equations are a system of
overdetermined equations for a scalar and a 3-vector on the initial
data set such that, if a solution exists, the development of the data
will have a Killing vector with lapse and shift at the initial
hypersurface given by the scalar and vector, respectively. In fact,
these equations also have a deep connection with the ADM evolution
equations and the adjoint linearised constraint map $D\Phi^*$, as described in Section \ref{AKEsection}.

\medskip
As already mentioned, the condition that a spacetime $(\mcM, \bmg)$ possesses
a Killing vector is encoded in the initial data by the Killing initial data (KID)
equations. This holds, in particular, for time translations. Given the
role that stationary solutions have in the mathematical description of
isolated bodies, it is natural to attempt to quantify the deviation of a given initial
data set from that of a stationary one. In \cite{Dai04c}, the notion of
an {\it approximate Killing vector} was introduced as a solution to a
fourth order, self-adjoint, elliptic partial differential
equation whose solution implies the existence of a geometric invariant
which allows one to identify static initial data. This idea was developed and extended in \cite{ValWil17} to
the non-time symmetric case ---thus allowing for a characterisation of
stationary data sets. In these works the approximate Killing vector
equation was solved on the whole of an asymptotically Euclidean
3-manifold. That is, the only boundary conditions prescribed in this
analysis were those associated to the asymptotic ends. As such, this
approach does not allow to encode the presence of a black hole.

\medskip
The purpose of this article is to extend the results of \cite{Dai04c}
and \cite{ValWil17} to incorporate not only asymptotic conditions but
also an inner boundary that corresponds to the boundary of a black
hole. This boundary will be a
2-dimensional surface that encodes the presence of a black hole in the
evolution of the initial development. This is accomplished by the
condition that the surface is a marginally outer trapped surface (MOTS),
sometimes referred to as an apparent horizon.

\medskip
The main result of this paper is contained in the following theorem. 
\begin{main}
Let $(S,h_{ij},K_{ij})$ be a complete, smooth asymptotically Euclidean
initial data set for the Einstein vacuum field equations with one
asymptotic end. Given smooth fields $f$, $g$, $f^i$ and $h^i$ over
$\partial\mathcal{S}$, then there exists a solution $(X,X^i)$ to the
approximate KID equation boundary value problem
\begin{align*}
&\mathscr{P}\circ\mathscr{P}^{*}
\begin{pmatrix}
X\\
X^{i}
\end{pmatrix}
=0 \qquad \text{on } S,\\
&\begin{cases}
X\vert_{\partial S} =f,\\
\Delta_{h} X \vert_{\partial S} =g,\\
X^{i}\vert_{\partial S}=f^i,\\
\pdv{\rho}X^{i}\vert_{\partial S}=h^i,
\end{cases}
\end{align*}
such that on the asymptotic end, one has the asymptotic behaviour
\begin{align*}
&X=\lambda |x|+\vartheta_{} \qquad \vartheta\in H^{\infty}_{1/2}\\
&X^{i} \in  H^{\infty}_{1/2}
\end{align*}
where $\lambda_{}$ is Dain's invariant ---i.e. if the data is
stationary in the sense of Definition \ref{defstat} then
$\lambda_{}$ vanishes.
\end{main}
The generality of the boundary values in this theorem is exploited in
order to choose physically relevant values of the fields. The construction of
the boundary values is clearly non-unique due to the arbitrariness of
the fields. We consider the decomposition of the Killing initial data
(KID) equations onto $\partial\mcS$ and investigate how much of the
KID equations one can solve on this surface. Integral to this
construction is the use of the MOTS stability operator
\cite{AndMarSim08}. In this manner, in the time symmetric setting, we
incorporate the condition that the MOTS propagates into the evolution
of the initial data in the boundary conditions of the approximate KID
operator. However, in the non-time symmetric case, how one should
incorporate the stability of MOTS is much less obvious.

\medskip
In addition to the main theorem above, in the time symmetric setting,
we also construct a geometric invariant on the surface $\partial\mcS$
that vanishes if and only if the KID equations are solved there. This
result is contained in the following theorem and proved in Section
\ref{KIDsolvabilitysec}.

\begin{theoremnonumber}
  Given a time symmetric initial data set and a MOTS the KID equations give rise to the elliptic equation on $\partial\mcS$ 
  \[
  \Delta_{\bh}N-\frac{1}{2}(\er+\bK^{pq}\bK_{pq}) N=0.
  \]
 For a stable MOTS the unique solution is $N=0$. Furthermore, using the evolution equations one obtains
\[
\Delta_{h}N=N=0, \qquad \mbox{on} \quad \partial\mathcal{S}.
\]
Then the KID equations are satisfied on $\partial
\mathcal{S}$ if and only if $\omega=0$ where
\begin{equation*}
\omega \equiv \int_{\partial\mathcal{S}} |\bK|^{2}|\msD N|^{2},
\end{equation*}
with $|\bK|^{2}=\bK_{pq}\bK^{pq}$ and $D^{\perp}\equiv \rho^{i}D_{i}$.
\end{theoremnonumber}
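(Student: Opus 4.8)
The plan is to derive everything from the tangential projection of the static KID equation onto the MOTS. In the time-symmetric setting the extrinsic curvature of the slice vanishes, so the Hamiltonian constraint reduces to $R=0$, the lapse part of the KID system becomes the static equation $D_iD_jN=NR_{ij}$, and the MOTS condition reduces to the vanishing of the mean curvature $\bK=\bh^{pq}\bK_{pq}=0$, i.e. $\partial\mcS$ is a minimal surface in $\mcS$. First I would introduce the unit normal $\rho^i$ to $\partial\mcS$, the induced metric $\bh_{pq}$ and second fundamental form $\bK_{pq}$, and record the projection identity relating the tangential trace of the Hessian to the intrinsic Laplacian, $(h^{ij}-\rho^i\rho^j)D_iD_jN=\Delta_{\bh}N$ (modulo a mean-curvature term that vanishes since $\bK=0$).

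Next I would contract the static KID equation with the tangential projector, giving $(h^{ij}-\rho^i\rho^j)D_iD_jN=N(R-R_{ij}\rho^i\rho^j)$. The normal–normal Ricci component is then eliminated through the contracted Gauss equation $\er=R-2R_{ij}\rho^i\rho^j+\bK^2-\bK_{pq}\bK^{pq}$; imposing $R=0$ and $\bK=0$ yields $R_{ij}\rho^i\rho^j=-\tfrac12(\er+\bK_{pq}\bK^{pq})$. Combining this with the projection identity produces exactly
\[
\Delta_{\bh}N-\tfrac12\bigl(\er+\bK^{pq}\bK_{pq}\bigr)N=0,
\]
which is the claimed elliptic equation on $\partial\mcS$.

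To conclude $N=0$ I would compare the operator $\mathcal{L}:=-\Delta_{\bh}+\tfrac12(\er+\bK_{pq}\bK^{pq})$ with the MOTS stability operator, which in the time-symmetric case is the minimal-surface Jacobi operator $L=-\Delta_{\bh}-(\bK_{pq}\bK^{pq}+R_{ij}\rho^i\rho^j)=-\Delta_{\bh}+\tfrac12(\er-\bK_{pq}\bK^{pq})$, so that $\mathcal{L}=L+\bK_{pq}\bK^{pq}$. Since $\partial\mcS$ is a \emph{stable} MOTS the principal eigenvalue of $L$ is non-negative; as $\bK_{pq}\bK^{pq}\ge 0$, the principal eigenvalue of $\mathcal{L}$ is non-negative and strictly positive wherever the shear is nonzero, whence $\ker\mathcal{L}=\{0\}$ and the unique solution is $N=0$ on $\partial\mcS$. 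Taking instead the full three-dimensional trace of $D_iD_jN=NR_{ij}$ gives $\Delta_hN=NR=0$ by the Hamiltonian constraint, so that $\Delta_hN=N=0$ on $\partial\mcS$.

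Finally, with $N=0$ and $\Delta_hN=0$ on $\partial\mcS$ the KID equations there reduce to the vanishing of the full Hessian $D_iD_jN$, the $NR_{ij}$ term having dropped out. I would decompose this symmetric tensor using the $2+1$ split along $\partial\mcS$. Because $N\equiv 0$ on $\partial\mcS$ its intrinsic gradient $\bD_aN$ vanishes, so the tangential–tangential part collapses to $-\bK_{pq}\msD N$; the normal–normal part equals $\Delta_hN-\Delta_{\bh}N=0$ by the previous step; and the tangential trace vanishes by the elliptic equation. The residual trace-free tangential component $\bK_{pq}\msD N$ has squared norm $\bK_{pq}\bK^{pq}(\msD N)^2=|\bK|^2|\msD N|^2$, so by non-negativity of the integrand it vanishes pointwise if and only if $\omega=0$, establishing the equivalence. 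The main obstacle is the bookkeeping of the $2+1$ decomposition: one must verify that the stability comparison genuinely forces the trivial kernel (controlling the marginal case) and that the mixed tangential–normal component $\bD_a(\msD N)$ does not furnish an independent surface obstruction, so that the geometric content of the KID equations on $\partial\mcS$ is captured precisely by the trace-free tangential piece measured by $\omega$.
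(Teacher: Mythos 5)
Your derivation of the intrinsic elliptic equation, the comparison $\mathcal{K}=\mathcal{L}+|\bK|^{2}$ with the MOTS stability operator forcing $N=0$, and the trace argument giving $\Delta_{h}N=0$ all run parallel to the paper (the paper contracts the Gauss--Codazzi identity $\er+|\bK|^{2}-\bK^{2}=2\bh^{ij}r_{ij}$ where you use the doubly contracted Gauss equation; these are the same computation). Your ``only if'' direction is also sound: given $N=\Delta_{h}N=0$, the tangential--tangential part of the Hessian reduces to $\bK_{pq}\msD N$, whose pointwise squared norm is exactly the integrand of $\omega$, so satisfaction of the KID equations on $\partial\mcS$ forces $\omega=0$. (The marginal stability case you flag is also glossed over in the paper, which asserts $\lambda>0$ for a stable MOTS and appeals to Lemma 4.2(iii) of Andersson--Mars--Simon; that is a shared looseness, not a defect specific to your argument.)

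The genuine gap is the converse, and you have named it yourself without closing it. Your decomposition shows that, given $N=\Delta_{h}N=0$, the KID equations on $\partial\mcS$ are equivalent to the two conditions (i) $\bK_{pq}\msD N=0$ and (ii) $\bD_{A}\msD N=0$, the normal--normal part being automatic. The vanishing of $\omega$ is, by inspection, equivalent to (i) alone; nothing in your argument shows that (i) implies (ii), and pointwise it cannot: if $\bK_{pq}\equiv 0$ on $\partial\mcS$ then $\omega=0$ holds trivially while (i) places no constraint on $\msD N$. The missing ingredient --- and the actual substance of the paper's proof --- is the identity $\Delta_{\bh}\msD N=|\bK|^{2}\msD N$ on $\partial\mcS$, obtained by taking the normal derivative of the intrinsic equation \eqref{kiddc2}, commuting the resulting third derivatives, and applying Gauss--Codazzi to the curvature term. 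With it, $\omega=0$ gives $|\bK|^{2}\msD N=0$, hence $\Delta_{\bh}\msD N=0$, so $\msD N$ is harmonic on the closed surface $\partial\mcS$ and therefore constant, which is precisely (ii). The paper in fact treats (ii) as the primary residual condition and packages the argument as $0\le\oint_{\partial\mcS}|\bD\msD N|^{2}=-\oint_{\partial\mcS}\msD N\,\Delta_{\bh}\msD N=-\omega$. Note also that this identity requires the intrinsic equation to hold on the nearby leaves of the foliation, not merely on $\partial\mcS$ --- this is the sense of ``locally satisfied'' in the paper's abstract --- and it is exactly the piece of structure that your purely surface-intrinsic decomposition of the Hessian never invokes.
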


\subsection*{Outline of the article}
The paper is organised as follows: in Section \ref{AKEsection} the
approximate Killing initial data equation (AKE) is constructed as in
\cite{ValWil17} and \cite{Dai04c}. In order to make sure that a
suitable boundary value problem is constructed, natural boundary
operators are constructed using the Green formula. The compatibility
of the boundary operators and the AKE are then verified using the
Lopatinskij-Shapiro condition. Section
\ref{Section:MainExistenceTheorem} comprises the main result of this
article, namely, the existence and uniqueness of the AKE with inner
boundary conditions. In Section \ref{KIDeqnMOTSsec}, we investigate
and derive the decomposition of the KID equations into components
tangential and normal to the inner boundary surface. Finally, in
section \ref{KIDsolvabilitysec}, we prove existence of solutions to
the tangential KID equations and give a construction of a geometric
invariant that vanishes only when one has a Killing vector. In
Appendix \ref{gfappendix} we present the Green formula for the
approximate KID operator, in Appendix \ref{LSODEappendix} we show the
full derivation of the solution to the ODE arising from the
Lopatinski-Shapiro condition and in Appendix
\ref{Appendix:DecompsotionKIDs} we derive the decomposed components of
the KID equations normal and tangential to $\partial\mcS$.

\subsection*{Notations and Conventions}
The indices, $a,b,c,...$ are spacetime indices, $i,j,k,...$ are
indices on an initial 3-dimensional hypersurface, $\mcS$ with metric
$h_{ij}$ and unit normal $n^{a}$. The indices $A,B,C,...$ are indices
on a 2-dimensional surface $\Sigma$ with metric $\bh_{AB}$ and unit normal $\rho^{i}$ embedded in the 3-dimensional surface. The induced covariant derivative on
$\mcS$ is $D_{i}$ and on $\Sigma$ is $\bD_{A}$ We use the positive
convention on the extrinsic curvature, that is,
$K_{ab}=+h_{a}{}^{c}h_{b}{}^{d}\nabla_{c}n_{d}$ for the extrinsic
curvature of $\mcS$ and
$\bK_{ij}=+\bh_{i}{}^{k}\bh_{j}{}^{l}D_{k}\rho_{l}$. In the latter
sections of the paper, we will make use of Gaussian normal coordinates
on 2-dimensional surfaces, that is the acceleration of the foliation
vanishes ---in other words, $a_{i}=0$.

\section{The approximate Killing vector equation}\label{AKEsection}
In this section, we develop the theory of the approximate Killing vector equation
and introduce some compatible boundary operators. The first part of
this section was developed in \cite{ValWil17}.

\subsection{The approximate Killing operator}
Denote an initial data set of the vacuum Einstein field equations by
$(\mathcal{S},h_{ij},K_{ij})$ where $\mathcal{S}$ is a 3-dimensional
manifold, $h_{ij}$ is a Riemannian metric on $\mathcal{S}$ and
$K_{ij}$ is a symmetric rank 2 tensor satisfying the vacuum Einstein
constraint equations
\begin{subequations}
\begin{eqnarray}
&& r+ K^2 -K_{ij} K^{ij} =0, \label{HamiltonianConstraint}\\
&& D^j K_{ij} - D_i K=0. \label{MomentumConstraint}
\end{eqnarray}
\end{subequations}
We refer to equations \eqref{HamiltonianConstraint} and
\eqref{MomentumConstraint} as the \emph{Hamiltonian} and
\emph{momentum} constraints, respectively. In the above expressions
$D_i$ denotes the Levi-Civita connection of the metric $h_{ij}$, $r$
is the associated Ricci scalar and $K\equiv K_{ij} h^{ij}$.

\medskip
In the sequel, we will be particularly interested in initial data sets
whose development is endowed with a Killing vector. At the level of
the initial data set, this property is encoded through the existence
of a solution to the Killing initial data (KID) equations.

\begin{proposition}\label{prop:kid}
{\em (Killing initial data (KID) equations.)} Let
$(\mathcal{S},h_{ij},K_{ij})$ be an initial data set for the vacuum
Einstein field equations. If there exists a pair $(N,Y^i)$ such that
\begin{subequations}
\begin{eqnarray}
&& NK_{ij}+D_{(i}Y_{j)} =0, \label{kid1}\\
&& N^{k}D_{k}K_{ij}+D_{i}N^{k}K_{kj}+D_{j}N^{k}K_{ik}+D_{i}D_{j}N=N\left(r_{ij}+KK_{ij}-2K_{ik}K^{k}_{j}\right), \label{kid2}
\end{eqnarray}
\end{subequations}
then the development of the initial data is endowed with a Killing
vector and $(N,Y^i)$ are the lapse and shift of this Killing vector at
$\mathcal{S}$.
\end{proposition}

A proof of this result can be found in \cite{BeiChr97b}.

\medskip
Following \cite{ValWil17}, we write the constraint equations
\eqref{HamiltonianConstraint} and \eqref{MomentumConstraint} as a map,
$\Phi:
\mathfrak{M}_2\cross\mathfrak{T}_2\rightarrow\mathfrak{S}\cross\mathfrak{X}$,
where $\mathfrak{M}_2$ is the space of 3-dimensional Riemannian
metrics, $\mathfrak{T}_2$ is the space symmetric 2-tensors,
$\mathfrak{S}$ the space of scalars and $\mathfrak{X}$ is the space of
vectors on $\mathcal{S}$:
\[
\Phi\begin{pmatrix}
h_{ij}\\
K_{ij}
\end{pmatrix}
=
\begin{pmatrix}
 r+ K^2 -K_{ij} K^{ij} \\
 D^j K_{ij} - D_i K
\end{pmatrix}.
\]
Linearising and finding the formal adjoint of this linearisation through integration by parts yields
\begin{equation}
D\Phi^*\left(\begin{array}{c}
X\\
X_i
\end{array}\right)=\left(\begin{array}{c}
D_iD_jX-X r_{ij}-\Delta_h X h_{ij}+H_{ij}\\
D_{(i}X_{j)}-D^kX_k h_{ij}+F_{ij}
\end{array}\right)
\end{equation}
where $H_{ij}$ and $F_{ij}$ are as in \cite{ValWil17}.

\begin{remark}
{\em A calculation shows that having a solution $(X,X_i)$ to
  $D\Phi^*(X,X_i)=0$ is equivalent to $(X, X_i)$ satisfying the KID
  equations ---see e.g. Remark 2 in \cite{ValWil17}.}
\end{remark}

The above remark gives the motivation behind the following definition:
\begin{definition}
For the operator $\mathscr{P}\circ\mathscr{P}^*:\mathfrak{S}\cross\mathfrak{X}\rightarrow\mathfrak{S}\cross\mathfrak{X}$, the equation
\begin{equation}\tag{AKE}\label{AKE}
\mathscr{P} \circ \mathscr{P}^{*}
\begin{pmatrix}
X\\
X_{i}
\end{pmatrix}
=0
\end{equation}
where this operator is given by 
\begin{equation*}
\hspace{-5mm}\mathscr{P}\circ\mathscr{P}^*\left(\begin{array}{c}
X\\
X_i
\end{array}\right) \equiv \left(\begin{array}{c}
2\Delta_\bmh\Delta_\bmh X-r^{ij}D_iD_jX+2r\Delta_\bmh X+\tfrac{3}{2}D^i rD_iX+(\tfrac{1}{2}\Delta_\bmh r+r_{ij}r^{ij})X\\
+D^iD^j H_{ij}-\Delta_\bmh H_k{}^k-r^{ij}H_{ij}+\bar{H} \\[1em]
D^j\Delta_\bmh D_{(i}X_{j)}+D_i\Delta_\bmh D^kX_k+D^j\Delta_\bmh F_{ij}-D_i\Delta_\bmh F_k{}^k-\bar{F}_i
\end{array}\right)
\end{equation*}
with
\begin{align*}
& \bar{H}\equiv 2(K\bar{Q}-K^{ij}\bar{Q}_{ij})+2(K^{ki}K^j{}_k-KK^{ij})\bar{\gamma}_{ij},\\
& \bar{F}_i\equiv \left(D_i K^{kj}-D^k K^j{}_i\right)\bar{\gamma}_{jk}-\left(K^k{}_i D^j-\tfrac{1}{2}K^{kj}D_i\right)\bar{\gamma}_{jk}+\tfrac{1}{2}K^k{}_i D_k \bar{\gamma}\\
&\bar{\gamma}_{ij}\equiv D_iD_jX-X r_{ij}-\Delta_\bmh X h_{ij}+H_{ij}\\
&\bar{Q}_{ij}\equiv -\Delta_\bmh(D_{(i}X_{j)}-D^kX_k h_{ij}+F_{ij})\\
&H_{ij}\equiv 2X(K^k{}_iK_{jk}-KK_{ij})-K_{k(i}D_{j)}X^k+\tfrac{1}{2}K_{ij}D_kX^k
+\tfrac{1}{2}K_{kl}D^kX^l h_{ij}-\tfrac{1}{2}X^kD_k K_{ij}+\tfrac{1}{2}X^k D_k K h_{ij}\\
& F_{ij}\equiv 2X(Kh_{ij}-K_{ij})
\end{align*}
is {\em the approximate Killing vector equation (AKE)}. A solution $(X,X_{i})$ to this equation is called an {\em approximate Killing vector}.
\end{definition}

\begin{remark}
{\em Following from Dain, \cite{Dai04c}, every approximate Killing vectors
corresponds to one of the ten Killing vectors in flat spacetime. In
the following, we focus on the approximate killing vectors that
correspond to the Killing vector associated with time translations in
flat spacetime. This analysis could be extended to the other Killing
vectors corresponding to spatial translations, boosts and rotations.}
\end{remark}

\begin{remark}
{\em When the initial data is {\em time symmetric}, that is $K_{ij}=0$, the AKE simplifies to
\begin{equation}\label{aKID}
\mathscr{P} \circ \mathscr{P}^{*}
\begin{pmatrix}
X\\
X_{i}
\end{pmatrix}
=
\begin{pmatrix}
2\Delta_{h}\Delta_{h}X - r^{ij}D_{i}D_{j}X+\frac{1}{2}r_{ij}r^{ij}X\\
D^{j}\Delta_{h}D_{(i}X_{j)}+D_{i}\Delta_{h}D^{k}X_{k}
\end{pmatrix}
\end{equation}
where we have used the constraint equations in this case to set $r=0$.
Note that under the assumption of time symmetry, these equations
decouple from one another and can thus be considered as two separate
equations: one for the lapse $X$ and one for the shift $X_i$.}
\end{remark}

 We have the following important properties of the AKE operator:
\begin{lemma}
The operator $\mathscr{P}\circ\mathscr{P}^*$ as defined above is a self adjoint, fourth order elliptic operator.
\end{lemma}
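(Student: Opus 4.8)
The plan is to verify the three asserted properties—formal self-adjointness, fourth order, and ellipticity—separately, the only genuine computation being the ellipticity of the block acting on the shift $X_i$.

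\emph{Self-adjointness.} This I would obtain for free from the factored structure of the operator. By construction $\mathscr{P}^{*}$ is the formal $L^2$-adjoint of $\mathscr{P}$, i.e. $\langle \mathscr{P} w,v\rangle=\langle w,\mathscr{P}^{*}v\rangle$ for all compactly supported smooth fields (the boundary terms being exactly those collected via the Green formula). Then for any test pairs $u=(X,X_i)$ and $v=(Y,Y_i)$ one has
\[
\langle \mathscr{P}\mathscr{P}^{*}u,\,v\rangle=\langle \mathscr{P}^{*}u,\,\mathscr{P}^{*}v\rangle=\langle u,\,\mathscr{P}\mathscr{P}^{*}v\rangle ,
\]
where the middle step uses the symmetry of the $L^2$ pairing. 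No property of $\mathscr{P}$ beyond its being a differential operator with a well-defined formal adjoint is needed, so $\mathscr{P}\circ\mathscr{P}^{*}$ is formally self-adjoint.

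\emph{Order and structure of the principal part.} By direct inspection of the explicit expression, the maximal number of derivatives is four: the scalar row contains $2\Delta_{h}\Delta_{h}X$ and the vector row contains $D^{j}\Delta_{h}D_{(i}X_{j)}+D_{i}\Delta_{h}D^{k}X_{k}$. I would then check that each coupling term is of strictly lower order. Inspecting the definitions, $H_{ij}$ is first order in $X_i$ and algebraic in $X$, while $F_{ij}$ is algebraic in $X$; since these enter under at most two further derivatives (as in $D^{i}D^{j}H_{ij}$ and $D^{j}\Delta_{h}F_{ij}$), and the auxiliary fields $\bar\gamma_{ij}$, $\bar Q_{ij}$, $\bar H$, $\bar F_i$ carry at most three derivatives on either unknown, the coupling of $X_i$ into the scalar equation and of $X$ into the vector equation is at most third order. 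Hence both diagonal blocks are genuinely fourth order and the off-diagonal blocks are of order $\le 3$, so the fourth-order principal symbol is block-diagonal.

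\emph{Ellipticity.} Replacing $D_{a}\mapsto i\xi_{a}$ at a nonzero covector $\xi$, the scalar block of the principal symbol is $2|\xi|^{4}$, which is manifestly nonzero. For the vector block the same substitution in $D^{j}\Delta_{h}D_{(i}X_{j)}+D_{i}\Delta_{h}D^{k}X_{k}$ yields the symbol map
\[
X_{i}\longmapsto \tfrac{1}{2}|\xi|^{4}X_{i}+\tfrac{3}{2}|\xi|^{2}\,\xi_{i}\,(\xi^{k}X_{k}).
\]
I would establish invertibility by decomposing $X_i$ into its components orthogonal to and parallel to $\xi_i$: on the orthogonal complement the map acts as $\tfrac{1}{2}|\xi|^{4}\,\mathrm{Id}$, and on the line spanned by $\xi_i$ it acts by the scalar $2|\xi|^{4}$, both nonzero for $\xi\neq0$. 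Therefore the block-diagonal principal symbol has nonvanishing determinant ($\propto|\xi|^{16}$) for every $\xi\neq0$, which is precisely ellipticity of the uniformly fourth-order system.

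The routine parts are the self-adjointness (automatic from the $\mathscr{P}\mathscr{P}^{*}$ structure) and the order count. The only real work, and the main obstacle, is the ellipticity of the shift block: one must confirm that all coupling terms really are of order $\le 3$ so that the principal symbol block-diagonalises, and then verify that the degenerate-looking symbol $\tfrac{1}{2}|\xi|^{4}X_{i}+\tfrac{3}{2}|\xi|^{2}\xi_{i}(\xi^{k}X_{k})$ is nevertheless invertible—which the parallel/orthogonal decomposition settles cleanly.
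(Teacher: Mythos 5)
Your proof is correct: the factorisation argument for formal self-adjointness, the order count showing that the couplings through $H_{ij}$, $F_{ij}$, $\bar{H}$, $\bar{F}_i$ enter at order at most three (so the fourth-order principal symbol block-diagonalises), and the parallel/orthogonal splitting showing the shift-block symbol has eigenvalues $\tfrac{1}{2}|\xi|^4$ and $2|\xi|^4$ are all sound. The paper itself gives no proof of this lemma---it defers to \cite{ValWil17}---but your computation of the block-diagonal principal part and of the symbol $\tfrac{1}{2}|\xi|^{4}X_{i}+\tfrac{3}{2}|\xi|^{2}\xi_{i}(\xi^{k}X_{k})$ agrees exactly with what the paper records in Subsection \ref{Subsection:LS} when verifying the Lopatinskij--Shapiro condition, so your argument is essentially the standard one underlying the cited result.
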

A proof of this result can by found in in \cite{ValWil17}. In order to
discuss the solvability of the \eqref{AKE}, we will need to introduce weighted
Sobolev spaces and the notion of an asymptotically Euclidean
manifolds.

\subsection{The Approximate Killing vector equation on asymptotically Euclidean manifolds}
In this section, we summarise the results of \cite{ValWil17} for the
solvability of the AKE on asymptotically Euclidean manifolds. We make
use of weighted Sobolev spaces to discuss the decay of various tensor
fields on the 3-dimensional manifold $\mcS$.

\subsubsection{Weighted Sobolev spaces and asymptotically Euclidean manifolds}
We begin with the definition of a weighted Sobolev space, $H_{\delta}^{s}$:
\begin{definition}
Given points $p,x \in \mcS$, let $s$ be a non-negative integer and
$\delta \in \mathbb{R}$. The {\em weighted Sobolev space} denoted by
$H_{\delta}^{s}$ consists of of all functions, $u$, such that the {\em
weighted Sobolev norm} is finite
\begin{equation*}
    ||u||_{s,\delta}\equiv\sum_{0\leq |\alpha|\leq s}||D^\alpha u||_{\delta - |\alpha|}<\infty
\end{equation*}
where $\alpha=(\alpha_1, \alpha_2, \alpha_3)$ is a multiindex and the norm in the summand is the weighted $L^2$-norm 
\begin{equation*}
    ||\phi||_\delta \equiv \left(\int_{\mathcal{S}} |\phi|^2\sigma^{-2\delta -3}\mathrm{d}^3 x\right)^{1/2}
\end{equation*}
with $\sigma(x)\equiv (1+d(p,x)^2)^{1/2}$ and $d$ denotes the Riemannian distance on $\mcS$. One says that $u\in H_{\delta}^{\infty}$ if $u\in H_{\delta}^{s}$ for all $s$.
\end{definition}

\begin{remark}
{\em We follow Bartnik's conventions \cite{Bar86} for the weighted Sobolev
spaces and norms. Note also that we are slightly abusing notation
since these norms seem to be dependent on $p$. However, different
choices of $p$ give rise to equivalent weighted Sobolev norms, see
e.g. \cite{Bar86,ChoChr81a}. Thus, we denote these norms with the same
symbol.}
\end{remark}

Using these weighted Sobolev spaces, we are in a position to discuss
the necessary fall-off conditions in order to form an asymptotically
Euclidean manifold. Consider an initial data set $(\mcS, h_{ij},
K_{ij})$ for the Einstein vacuum field equations that has $n$ {\em
asymptotically Euclidean ends}. That is, there exists a compact set
$\mathcal{B}$ such that
\[
    \mcS \setminus \mathcal{B} = \sum_{k=1}^{n}\mcS_{(k)}
\]
where $\mcS_{(k)}$ are open sets diffeomorphic to the complement of a
closed ball in $\mathbb{R}^3$. Each $\mcS_{(k)}$ is called an {\em
asymptotic end}. On each of these ends one can introduce {\em
asymptotically Cartesian coordinates} $x=(x^\alpha)$. The definition
of an asymptotically Euclidean manifold is then defined on these ends.

\begin{definition}\label{asympeuclidmf}
The 3-dimensional manifold $\mcS$ is called {\em asymptotically
Euclidean} if on each asymptotic end one has that
\begin{align*}
    &h_{\alpha\beta}-\delta_{\alpha\beta}\in H_{-\frac{1}{2}}^{\infty},\\
    &K_{\alpha\beta} \in H_{-\frac{3}{2}}^{\infty}.
\end{align*}
\end{definition}

\subsubsection{Green's Formula and the Fredholm alternative}

Green's formula will be fundamental to the choice of boundary operators in constructing the boundary value problem for the \eqref{AKE}. In this section, we outline the basic theory of Green's formula and use this to motivate the Fredholm alternative. The latter will be necessary in proving the main theorem.
\\\\
Green's formula of an elliptic differential operator $\mathscr{A}$ arises when
considering the formal adjoint $\mathscr{A}^*$ of $\mathscr{A}$ \cite{Wlo87, Dai06}. Let
$\Omega$ be a bounded smooth domain in $\mathbb{R}^n$. For all
${\bf u},{\bf v}$ compactly supported in $\Omega$, the {\em formal adjoint} is
defined through
\begin{equation*}
\int_\Omega {\bf v}\cdot\mathscr{A}{\bf u}\mathrm{d}\mu=\int_\Omega {\bf u}\cdot\mathscr{A}^{*}{\bf v}\mathrm{d}\mu.
\end{equation*}
The adjoint is calculated by performing integration by parts. If we
now consider ${\bf u},{\bf v}$ to be not compactly supported on $\Omega$,
performing integration by parts yields boundary terms. The resulting
relation is known as {\it Green's formula} for $\mathscr{A}$
\begin{equation*}
\int_\Omega \left({\bf v}\cdot\mathscr{A}{\bf u}- {\bf u}\cdot\mathscr{A}^{*}{\bf v}\right)\mathrm{d}\mu = \oint_{\partial \Omega} {\bf v}\cdot\mathscr{B}{\bf u} -{\bf u}\cdot\mathscr{T}{\bf v}\mathrm{d}S
\end{equation*}
where $\mathscr{T}$ and $\mathscr{B}$ are {\it differential boundary
operators}. In order to formalise this discussion, we first define a
Dirichlet system.
\begin{definition}
Let $\Gamma$ be a subset of $\partial\Omega$. The boundary value
operators $b_{j}^{\alpha}(x,D)$, $j=1,...,n$ and $\alpha=1,...,N$ is
the number of equations, form a {\em Dirichlet system} on $\Gamma$ if
and only if
\begin{itemize}
\item[i.] The order, $m_{j}^{\alpha}$, of $b_{j}^{\alpha}$ is such that $m_{i}^{\alpha}\neq m_{j}^{\alpha}$ for $i\neq j =1,...,n$,
\item[ii.] The symbol of the operator $\sigma_{j}(x,\vec{\xi})\neq 0$ $\forall$ $x\in\Gamma$ and $\vec{\xi}\neq 0$ and is normal to $\partial \Omega$ at $x$,
\item[iii.] For each $\alpha$, the orders $m_{j}^{\alpha}$ run through all numbers $0,1,...,n-1$ (without loss of generality $m_{j}^{\alpha}=j-1$). The number $n$ is called the {\em order of the Dirichlet system}.
\end{itemize}
\end{definition}
A set of boundary value operators satisfying only points 1 and 2 above
is said to be {\em normal}. Then Green's formula can be expressed as
the following:
\begin{proposition}
Let $\mathscr{A}(x,D)$ be an elliptic operator on $\bar{\Omega}$ and
$b_{j}^{\alpha}(x,D)$, $j=1,...,m$, $\alpha=1,...,N$, be a normal
boundary value system on $\partial\Omega$. Then on $\partial\Omega$
one can find another boundary value system $S_{j}^{\alpha}$,
$j=1,...,m$, $\alpha=1,...,N$, with orders $\mu_{j}^{\alpha}<2m-1$ so
that $\{b_{1}^{1},...,b_{m}^{N},S_{1}^{1},...,S_{m}^{N}\}$ is a
Dirichlet system of order $2mN$ on $\partial\Omega$.\footnote{The
choice of $S_{j}^{\alpha}$ is not unique.} Additionally, one can
construct a further $2mN$ boundary value operators
$B_{j}'^{\alpha},T_{j}^{\alpha}$, $j=1,...,m$ with the properties:
\begin{itemize}
\item[i.] The orders of $B_{j}'^{\alpha}$ and $T_{j}^{\alpha}$ are given by $2m-1-\mu_{j}^{\alpha}$ and $2m-1-m_{j}^{\alpha}$, respectively.
\item[ii.] The set  $\{B_{1}'^{1},...,B_{m}'^{N},T_{1}^{1},...,T_{m}^{N}\}$ is also a Dirichlet system of order $2m$ on $\partial \Omega$.
\item[iii.] We have Green's formula:
\begin{equation}\label{greensformula}
\int_\Omega \left({\bf v}\cdot\mathscr{A}{\bf u}- {\bf u}\cdot\mathscr{A}^{*}{\bf v}\right)\mathrm{d}\mu = \sum_{j=1}^{m}\sum_{\alpha=1}^{N}\oint_{\partial \Omega}\left( S_{j}^{\alpha}{\bf u}\cdot B'^{\alpha}_{j}{\bf v} -b^{\alpha}_{j}{\bf u}\cdot T_{j}^{\alpha}{\bf v}\right)\mathrm{d}S.
\end{equation}
\end{itemize}
\end{proposition}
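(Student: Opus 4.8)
The plan is to obtain the stated elliptic equation by projecting the static KID equation onto the MOTS, then to use MOTS stability to force $N=0$, and finally to read off the residual obstruction as the tangential content of the KID system. First I would specialise Proposition \ref{prop:kid} to the time-symmetric case $K_{ij}=0$: the first KID equation becomes $D_{(i}Y_{j)}=0$, the second collapses to the static Hessian equation $D_iD_jN=N r_{ij}$, and the Hamiltonian constraint \eqref{HamiltonianConstraint} reduces to $r=0$. Writing $\msD N\equiv\rho^iD_iN$, the Gauss formula for a scalar gives $\bh_A{}^i\bh_B{}^j D_iD_jN=\bD_A\bD_BN+\bK_{AB}\msD N$. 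Because the data is time symmetric the outgoing null expansion of $\partial\mcS$ equals the mean curvature $\bh^{AB}\bK_{AB}$, so the MOTS condition is $\bh^{AB}\bK_{AB}=0$, i.e. $\partial\mcS$ is minimal. Taking the tangential trace of the static KID equation then kills the $\msD N$ term on the left and yields $\Delta_{\bh}N=N\,\bh^{AB}r_{AB}$. Since the Weyl tensor vanishes in three dimensions, $r_{ijkl}$ is determined by $r_{ij}$ and $r$; contracting the Gauss equation with the tangential projector and imposing $r=0$ and $\bh^{AB}\bK_{AB}=0$ gives $\bh^{AB}r_{AB}=\tfrac12(\er+\bK^{pq}\bK_{pq})$, producing exactly
\[
\Delta_{\bh}N-\tfrac12(\er+\bK^{pq}\bK_{pq})N=0.
\]

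The heart of the argument, and the step I expect to be hardest, is showing that stability forces $N=0$. The same contracted-Gauss computation gives $r_{ij}\rho^i\rho^j=-\tfrac12(\er+\bK^{pq}\bK_{pq})$, so the MOTS stability operator of \cite{AndMarSim08}, which in time symmetry reduces to the minimal-surface Jacobi operator $L=-\Delta_{\bh}-(\bK^{pq}\bK_{pq}+r_{ij}\rho^i\rho^j)$, can be rewritten as $L=-\Delta_{\bh}+\tfrac12(\er-\bK^{pq}\bK_{pq})$. Hence the operator governing $N$ differs from the stability operator by a manifestly non-negative potential, $-\Delta_{\bh}+\tfrac12(\er+\bK^{pq}\bK_{pq})=L+\bK^{pq}\bK_{pq}$. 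Multiplying the elliptic equation by $N$ and integrating over the closed surface gives $\int_{\partial\mcS}N L N+\int_{\partial\mcS}\bK^{pq}\bK_{pq}\,N^2=0$; stability makes the first integral non-negative and the second is non-negative, so both vanish. For a strictly stable MOTS the principal eigenvalue of $L$ is positive and $N=0$ at once; in the borderline case one uses that a null state of $L$ must be proportional to the sign-definite principal eigenfunction, whence $\int_{\partial\mcS}\bK^{pq}\bK_{pq}\,N^2=0$ forces $N\equiv0$. The delicate points are pinning down the precise time-symmetric reduction of the AndMarSim operator in the paper's sign conventions and handling this marginal case cleanly.

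With $N=0$ on $\partial\mcS$, the boundary value $\Delta_hN=0$ follows from the full trace of the static KID equation: $\Delta_hN=N r=0$ on all of $\mcS$ by the Hamiltonian constraint, and restricting to $\partial\mcS$ and combining with $N|_{\partial\mcS}=0$ gives $\Delta_hN=N=0$ there. This is consistent with the Dirichlet data $f=g=0$ appearing in the Main Result.

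Finally I would extract the invariant from the full tangential projection of the static KID equation, namely $\bD_A\bD_BN+\bK_{AB}\msD N-N\,\bh_A{}^i\bh_B{}^j r_{ij}=0$. Since $N$ vanishes identically on $\partial\mcS$, both $\bD_A\bD_BN$ and the $N r_{AB}$ term drop out, and the tangential KID equation reduces to the pointwise condition $\bK_{AB}\msD N=0$; as $\bK_{AB}$ is trace-free this is its entire trace-free content. Because the integrand of $\omega=\int_{\partial\mcS}\bK_{pq}\bK^{pq}\,|\msD N|^2$ is non-negative, $\omega=0$ holds if and only if $\bK_{AB}\msD N=0$ pointwise, i.e. if and only if the tangential KID equation is satisfied on $\partial\mcS$. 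This identifies $\omega$ as the desired geometric invariant.
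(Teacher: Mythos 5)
There is a fundamental mismatch here: your proposal does not address the statement at all. The proposition you were asked to prove is the general Green's-formula result for an elliptic operator $\mathscr{A}(x,D)$ of order $2m$ equipped with a normal boundary value system $b_j^{\alpha}$: one must (a) complete $\{b_j^{\alpha}\}$ to a Dirichlet system by constructing the complementary operators $S_j^{\alpha}$ of orders $\mu_j^{\alpha}<2m-1$, (b) construct the adjoint boundary operators $B_j'^{\alpha}$ and $T_j^{\alpha}$ with the stated complementary orders $2m-1-\mu_j^{\alpha}$ and $2m-1-m_j^{\alpha}$, and (c) verify the bilinear boundary identity \eqref{greensformula}. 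This is a structural result in the theory of elliptic boundary value problems (the paper itself does not prove it but quotes it from \cite{Wlo87, Dai06}); a proof would proceed by expressing any normal system in terms of the normal derivatives $\partial^k/\partial\rho^k$ via a triangular, invertible coefficient matrix, integrating $\int_{\Omega}\left({\bf v}\cdot\mathscr{A}{\bf u}-{\bf u}\cdot\mathscr{A}^{*}{\bf v}\right)\mathrm{d}\mu$ by parts to obtain a bilinear form in the Cauchy data of ${\bf u}$ and ${\bf v}$ on $\partial\Omega$, and then re-expressing that form in the basis $\{b_j^{\alpha},S_j^{\alpha}\}$ to read off $B_j'^{\alpha}$ and $T_j^{\alpha}$ together with their orders. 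None of this appears in your argument.

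What you have written instead is a proof sketch for a different result in the paper, namely the unnumbered theorem of Section \ref{KIDsolvabilitysec}: the time-symmetric reduction of the KID equations on a MOTS to $\Delta_{\bh}N-\tfrac12(\er+\bK^{pq}\bK_{pq})N=0$, the use of the stability operator $\mathcal{L}$ of \cite{AndMarSim08} to conclude $N=0$, and the identification of the invariant $\omega=\int_{\partial\mathcal{S}}|\bK|^{2}|\msD N|^{2}$. As a proof of that theorem your outline is broadly in line with the paper's own strategy (the paper compares lowest eigenvalues of $\mathcal{K}$ and $\mathcal{L}$ via Rayleigh--Ritz rather than your integral-splitting argument, and it derives the invariant from the normal-tangential component of the second KID equation by a commutation-of-derivatives computation rather than your pointwise trace-free argument), but this is beside the point: it establishes nothing about Dirichlet systems, adjoint boundary operators, or Green's formula, which is what the proposition asserts. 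You need to start over with the correct statement.
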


Thus, the operators arising in Green's formula are the natural
boundary operators to consider. The above discussion generalises to
operators over a manifold.

\medskip
In the sequel, we will need to make use of the Fredholm alternative
for elliptic operators acting between weighted Sobolev spaces
which relies on the asymptotic homogeneity of the approximate Killing
operator. We outline the notion of asymptotic homogeneity here. In
local coordinates on $\mcS$, the \eqref{AKE} can be written as
\begin{equation*}
    \mathscr{L}\bmu \equiv (\boldsymbol{A}^{\alpha\beta\gamma\delta}+\boldsymbol{a}^{\alpha\beta\gamma\delta})\cdot\partial_{\alpha}\partial_{\beta}\partial_{\gamma}\partial_{\delta}\bmu+\bma^{\alpha\beta\gamma}\cdot\partial_{\alpha}\partial_{\beta}\partial_{\gamma}\bmu +\bma^{\alpha\beta}\cdot\partial_{\alpha}\partial_{\beta}\bmu + \bma^\alpha\cdot\partial_{\alpha}\bmu + \bma\cdot\bmu =0,
\end{equation*}
where $\bmu:\mcS \rightarrow \mathbb{R}^4$ is a vector valued function
over $\mcS$, $\bmA^{\alpha\beta\gamma\delta}$ are a constant matrices,
while
$\bma^{\alpha\beta\gamma\delta},\bma^{\alpha\beta\gamma},\bma^{\alpha\beta},\bma^{\alpha}$
and $\bma$ are smooth matrix-valued functions of the coordinates
$(x^\alpha)$. Then $\mathscr{L}$ is {\em asymptotically homogeneous}
if the matrix-valued functions belong to the following weighted
Sobolev spaces
\[
    \bma^{\alpha\beta\gamma\delta}\in H_{\tau}^{\infty},\quad \bma^{\alpha\beta\gamma}\in H_{\tau-1}^{\infty}, \quad \bma^{\alpha\beta}\in H_{\tau-2}^{\infty},\quad
    \bma^{\alpha}\in H_{\tau-3}^{\infty}, \quad
    \bma\in H_{\tau-4}^{\infty},
\]
for $\tau<0$, see \cite{Can81, Loc81}. With this definition, we can
classify the asymptotic homogeneity of the approximate Killing
operator in local coordinates.

\begin{lemma}
If the 3-dimensional manifold $\mcS$ is asymptotically Euclidean as in
Definition \ref{asympeuclidmf} then $\mathscr{L}$ is asymptotically
homogeneous with $\tau=-1/2$.
\end{lemma}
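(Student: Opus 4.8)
The plan is to read off the coefficients of $\mathscr{L}$ directly from the explicit form of $\mathscr{P}\circ\mathscr{P}^{*}$ and to track the decay of each one through the weighted Sobolev calculus. First I would record the elementary rules (see \cite{Bar86}): a coordinate derivative maps $H^{\infty}_{\delta}\to H^{\infty}_{\delta-1}$; the product map satisfies $H^{\infty}_{\delta_1}\cdot H^{\infty}_{\delta_2}\subseteq H^{\infty}_{\delta_1+\delta_2}$ (the Sobolev exponents causing no trouble here since we work with $H^{\infty}$); multiplication by an $H^{\infty}_{0}$ field, in particular by a constant or a bounded smooth field, preserves the weight; and $H^{\infty}_{\delta'}\subseteq H^{\infty}_{\delta}$ whenever $\delta'\le\delta$. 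Since $\sigma$ is bounded on compacta, these weighted norms only detect the behaviour on the asymptotic end, so it suffices to estimate the coefficients there.

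Next I would establish the decay of the geometric building blocks from Definition \ref{asympeuclidmf}. Because $h_{\alpha\beta}-\delta_{\alpha\beta}\in H^{\infty}_{-1/2}$, the Neumann series for the inverse (which converges on the end, where $h-\delta\to 0$) gives $h^{\alpha\beta}-\delta^{\alpha\beta}\in H^{\infty}_{-1/2}$; the Christoffel symbols, being first derivatives of the metric, lie in $H^{\infty}_{-3/2}$; the Ricci tensor and scalar, being second derivatives of the metric plus quadratic Christoffel terms that decay faster, lie in $H^{\infty}_{-5/2}$, whence $D r\in H^{\infty}_{-7/2}$ and $\Delta_{h} r,\, r_{ij}r^{ij}\in H^{\infty}_{-9/2}$. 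From the second fall-off condition $K_{\alpha\beta}\in H^{\infty}_{-3/2}$ one obtains $DK\in H^{\infty}_{-5/2}$.

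Then I would pass through $\mathscr{P}\circ\mathscr{P}^{*}$ term by term, expanding every covariant derivative and Laplacian in coordinates and, for each resulting monomial, reading off the order $k\in\{0,1,2,3,4\}$ of derivative falling on $\bmu=(X,X_i)$ together with the weight of its coefficient. The requirement to be checked is $\bma^{(k)}\in H^{\infty}_{\tau-(4-k)}$ with $\tau=-1/2$, i.e. $H^{\infty}_{-1/2},H^{\infty}_{-3/2},H^{\infty}_{-5/2},H^{\infty}_{-7/2},H^{\infty}_{-9/2}$ for $k=4,3,2,1,0$. The leading fourth-order part is the flat bi-Laplacian $2\Delta\Delta$ with constant matrix $\bmA^{\alpha\beta\gamma\delta}$, and its correction $\bma^{\alpha\beta\gamma\delta}$ arises from $h^{ij}h^{kl}-\delta^{ij}\delta^{kl}\in H^{\infty}_{-1/2}$; the third-order coefficients are produced by Christoffel symbols ($H^{\infty}_{-3/2}$) and, inside $\bar{H}$ and $\bar{F}_i$, by $K$ times the constant-coefficient third-order pieces of $\bar{Q}_{ij}$ ($H^{\infty}_{-3/2}$); the second-order coefficients are controlled by $r^{ij}\in H^{\infty}_{-5/2}$, by second derivatives of Christoffels, and by $DK$ times $\bar{\gamma}_{ij}$ ($H^{\infty}_{-5/2}$), with every $K\cdot K$ contribution decaying strictly faster; and the first- and zeroth-order coefficients are dominated by $Dr\in H^{\infty}_{-7/2}$ and by $\Delta_{h} r,\, r_{ij}r^{ij}\in H^{\infty}_{-9/2}$ respectively. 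In each slot the slowest-decaying contribution lands exactly in the required space, giving asymptotic homogeneity with $\tau=-1/2$.

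The main obstacle is bookkeeping rather than analysis: one must make sure that none of the cross terms hidden inside $\bar{H}$, $\bar{F}_i$, $\bar{\gamma}_{ij}$ and $\bar{Q}_{ij}$ — where factors of $K$ or $DK$ multiply high-order derivatives of $\bmu$ — is mis-slotted, so that a coefficient accompanying a $k$-th order derivative is always weighed against $H^{\infty}_{\tau-(4-k)}$. The structural reason a single value $\tau=-1/2$ works uniformly is precisely that each additional coordinate derivative costs one unit of weight, matching the one-unit gap between the spaces demanded at consecutive derivative orders; once this is recognised, it is enough to identify and verify the worst-decaying term at each order.
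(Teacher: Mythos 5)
Your proposal is correct. Note that the paper itself states this lemma \emph{without} proof ---it is presented as an immediate consequence of Definition \ref{asympeuclidmf} and of the definition of asymptotic homogeneity, with the underlying bookkeeping left implicit (cf.\ \cite{ValWil17})--- so your argument supplies exactly the missing justification, and it is the natural one: the deviation of the fourth-order coefficients $h^{ij}h^{kl}$ from $\delta^{ij}\delta^{kl}$ lies in $H^{\infty}_{-1/2}$, Christoffel symbols and the factors of $K$ multiplying third-order derivatives (inside $\bar{Q}_{ij}$, $\bar{F}_i$ and $D^j\Delta_h F_{ij}$) lie in $H^{\infty}_{-3/2}$, the Ricci terms and $DK$ multiplying second-order derivatives lie in $H^{\infty}_{-5/2}$, and $Dr$, respectively $\Delta_h r$ and $r_{ij}r^{ij}$, land in $H^{\infty}_{-7/2}$ and $H^{\infty}_{-9/2}$, which is precisely the ladder $H^{\infty}_{\tau-(4-k)}$ with $\tau=-1/2$. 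The only slight imprecision is describing the constant leading part as the flat bi-Laplacian: this is true for the lapse component, but for the shift components the constant-coefficient principal part is $\partial^{j}\Delta_{\delta}\partial_{(i}X_{j)}+\partial_{i}\Delta_{\delta}\partial^{k}X_{k}$; this is harmless, since the definition of asymptotic homogeneity only requires the leading matrices $\bmA^{\alpha\beta\gamma\delta}$ to be constant, and your identification of the worst-decaying (borderline) terms at fourth and third order ---which is what forces $\tau=-1/2$ rather than anything better--- is exactly right.
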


We will make use of the following form of the Fredholm
alternative, as proved in \cite{Wlo87} (see also \cite{Can81}):

\begin{proposition}\label{prop:fredholmalternative}
 Let $\mathscr{L}$ be a fourth order asymptotically homogeneous elliptic
operator over a smooth domain $\Omega$ with smooth coefficients and
let $b_i^{\alpha}$, $i=1,2$, $\alpha=1,...,N$, be smooth boundary operators on $\partial
\Omega$. Given some non-negative integer $\delta$, the boundary value
problem
\begin{align*}
 \begin{cases}
\mathscr{L}{\bmu}={\bmf} \qquad &\text{ in } \Omega\\
b_1^{1}{\bmu}={\bmg}^{1}_1\\
\vdots \qquad &\text{ on } \partial \Omega\\
 b_2^{N}{\bmu}={\bmg}^{N}_2
 \end{cases}
\end{align*}
where ${\bmf, \bmg_1^{1},...,\bmg_2^{N}}\in H^{0}_{\delta -4}$ possesses at least one solution $\bmu\in H^4_\delta$ if and only if 
\[
\int_{\Omega} {\bmf}\cdot {\bmv}\mathrm{d}\mu + \sum_{j=1}^{2}\sum_{\alpha=1}^{N}\int_{\partial\Omega}{\bmg_{j}}^{\alpha}\cdot{T_{j}^{\alpha}\bmv}\mathrm{d}\sigma=0 \qquad \forall \bmv\in N^*
\]
where $T_i$ are the boundary operators coming from Green's
formula and $N^*$ is the space of solutions to the homogeneous adjoint
boundary value problem
\begin{align*}
 \begin{cases}
\mathscr{L}^*{\bmv}=0 \qquad &\text{ in } \Omega\\
B'^{1}_1{\bmv}=0\\
\vdots \qquad &\text{ on } \partial \Omega\\
 B'^{N}_2{\bmv}=0
 \end{cases}
\end{align*}
for $\bmv\in H^0_{1-\delta}$. $\mathscr{L}^*$ denotes the formal adjoint of $\mathscr{L}$.
\end{proposition}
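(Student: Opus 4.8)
The statement is a boundary-value form of the Fredholm alternative, and my plan is to deduce it from two standard ingredients already present in the cited literature: the Agmon--Douglis--Nirenberg boundary theory together with Green's formula (as developed in \cite{Wlo87}) near the compact boundary, and the weighted elliptic isomorphism theory for asymptotically homogeneous operators (as in \cite{Can81, Loc81}) on the asymptotic end. The key structural observation is that $\partial\Omega$ is compact while all the noncompactness of $\Omega$ lives in the single asymptotically Euclidean end, so the two theories act on complementary regions and can be glued with a partition of unity.

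First I would show that the combined operator
\[
\mathcal{A} \equiv (\mathscr{L}, b_1^1, \ldots, b_2^N) : H^4_\delta(\Omega) \longrightarrow H^0_{\delta-4}(\Omega) \times \prod_{j,\alpha} H^{0}_{\delta-4}(\partial\Omega)
\]
is Fredholm. Split $\Omega$ into a compact collar $\mcK$ containing $\partial\Omega$ and the asymptotic end. On $\mcK$ the ADN boundary estimate holds because the pair $(\mathscr{L}, \{b_j^\alpha\})$ satisfies the Lopatinskij--Shapiro (complementing) condition, as verified for the natural operators arising from Green's formula; on the end, $\mathscr{L}$ is asymptotically homogeneous with $\tau=-1/2$ by the preceding lemma, so the Lockhart--McOwen/Cantor weighted estimate holds whenever $\delta$ avoids the exceptional (indicial) weights of the limiting constant-coefficient operator. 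Patching the two parametrices yields a global a priori estimate
\[
\|\bmu\|_{4,\delta} \le C\Big(\|\mathscr{L}\bmu\|_{0,\delta-4} + \sum_{j,\alpha}\|b_j^\alpha\bmu\|_{\partial\Omega} + \|\bmu\|_{L^2(\mcK')}\Big),
\]
with $\mcK'$ compact, and the compact embedding $H^4_\delta(\mcK')\hookrightarrow L^2(\mcK')$ upgrades this to finite-dimensional kernel and closed range for $\mathcal{A}$; applying the same construction to the formal adjoint gives a finite-dimensional cokernel.

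With $\mathcal{A}$ Fredholm its range is closed and equals the annihilator of the cokernel, so the decisive point is to identify that annihilator with orthogonality against $N^*$. Here Green's formula \eqref{greensformula} does the work: for $\bmv$ with $\mathscr{L}^*\bmv=0$ and $B'^\alpha_j\bmv=0$ the $S_j^\alpha$ boundary terms drop out and \eqref{greensformula} collapses to $\int_\Omega \bmv\cdot\mathscr{L}\bmu = -\sum_{j,\alpha}\oint_{\partial\Omega} b_j^\alpha\bmu\cdot T_j^\alpha\bmv\,\mathrm{d}S$. This exhibits $(\mathscr{L}^*, \{B'^\alpha_j\})$ as the formal adjoint of $\mathcal{A}$ in the unweighted $L^2$ pairing on $\Omega$ with the induced pairing on $\partial\Omega$; that pairing is the correct duality, since under Bartnik's convention $H^0_{\delta-4}$ is dual to $H^0_{-(\delta-4)-3}=H^0_{1-\delta}$, which is exactly the space in which $N^*$ is sought. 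The ``only if'' direction is then immediate: if $\bmu\in H^4_\delta$ solves the problem, substituting $\mathscr{L}\bmu=\bmf$ and $b_j^\alpha\bmu=\bmg_j^\alpha$ into the collapsed formula gives $\int_\Omega\bmf\cdot\bmv + \sum_{j,\alpha}\oint_{\partial\Omega}\bmg_j^\alpha\cdot T_j^\alpha\bmv=0$ for all $\bmv\in N^*$.

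For the ``if'' direction, data $(\bmf,\{\bmg_j^\alpha\})$ satisfying the orthogonality relation against every $\bmv\in N^*$ lies in the annihilator of the cokernel of $\mathcal{A}$, so closed range produces $\bmu\in H^4_\delta$ with $\mathcal{A}\bmu=(\bmf,\{\bmg_j^\alpha\})$, i.e. a solution of the boundary value problem. The main obstacle is the Fredholm step itself: one must confirm that the bounded-domain ADN/boundary theory and the weighted isomorphism theory at the end genuinely assemble into a single Fredholm operator, which requires simultaneously the Lopatinskij--Shapiro condition at $\partial\Omega$ and a weight $\delta$ disjoint from the exceptional set of the indicial operator at infinity. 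Checking that the relevant $\delta$ is non-exceptional, and tracking the precise trace spaces on the compact boundary $\partial\Omega$ (where the weight is immaterial and the weighted norms reduce to ordinary Sobolev norms), are the technical points demanding care; the remainder is bookkeeping with Green's formula and the duality of the weighted spaces.
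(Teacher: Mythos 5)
The paper never proves Proposition \ref{prop:fredholmalternative}: it is imported as a known result, with the proof attributed to Wloka \cite{Wlo87} (see also Cantor \cite{Can81}), and the authors' actual work consists of verifying its hypotheses elsewhere (asymptotic homogeneity of $\mathscr{P}\circ\mathscr{P}^*$, and the Lopatinskij--Shapiro condition for the chosen boundary operators in Lemma \ref{lslemma} and Corollary \ref{cor:akefredholm}). Your proposal therefore does something the paper does not attempt: it reconstructs the proof, and it does so along the standard lines of the cited references --- ADN estimates with the Lopatinskij--Shapiro condition near the compact boundary, Cantor/Lockhart weighted estimates on the asymptotically Euclidean end, parametrix patching to obtain the Fredholm property, and Green's formula plus Bartnik duality to identify the cokernel. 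Your bookkeeping is correct where it matters: the duality $-(\delta-4)-3=1-\delta$ matches the space $H^0_{1-\delta}$ in which $N^*$ is sought, and the sign you extract from Green's formula reproduces the orthogonality condition exactly as stated. Two steps, however, need more than the ``bookkeeping'' you allot them: (i) identifying the annihilator of the range with $N^*$ is not purely formal --- a dual element annihilating the range is a priori only a distributional pair, and one needs elliptic regularity for the adjoint problem to show it is a genuine solution $\bmv$ of $\mathscr{L}^*\bmv=0$, $B'^{\alpha}_j\bmv=0$ whose boundary component is realised by $T_j^\alpha\bmv$; (ii) Green's formula \eqref{greensformula} is stated for bounded domains, so applying it on the noncompact $\Omega$ requires showing the flux through a large sphere vanishes for $\bmu\in H^4_\delta$, $\bmv\in H^0_{1-\delta}$ --- exactly the kind of decay argument the paper performs separately (for its own boundary identity) in Lemma \ref{vanishingbdrylemma}. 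What your route buys, if these points are completed, is a self-contained proof; what the paper's route buys is brevity, treating the proposition as a black box whose hypotheses are the real objects of verification.
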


\subsubsection{Existence of solutions to the AKE on asymptotically Euclidean manifolds}
The notion of approximate Killing vectors and approximate symmetries
was introduced by Dain \cite{Dai04c}. For time symmetric initial data
it was shown that:
\begin{itemize}
    \item[a.] every Killing vector is also an approximate Killing vector;
    \item[b.] for generic initial data that admits no Killing vector (i.e. no solution to the KID equations) there always exists an approximate Killing vector;
    \item[c.] every approximate Killing vector can be uniquely associated with a Killing vector in flat spacetime.
\end{itemize}

An invariant denoted by $\lambda_{(k)}$ was also constructed on each
asymptotic end such that $\lambda_{(k)}=0$ if and only if the initial
data is stationary in the sense of the following definition:

\begin{definition}\label{defstat}
An asymptotically Euclidean initial data set $(\mcS, h_{ij}, K_{ij})$
is called {\em stationary} if there exists non-trivial $(Y,Y_i)\in
H_{\frac{1}{2}}^2$ such that
\[
    \mathscr{P}^{*}
\begin{pmatrix}
Y\\
Y_{i}
\end{pmatrix}
=0.
\]
Moreover, if the initial data is also time symmetric, i.e. $K_{ij}=0$
then, if the above condition holds, the initial data is called {\em
static}.\footnote{Note that this condition is equivalent to the KID
equations due to the fall off on $(Y,Y_i)$. See Remark (6) in
\cite{ValWil17}.}
\end{definition}

The above results were extended in \cite{ValWil17} to the non-time
symmetric setting. We state the main theorem (Theorem 1) of this work
here, without proof:

\begin{theorem}
Let $(\mathcal{S},h_{ij},K_{ij})$ be a complete, smooth
asymptotically Euclidean initial data set for the Einstein vacuum
field equations with $n$ asymptotic ends. Then there exists a 
solution $(X,X^i)$ to the AKE,
\[
\mathscr{P}\circ \mathscr{P}^* 
\left(
\begin{array}{c}
X \\
X^i
\end{array}
\right) =0,
\]
such that at each asymptotic end one has the asymptotic
behaviour
\begin{eqnarray*}
 && X_{(k)} = \lambda_{(k)} |x| + \vartheta_{(k)}, \qquad \vartheta_{(k)} \in    H^\infty_{\frac{1}{2}},\\
&& X^i_{(k)} \in H^\infty_{\frac{1}{2}},
 \end{eqnarray*}
where $\lambda_{(k)}$, $k=1,\ldots,n$, are constants and $\lambda_{(k)}=0$ for some $k$
if and only if $(\mathcal{S},h_{ij},K_{ij})$ is stationary in the
sense of Definition \ref{defstat}. Moreover, the solution is unique up to constant rescaling.
\end{theorem}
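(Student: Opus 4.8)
The plan is to recast the homogeneous problem $\mathscr{L}u=0$, with $\mathscr{L}=\mathscr{P}\circ\mathscr{P}^{*}$, as an inhomogeneous problem with a decaying right-hand side, and then run the Fredholm alternative of Proposition \ref{prop:fredholmalternative}. The starting observation is that on each end the radial function $|x|$ is annihilated by the principal (flat, biharmonic) part $2\Delta\Delta$ of $\mathscr{L}$, so it represents the borderline admissible growth. Fixing cut-off functions $\chi_{(k)}$ supported near the $k$-th end, I would set, for constants $\lambda=(\lambda_{(k)})$, the background $u_{\lambda}=(\sum_{k}\lambda_{(k)}\chi_{(k)}|x|,\,0)$ and seek $u=u_{\lambda}+v$ with $v\in H^{\infty}_{1/2}$. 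Since $|x|$ kills the principal part and $h_{\alpha\beta}-\delta_{\alpha\beta}\in H^{\infty}_{-1/2}$, the curvature and lower-order contributions force $f_{\lambda}:=-\mathscr{L}u_{\lambda}$ to decay fast enough to lie in $H^{\infty}_{1/2-4}$, so that the reduced equation $\mathscr{L}v=f_{\lambda}$ carries an admissible source.

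Next I would feed $\mathscr{L}v=f_{\lambda}$ into Proposition \ref{prop:fredholmalternative} with weight $\delta=1/2$: a solution $v\in H^{4}_{1/2}$ exists precisely when $\int_{\mathcal{S}}f_{\lambda}\cdot w=0$ for every $w$ in the cokernel $N^{*}=\{w\in H^{0}_{1/2}:\mathscr{L}^{*}w=0\}$. Two structural facts then do the work. First, $\mathscr{L}$ is self-adjoint, so $\mathscr{L}^{*}=\mathscr{L}$ and the dual weight $1-\delta$ coincides with $\delta=1/2$. Second, for $w\in H^{\infty}_{1/2}$ (obtained from $w\in H^{0}_{1/2}$ by elliptic regularity) the Green formula \eqref{greensformula} has no boundary contribution at infinity because $w$ and $\mathscr{P}^{*}w$ decay, whence $0=\langle w,\mathscr{P}\mathscr{P}^{*}w\rangle=\|\mathscr{P}^{*}w\|_{L^{2}}^{2}$ and therefore $\mathscr{P}^{*}w=0$. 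Thus $N^{*}$ is exactly the space of decaying KIDs, which by Definition \ref{defstat} is nontrivial if and only if the data is stationary.

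The invariant is then read off from the solvability pairing. Writing $\int_{\mathcal{S}}f_{\lambda}\cdot w=-\int_{\mathcal{S}}(\mathscr{L}u_{\lambda})\cdot w$ and integrating by parts against the global solution $w$ of $\mathscr{L}w=0$, the bulk term drops and one is left with boundary integrals over large spheres. Using the flat model one evaluates these: the lapse part of $u_{\lambda}$ grows like $\lambda_{(k)}|x|$ while a KID $w$ has lapse tending to a constant $\mu_{(k)}(w)$ at each end, and the surviving term is a finite nonzero multiple of $\sum_{k}\lambda_{(k)}\mu_{(k)}(w)$. For non-stationary data $N^{*}=\{0\}$, the condition is vacuous, $\mathscr{L}v=f_{\lambda}$ is solvable for every $\lambda$, and choosing $\lambda\neq0$ yields a genuinely growing solution; any two solutions with the same $\lambda$ differ by a decaying element of $\ker\mathscr{L}$, i.e. a KID, hence coincide, giving uniqueness up to the overall constant $\lambda$. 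When the data is stationary a nontrivial KID $w_{0}$ with $\mu_{(k)}(w_{0})\neq0$ forces the constraint $\sum_{k}\lambda_{(k)}\mu_{(k)}(w_{0})=0$; since the growing solution is unique up to an overall constant, this can only be met with the relevant $\lambda_{(k)}$ vanishing, so $u\in H^{\infty}_{1/2}$. This establishes $\lambda_{(k)}=0$ if and only if the data is stationary.

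Finally, elliptic regularity for the smooth-coefficient operator $\mathscr{L}$ bootstraps $v\in H^{4}_{1/2}$ to $v\in H^{\infty}_{1/2}$, and an asymptotic-expansion theorem for asymptotically homogeneous operators \cite{Bar86} yields $X=\lambda_{(k)}|x|+\vartheta_{(k)}$ with $\vartheta_{(k)}\in H^{\infty}_{1/2}$ and $X^{i}\in H^{\infty}_{1/2}$ (the shift coming entirely from $v$). I expect the main obstacle to be twofold: pinning down the weighted-Fredholm bookkeeping, i.e. verifying via \cite{Loc81} that $\delta=1/2$ is non-exceptional while $|x|$ sits at the exceptional weight $1$ (which requires the indicial roots of the order-four model acting on the lapse--shift pair and the resulting jump of $\ker\mathscr{L}$), and establishing the nondegeneracy of the boundary pairing $\sum_{k}\lambda_{(k)}\mu_{(k)}(w)$, which rests on the explicit Green formula for $\mathscr{P}\circ\mathscr{P}^{*}$ and the flat-space asymptotics of KIDs.
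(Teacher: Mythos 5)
This theorem is quoted in the paper from \cite{ValWil17} and explicitly stated \emph{without proof}, so the only meaningful comparison is with the argument of that reference, which your proposal reproduces in its essentials: reduction to an inhomogeneous problem via a cut-off multiple of $|x|$ (annihilated by the flat biharmonic principal part), the Fredholm alternative in weighted Sobolev spaces at weight $\delta=\tfrac{1}{2}$, identification of the cokernel with decaying KIDs through self-adjointness and the vanishing of boundary terms at infinity, and evaluation of the solvability obstruction as an asymptotic pairing proportional to $\lambda$ times the asymptotic lapse constant of the KID. The two obstacles you flag yourself --- the non-exceptionality of the weight versus the exceptional growth rate of $|x|$, and the nondegeneracy of the boundary pairing (which in the stationary case requires knowing that a nontrivial KID with $H^{2}_{1/2}$ decay has nonvanishing asymptotic lapse, a fact resting on the Beig--Chru\'{s}ciel analysis cited as \cite{BeiChr97a}) --- are exactly the ingredients supplied in that reference, so your outline is essentially the same proof rather than an alternative route.
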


\begin{remark}
{\em In the following work, we restrict to one asymptotic end with one inner boundary. This can be extended to one asymptotic end with multiple inner boundaries corresponding to spacetimes with multiple black holes.}
\end{remark}

The goal of the present work is to extend this theorem to include
inner boundary conditions on $\mcS$. To do this, it is important that
the constructed boundary value problem is solvable and that the
boundary 2-dimensional surface represents that of a black hole
boundary. In this context the natural surface to consider is a
marginally outer trapped surface (MOTS), sometimes referred to as an
{\em apparent horizon}. The remainder of this section will tackle
constructing a solvable boundary value problem. In particular, we will
compute Green's formula in order to obtain natural
boundary operators and verify that the Lopatinskij-Shapiro condition
holds for the AKE with these natural boundary operators. Thereby
showing that this boundary value problem is Fredholm.

\begin{remark}
  {\em As a general principle, for an elliptic boundary value
problem, the derivative order of the boundary condition has to
be lower than the operator and the number of boundary conditions must
be half the order of the equation. The AKE is comprised of 4 fourth-order equations. Thus, we must have 8 boundary conditions.}
\end{remark}

\subsection{Green's formula for the AKE}
\medskip
In this section, we derive Green's formula for the AKE using equation \eqref{greensformula}. The derivation is essentially
integration by parts and thus, due to the size of some of the terms,
the calculations can be found in full in Appendix \ref{gfappendix}.  

\medskip
By inspecting the calculation in Appendix A, we can construct Green's
formula for the (AKE). We work here using the components of the
equation instead of vectorial quantities as in equation 
\eqref{greensformula}. Thus, the obtained Dirichlet systems will have
in total 16 elements since the \eqref{AKE} has four components.

\begin{lemma}\label{lemgfAKE}
The Green formula for the AKE can be written as
\begin{align*}
\int_{\mathcal{S}}\mathscr{P} \circ \mathscr{P}^{*}
\begin{pmatrix}
X\\
X_{i}
\end{pmatrix}
\cdot
\begin{pmatrix}
Z\\
Z_{i}
\end{pmatrix}
&-
\int_{\mathcal{S}}\mathscr{P} \circ \mathscr{P}^{*}
\begin{pmatrix}
Z\\
Z_{i}
\end{pmatrix}
\cdot
\begin{pmatrix}
X\\
X_{i}
\end{pmatrix}\\
&=\\
\sum_{j=1}^{2}\sum_{\alpha=1}^{4} \left(\oint_{\partial \mcS}S^{\alpha}_{j}
(X,X_{i})
\cdot B^{'\alpha}_{j}(Z,Z_i)\right.&-\left.\oint_{\partial \mcS} b^{\alpha}_{j}(X,X_i)\cdot T^{\alpha}_{j}(Z,Z_i)\right).
\end{align*}
 Thus, one has the Dirichlet systems $\{b_1^{1},...,b_2^{4},S_1^{1},...,S_2^{4}\}$ and $\{B^{'1}_1,...,B^{'4}_2,T_1^{1},...,T_2^{4}\}$.
\end{lemma}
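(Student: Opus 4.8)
The plan is to read Green's formula directly off a term-by-term integration by parts of the pairing $\mathscr{P}\circ\mathscr{P}^{*}(X,X_{i})\cdot(Z,Z_{i})$, exploiting the formal self-adjointness of $\mathscr{P}\circ\mathscr{P}^{*}$ established in the preceding lemma. The structural content is almost immediate: since $\mathscr{P}\circ\mathscr{P}^{*}$ is its own formal adjoint, the abstract Green's formula of the earlier proposition,
\[
\int_{\Omega}\left({\bf v}\cdot\mathscr{A}{\bf u}-{\bf u}\cdot\mathscr{A}^{*}{\bf v}\right)\mathrm{d}\mu=\text{boundary terms},
\]
specialises with $\mathscr{A}=\mathscr{A}^{*}=\mathscr{P}\circ\mathscr{P}^{*}$, ${\bf u}=(X,X_{i})$, ${\bf v}=(Z,Z_{i})$ to exactly the antisymmetric combination of bulk integrals on the left-hand side of the stated identity. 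The abstract proposition therefore already guarantees that this difference equals a sum of boundary pairings of the canonical Dirichlet-system form; the work of the lemma is to produce the concrete boundary operators $b_{j}^{\alpha}$, $S_{j}^{\alpha}$, $B^{'\alpha}_{j}$, $T_{j}^{\alpha}$ by carrying out the integrations by parts explicitly.

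First I would expand $\mathscr{P}\circ\mathscr{P}^{*}$ into the fully written-out component form of the definition and integrate each term by parts against $(Z,Z_{i})$. The leading fourth-order pieces --- $2\Delta_{h}\Delta_{h}X$ in the scalar slot and $D^{j}\Delta_{h}D_{(i}X_{j)}+D_{i}\Delta_{h}D^{k}X_{k}$ in the vector slot --- require transferring all four derivatives onto the second argument; each integration by parts converts a total divergence into a boundary integral $\oint_{\partial\mathcal{S}}\rho_{i}(\cdots)\,\mathrm{d}S$ via the divergence theorem, with $\rho^{i}$ the unit normal to $\partial\mathcal{S}$. The third-, second-, first- and zeroth-order terms, including the couplings carried by $H_{ij}$, $F_{ij}$, $\bar H$ and $\bar F_{i}$, are treated identically. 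Subtracting the same expansion with $X\leftrightarrow Z$ then cancels every bulk contribution by self-adjointness, leaving only the boundary integrals.

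The final step is to organise these boundary integrals into the bilinear structure $S_{j}^{\alpha}(X,X_{i})\cdot B^{'\alpha}_{j}(Z,Z_{i})-b_{j}^{\alpha}(X,X_{i})\cdot T_{j}^{\alpha}(Z,Z_{i})$. Reading off which differential operators act on the first argument fixes the families $\{b_{j}^{\alpha},S_{j}^{\alpha}\}$, and those acting on the second argument fix $\{B^{'\alpha}_{j},T_{j}^{\alpha}\}$. Because the AKE is fourth order, the Dirichlet system attached to each of the four components involves normal-derivative orders $0,1,2,3$, with the prescribed data $b_{j}^{\alpha}$ supplying two orders per component and the complementary $S_{j}^{\alpha}$ the remaining two --- giving eight operators in each family and the stated sixteen elements in total. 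To complete the argument one verifies the Dirichlet-system axioms: distinct orders within each family, symbol non-vanishing and normal to $\partial\mathcal{S}$, and orders exhausting $\{0,1,2,3\}$; these follow by tracking the principal symbol through the integrations by parts, in which $\rho^{i}$ singles out the transversal derivatives.

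The main obstacle is purely the computational scale rather than any conceptual difficulty: $\mathscr{P}\circ\mathscr{P}^{*}$ is a four-component, fourth-order system carrying a large number of lower-order and scalar--vector cross terms, so tracking every boundary contribution is lengthy --- which is exactly why the explicit integrations by parts are deferred to Appendix \ref{gfappendix}. A secondary subtlety is that the complementary operators $S_{j}^{\alpha}$ (and hence $B^{'\alpha}_{j}$, $T_{j}^{\alpha}$) are not unique; one must choose them so that the two bracketed families genuinely form Dirichlet systems and not merely normal systems, as already flagged in the general proposition.
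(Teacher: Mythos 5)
Your proposal is correct and follows essentially the same route as the paper: the paper's Appendix \ref{gfappendix} likewise exploits the formal self-adjointness of $\mathscr{P}\circ\mathscr{P}^{*}$ to perform the integration by parts only once (so the bulk terms cancel in the antisymmetric combination), carries out the term-by-term computation separately for the lapse and shift components, and then reads off the boundary operator families $\{b_{j}^{\alpha},S_{j}^{\alpha}\}$ and $\{B^{'\alpha}_{j},T_{j}^{\alpha}\}$ from the resulting boundary integrals, noting $b_{j}^{\alpha}=B^{'\alpha}_{j}$ for self-adjointness of the boundary value problem. Your remarks on the non-uniqueness of the complementary operators and on the computation being deferred to the appendix match the paper's treatment as well.
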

A useful property that elliptic boundary value problems can have is 
self-adjointness. That is, if the $b_j$ appearing in the formula in
Lemma \ref{lemgfAKE} are the operators appearing in a boundary value
problem then the $B^{'}_j$ are the {\em adjoint boundary
operators}. If $b_j^{\alpha}=B^{'\alpha}_j$ then the boundary value problem
$(L,b_1^{1},...,b_2^{4})$ is {\em self-adjoint} for $L=L^*$. Since we have
that the AKE is self adjoint, it will prove incredibly useful to
consider a self-adjoint boundary value problem ---particularly, when
employing the \emph{Fredholm alternative}.

\begin{corollary}\label{cornatbo}
One can choose the boundary operators $b_j^{\alpha}$ to be  
\begin{equation}\label{nbc}
\left\{I,\Delta_{h}X,\displaystyle\pdv{\rho}X_{i}\right\}.
\end{equation}
where $I$ is the identity operator acting on $(X,X_i)$.
This choice yields a self-adjoint boundary value problem for the AKE.
\end{corollary}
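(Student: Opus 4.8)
The plan is to verify the self-adjointness criterion stated just after Lemma~\ref{lemgfAKE}: it suffices to show that the chosen operators $b_j^{\alpha}$ coincide, as an unordered set, with the adjoint boundary operators $B_j^{\prime\alpha}$ produced by the Green formula. Since $\mathscr{P}\circ\mathscr{P}^{*}$ is formally self-adjoint, the left-hand side of the identity in Lemma~\ref{lemgfAKE} is antisymmetric under $(X,X_i)\leftrightarrow(Z,Z_i)$, so the whole task reduces to exhibiting a normal system of the correct size — eight operators, as noted in the preceding Remark — whose induced $B_j^{\prime\alpha}$ reproduce the same set. First I would check that \eqref{nbc} is normal: unpacking $I$ into its action on the two blocks, the eight operators are $X$ and $X_i$ (order $0$), $\Delta_{h}X$ (order $2$), and $\partial_{\rho}X_i$ (order $1$); within each component block the orders are distinct and the relevant boundary symbols are non-zero and normal to $\partial\mcS$, exactly the hypotheses needed to invoke the Green-formula Proposition.

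Next I would read off the principal boundary terms from the computation in Appendix~\ref{gfappendix}. For the scalar block the leading part is $2\Delta_{h}\Delta_{h}X$, and integrating by parts four times reproduces the classical biharmonic Green identity, whose boundary integrand involves precisely $X$, $\partial_{\rho}X$, $\Delta_{h}X$ and $\partial_{\rho}\Delta_{h}X$. Grouping these shows that the Navier data $\{X,\Delta_{h}X\}$ (orders $0$ and $2$) pairs with the conormal data $\{\partial_{\rho}\Delta_{h}Z,\partial_{\rho}Z\}$ carried by the test field, so that the operators dual to $\{I,\Delta_{h}\}$ are again $\{\Delta_{h},I\}$; as unordered sets these agree, which is the self-adjoint pairing for the lapse component.

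For the vector block I would proceed analogously, noting that its principal symbol — computed from \eqref{aKID} in the time-symmetric case, and from the leading part of the general operator otherwise — is fourth order and elliptic, though not merely a multiple of the identity because of a longitudinal contribution. For such a formally self-adjoint fourth-order block the Dirichlet (``clamped'') data $\{X_i,\partial_{\rho}X_i\}$ (orders $0$ and $1$) is the natural self-adjoint choice, its dual operators reproducing $\{I,\partial_{\rho}\}$ up to sign. Combining the two blocks gives $b_j^{\alpha}=B_j^{\prime\alpha}$, so by the criterion recalled above the boundary value problem $(\mathscr{P}\circ\mathscr{P}^{*},\{I,\Delta_{h}X,\partial_{\rho}X_i\})$ is self-adjoint.

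The main obstacle is controlling the many lower-order and cross terms in Appendix~\ref{gfappendix} — in particular those coupling the scalar and vector blocks through $H_{ij}$, $F_{ij}$ and the curvature — and confirming that they introduce no boundary operator of order exceeding the admissible $2m-1=3$ and do not break the matching $b_j^{\alpha}=B_j^{\prime\alpha}$. The decisive structural fact that makes this tractable is the antisymmetry of the boundary form forced by $L=L^{*}$: it then suffices to check that every surviving boundary term contains at least one factor annihilated by the homogeneous conditions $X=\Delta_{h}X=X_i=\partial_{\rho}X_i=0$, so that the full boundary integrand vanishes on the common kernel of the chosen operators. This bookkeeping, rather than the principal-part analysis, is where the real effort lies.
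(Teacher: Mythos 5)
Your proposal is correct and takes essentially the same route as the paper: both arguments rest on the explicit Green formula computed term by term in Appendix \ref{gfappendix}, together with the observation that every boundary integrand contains a factor annihilated by the homogeneous data $X=\Delta_{h}X=X_{i}=\partial_{\rho}X_{i}=0$ imposed on both fields, which yields $b_j^{\alpha}=B_j^{\prime\alpha}$ and hence self-adjointness. Your preliminary principal-part discussion (Navier data for the biharmonic lapse block, clamped data for the shift block) is a sensible organizational warm-up rather than a different method, since --- as you yourself note --- the substance of the proof is exactly the term-by-term bookkeeping that the paper carries out in the appendix.
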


\begin{remark}
{\em The final operator in this set is found by decomposing terms of the
form $D_iX_j$ into tangential and normal components i.e.
\begin{equation*}
D_{i}X_{j} = h_{i}^{k}D_{k}X_{j}=(\bh_{i}^{k}+\rho_{i}\rho^{k})D_{k}X_{j}=\bh_{i}^{k}D_{k}X_{j}+\rho_{i}\rho^{k}D_{k}X_{j}.
\end{equation*}
The tangential term is determined by $X_i$, leaving only the normal
derivative to be determined. We also note the change in direction of
the unit normal. The reason for this choice is so that the normal
vector is normal to what will become a MOTS.}
\end{remark}

Next, we check that these boundary operators are compatible with the
approximate KID equation. To do this, we make use of the
Lopatinskij-Shapiro condition.

\subsection{Verifying the Lopatinskij-Shapiro condition}
\label{Subsection:LS}
The Lopatinskij-Shapiro (LS) condition allows us to establish the
compatibility of an elliptic operator $\mathscr{L}$, with some boundary
operators $\mathscr{B}$. That is, if $(\mathscr{L},\mathscr{B})$ satisfy the LS condition then
$(\mathscr{L},\mathscr{B})$ is Fredholm i.e. its kernel and cokernel are finite
dimensional. We will give a brief overview of the LS condition and
then prove that it holds for the AKE with the boundary operators
derived in corollary \ref{cornatbo}. For more details see
\cite{Wlo87}.

\medskip
Let $u^A$, $A=1,...,N$ be a collection of fields on a subset
$\Omega\subseteq \mathbb{R}^n$ with coordinates $x=(x^\alpha)$ and
suppose we have $N$ equations of at most order $l$. We consider
operators of the form
\[
(\mathscr{L}u)_i = \sum_{0\leq|\gamma|\leq l} L^\gamma_{iB}(x^{\alpha},u)\partial_\gamma u^{B}
\]
where $i=1,...,n$ are equation indices, $\gamma$ is a multiindex. We
complement $\mathscr{L}$ with boundary operators on $\partial\Omega$ of the form
\[
    (\mathcal{B}u)_j=\sum_{0\leq|\gamma|\leq k} B^\gamma_{jB}(x^{\alpha},u)\partial_\gamma u^{B}
\]
with $j=1,...,m$ so that there are $m$ boundary conditions.

\medskip
In order
to state the LS condition, we need the definition of the principal
part of a differential operator. Recall that to obtain the {\em principal part} of an operator
$\mathscr{L}$ we consider only the highest order derivative terms in
the operator ---namely
\[
    (\mathscr{L}^H u)_i = \sum_{|\gamma| = l} L^\gamma_{iB}(x^{\alpha},u)\mathcal{D}_\gamma u^{B} \equiv A^{\gamma_1...\gamma_l}_{iB}\partial_{\gamma_1}...\partial_{\gamma_l}u^B,
  \]
  where the Einstein summation convention is used in the second
  equality. In particular, the principal part of the AKE is 
\[
    (\mathscr{P} \circ \mathscr{P}^{*})^H
\begin{pmatrix}
X\\
X_{i}
\end{pmatrix}
=
\begin{pmatrix}
2\Delta_{\delta}\Delta_{\delta}X \\
\partial^{j}\Delta_{\delta}\partial_{(i}X_{j)}+\partial_{i}\Delta_{\delta}\partial^{k}X_{k}
\end{pmatrix}.
\]
An important detail which we will exploit in the following lemma is that the components of the principal part of the AKE operator decouple from one another and can thus be considered separately.
We now are in a position to state the LS condition 
\begin{condition}
{\em(Lopatinskij-Shapiro (LS)).} Focus only on the principal parts of
$\mathscr{L}$ and $\mathscr{B}$. Let $x_*\in\partial\Omega$ and pick
$x^1=\rho$ such that $\rho^i$ is the outward pointing normal to
$\partial\Omega$. Consider the ODE problem
\begin{align*}
    \begin{cases}
   \mathscr{L}_{iB}(x_*;\dv{}{\rho},\vec{\xi})u^B=0,\\
    \mathscr{B}_{jB}(x_*;\dv{}{\rho},\vec{\xi})u^B=0,
    \end{cases}
\end{align*}
where $ A_{iB}(x_*;\dv{}{\rho},\vec{\xi})$ is obtained by the replacement \begin{equation*}
\partial_{1}\rightarrow \dv{}{\rho}, \qquad \partial_{j}\rightarrow \mathrm{i}\xi_{j} \qquad  j=2,...,n,\quad \vec{\xi}\neq 0
\end{equation*}
in the principal part:
$A^{\gamma_1...\gamma_l}_{iB}\partial_{\gamma_1}...\partial_{\gamma_l}$
and similarly for $B_{jB}(x_*;\dv{}{\rho},\vec{\xi})$.  Then, the pair
$(\mathscr{L},\mathscr{B})$ is said to satisfy the LS condition of the
only stable solution\footnote{That is, $u^{B}\rightarrow 0$ as $\rho \rightarrow \infty$.} to the above ODE is the trivial one.
\end{condition}

It is important to note that the LS condition is verified about an
arbitrary point on the boundary. Thus, we can generalise to $\mcS$ and
$\partial\mcS$. We now verify the LS condition for the AKE and
boundary conditions given in Lemma \ref{lemgfAKE}.

\begin{lemma}\label{lslemma}
For an initial data set $(\mcS,h_{ij},K_{ij})$ of the vacuum Einstein
field equations, with inner boundary $\partial\mathcal{S}$, the boundary
value problem consisting of the \eqref{AKE}
\[
\mathscr{P} \circ \mathscr{P}^{*}
\begin{pmatrix}
X\\
X_{i}
\end{pmatrix}
 = \begin{pmatrix}
0\\
0
\end{pmatrix}
\]
along with the boundary operators $(I,\Delta_{h},\pdv{\rho})$ given on
$\partial\mcS$ satisfy the Lopatinskij-Shapiro condition.
\end{lemma}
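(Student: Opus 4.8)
The plan is to freeze the coefficients of $\mathscr{P}\circ\mathscr{P}^{*}$ at an arbitrary point $x_{*}\in\partial\mcS$, choose $x^{1}=\rho$ normal to $\partial\mcS$ as in the Lopatinskij--Shapiro condition, and study the constant-coefficient ODE obtained from the principal part by the replacement $\partial_{1}\to\frac{d}{d\rho}$, $\partial_{A}\to\mathrm{i}\xi_{A}$ ($A=2,3$) with $\vec{\xi}\neq0$. Since the principal part splits into a scalar block acting on $X$ and a vector block acting on $X_{i}$ (as stressed before the statement), I would analyse the two blocks separately. In both blocks every factor $\Delta_{\delta}$ becomes $L\equiv\frac{d^{2}}{d\rho^{2}}-|\vec{\xi}|^{2}$, and because the frozen operator and the boundary operators $\{I,\Delta_{h},\partial_{\rho}\}$ are invariant under tangential rotations fixing the normal, I may align $\vec{\xi}$ with the $x^{2}$-axis, so that $|\vec{\xi}|^{2}=\xi^{2}$ with $\xi>0$ and $\partial_{3}\to0$. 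With ``stable'' meaning $u\to0$ as $\rho\to+\infty$, the admissible solutions of $L(\cdot)=0$ are multiples of $e^{-\xi\rho}$, while those of $L^{2}(\cdot)=0$ are spanned by $e^{-\xi\rho}$ and $\rho e^{-\xi\rho}$; I would record the identity $L(\rho e^{-\xi\rho})=-2\xi e^{-\xi\rho}$.

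For the scalar block the ODE is $L^{2}X=0$, so $X=(C+D\rho)e^{-\xi\rho}$, and the boundary operators $I$ and $\Delta_{h}$ impose $X(0)=0$ and $LX(0)=0$. The first gives $C=0$; the second, using the identity above, gives $-2\xi D=0$, hence $D=0$ since $\xi\neq0$. So the scalar block has only the trivial stable solution.

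For the vector block I would first rewrite its principal part as $\tfrac12\Delta_{\delta}\Delta_{\delta}X_{i}+\tfrac32\Delta_{\delta}\partial_{i}(\partial^{k}X_{k})$. The component $X_{3}$ orthogonal to both the normal and $\vec{\xi}$ then satisfies $L^{2}X_{3}=0$ in isolation, and the Dirichlet and Neumann data $X_{3}(0)=X_{3}'(0)=0$ eliminate it exactly as in the scalar case. The normal component $X_{1}$ and the tangential component $X_{2}$ are coupled only through the symbol $W\equiv X_{1}'+\mathrm{i}\xi X_{2}$ of the divergence: applying $\frac{d}{d\rho}$ to the $i=1$ equation and $\mathrm{i}\xi$ to the $i=2$ equation and adding collapses the coupling to $L^{2}W=0$, whence $W=(C_{W}+D_{W}\rho)e^{-\xi\rho}$. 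Feeding this back, the equations for $X_{1}$ and $X_{2}$ become inhomogeneous biharmonic ODEs whose forcing is proportional to $e^{-\xi\rho}$; since this forcing resonates with the double characteristic exponent $\lambda=-\xi$, the particular solutions carry a $\rho^{2}e^{-\xi\rho}$ term, fixed by $L^{2}(\rho^{2}e^{-\xi\rho})=8\xi^{2}e^{-\xi\rho}$. Writing the general decaying solutions and imposing the four conditions $X_{1}(0)=X_{2}(0)=0$, $X_{1}'(0)=X_{2}'(0)=0$ then forces all integration constants to vanish, and the consistency relation $W=X_{1}'+\mathrm{i}\xi X_{2}$ forces $C_{W}=D_{W}=0$ as well.

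The crux is this $X_{1}$--$X_{2}$ coupling and the resonance it produces: a priori the vector block is a genuinely coupled fourth-order system, and the divergence substitution is what decouples it into a single biharmonic ODE for $W$ plus two biharmonic ODEs with explicit right-hand sides. That the counting closes is reflected in the factorisation of the vector principal symbol: its determinant in the normal characteristic exponent $\lambda$ is, up to a constant, $(\lambda^{2}-\xi^{2})^{6}$, giving exactly six decaying modes against the six vector boundary conditions (and two against two in the scalar block). Assembling both blocks, the only stable solution of the frozen ODE problem is trivial, so $(\mathscr{P}\circ\mathscr{P}^{*},\{I,\Delta_{h},\partial_{\rho}\})$ satisfies the Lopatinskij--Shapiro condition at $x_{*}$; as $x_{*}\in\partial\mcS$ was arbitrary, the lemma follows.
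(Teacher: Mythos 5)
Your proof is correct, and its skeleton necessarily matches the paper's (this much is dictated by the Lopatinskij--Shapiro condition itself): freeze the coefficients at a boundary point, exploit the decoupling of the scalar and vector blocks of the principal part, solve the resulting constant-coefficient ODE systems, and check that the eight boundary conditions annihilate all decaying modes. The scalar block you treat identically to the paper. Where you genuinely diverge is the vector block. The paper attacks the coupled system directly: it posits the Ansatz $X_i=\sum_k X_{*i}\rho^k\mathrm{e}^{\pm|\xi|\rho}$, analyses the fundamental matrix (its Appendix \ref{LSODEappendix}), and writes down the explicit six-parameter family of stable solutions, on which the six boundary conditions are then imposed. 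You instead (i) use rotational covariance of the frozen problem to set $\xi_3=0$, which makes $X_3$ a stand-alone biharmonic problem, and (ii) collapse the remaining $X_1$--$X_2$ coupling onto the single equation $L^2W=0$ for the divergence symbol $W=X_1'+\mathrm{i}\xi X_2$ (with $L=\dv[2]{}{\rho}-\xi^2$), treating $X_1,X_2$ as resonantly forced biharmonic ODEs; your determinant factorisation $\det\sim(\lambda^2-\xi^2)^6$ independently confirms the count of six stable modes against six boundary conditions, the same counting that is implicit in the paper's appendix. Your route buys economy and transparency (no explicit general solution is needed, and the mechanism --- one divergence mode plus resonance --- is visible); the paper's route buys the explicit solution formulas, which its proof then quotes verbatim. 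One point you should make fully explicit, since it is where the argument actually closes: the four conditions $X_1(0)=X_2(0)=X_1'(0)=X_2'(0)=0$ kill only the homogeneous constants, leaving the particular solutions $X_1=-\tfrac{3}{4}D_W\rho^2\mathrm{e}^{-\xi\rho}$, $X_2=\tfrac{3}{4}\mathrm{i}\,D_W\rho^2\mathrm{e}^{-\xi\rho}$ tied to the amplitude $D_W$ of $W$. Substituting these into the consistency relation gives $X_1'+\mathrm{i}\xi X_2=-\tfrac{3}{2}D_W\rho\,\mathrm{e}^{-\xi\rho}$, and equating this with $W=(C_W+D_W\rho)\mathrm{e}^{-\xi\rho}$ yields $C_W=0$ and $D_W=-\tfrac{3}{2}D_W$, hence $D_W=0$; the crucial numerical fact is that the resonance coefficient $-\tfrac{3}{2}$ differs from $1$, and it is only at this step --- not at the boundary conditions --- that the vector block is forced to vanish.
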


\begin{proof}
Due to the decoupling of the principal parts of the AKE one can
consider the lapse and shift components independently. Thus, begin by considering the
principal part of the lapse component of the AKE operator
\[
\mathscr{P} \circ \mathscr{P}^{*}(X)=2\Delta_{\delta}\Delta_{\delta}X.
\]
with the the boundary operators on $\partial\mcS$ given by $(I,\Delta_{\delta})$. Focusing on the principal part of the lapse component of the AKE and the principal part of the boundary operators, consider the ordinary differential equation problem given by
\begin{align*}
\begin{cases}
\left(\displaystyle\dv[2]{}{\rho}-|\xi|^{2}\right)\left(\displaystyle\dv[2]{}{\rho}-|\xi|^{2}\right)X=0,\\
\Delta_{\delta}X=0,\\
X=0,
\end{cases}
\end{align*}
where this has been obtained by choosing a point on $\partial\mcS$ and performing the replacement
\begin{equation*}
\partial_{1}\rightarrow \dv{}{\rho}, \qquad \partial_{j}\rightarrow \mathrm{i}\xi_{j}.
\end{equation*}
Since one is performing the replacement about a point, one also has
that $h_{ij}\rightarrow\delta_{ij}$. The stable solution is given by
\begin{equation*}
X=c_{1}e^{-|\xi|\rho}+c_{2}\rho e^{-|\xi|\rho}.
\end{equation*}
Then, applying the boundary conditions above at $\rho=0$, one sees that
$c_{1}=c_{2}=0$ and thus the solution is trivial and the
Lopatinskij-Shapiro condition is satisfied for the lapse component of
the AKE with the above boundary operators.

\medskip
The shift component of the AKE has principal part given by
\[
\partial^{j}\Delta_{\delta}\partial_{(i}X_{j)}+\partial_{i}\Delta_{\delta}\partial^{k}X_{k}.
\]
Unlike the case of the lapse component of
the AKE there are three equations in this case corresponding to
$i=1,2,3$. Thus, one requires six boundary conditions. Due to the
derivatives either side of the Laplacian, carrying out the
transformation in order to verify the LS condition yields a system of
ODEs with terms containing derivatives of fourth order and lower.

Using the commutativity of partial derivatives one finds that, multiplying through by 2, the principal part is
\[
\Delta_\delta(\Delta_\delta X_{i}+3\partial_{i}\partial^{j}X_{j}).
\]
Making use of this expression and performing the replacement
$X_{i}\rightarrow (X^\perp,X_{A})$ with $A=1,2$ to this expression directly
yields the system of ordinary differential equations
\[
\begin{cases}
\left(\displaystyle\dv[2]{}{\rho}-|\xi|^{2}\right)\left(4 \displaystyle\dv[2]{}{\rho}X^\perp -|\xi|^{2}X+3\mi\xi^{A}\displaystyle\dv{}{\rho}X_{A}\right)=0,\\
\left(\displaystyle\dv[2]{}{\rho}-|\xi|^{2}\right)\left(\left(\displaystyle\dv[2]{}{\rho}-|\xi|^{2}\right)X_{A}+3\mi\xi_{A}\left(\displaystyle\dv{}{\rho}X^\perp+\mi\xi^{B}X_{B}\right)\right)=0.
\end{cases}
\]
Analysing the fundamental matrix of this system one finds that the solution to the above system of ODEs is of the form
\[
  X_{i}=\sum_{k=0}^{2} X_{*i}\rho^{k}\me^{\pm |\xi| \rho},
\]
Following a computation (see Appendix \ref{LSODEappendix}) one finds that the stable solution is given by
\[
\begin{cases}
X= a\me^{-|\xi| \rho}+ {\displaystyle\frac{\mi}{|\xi|}}b_{A}\xi^{A}\rho\me^{-|\xi| \rho}+c(\frac{3}{10}|\xi|\rho^{2}+\rho)\me^{-|\xi| \rho},\\
X_{A}=a_{A}\me^{ -|\xi| \rho}+ b_{A}\rho\me^{ -|\xi| \rho}-c\frac{3}{10}\mi\xi_{A}\rho^{2}\me^{ -|\xi| \rho}.
\end{cases}
\]
One then requires 6 boundary conditions to fix the 6 constants
corresponding to $a,a_{A},b_{A},c$. Consider the principal parts of
the boundary operators $X_{i}$ and $\pdv{\rho} X_{i}$. Performing the
Lopatinskij-Shapiro replacements yields the initial conditions
\[
\begin{cases}
X(0)=0,\\
X_{A}(0)=0,\\
\displaystyle\dv{}{\rho}X(0)=0,\\
\displaystyle\dv{}{\rho}X_{A}(0)=0.
\end{cases}
\]
Substituting the stable solution into these conditions yields
$a=a_{1}=a_{2}=b_{1}=b_{2}=c=0$ and thus this boundary value problem
has only the trivial solution and the Lopatinskij-Shapiro condition is
satisfied by the shift component of the AKE along with the boundary
operators $(I,\pdv{\rho})$.

\medskip
Thus, we have shown that the Lopintskij-Shapiro condition is satisfied
for the 8 boundary operators $(I,\Delta_{h}N,\pdv{\rho}N_{i})$.
\end{proof}

Since the LS condition is satisfied we have the following corollary:
\begin{corollary}\label{cor:akefredholm}
The approximate Killing vector operator 
$
    \mathscr{P} \circ \mathscr{P}^{*}
\begin{pmatrix}
X\\
X_{i}
\end{pmatrix}
$
along with the 8 boundary operators $\{I, \Delta_h X, \pdv{}{\rho}\}$
is elliptic.
\end{corollary}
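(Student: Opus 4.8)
The plan is to verify, one by one, the defining properties of an \emph{elliptic (regular) boundary value problem}: (a) interior ellipticity of the differential operator; (b) the correct number of boundary conditions, equal to half the total order of the system; and (c) the Lopatinskij--Shapiro condition coupling the boundary operators to the principal symbol. The essential observation is that each of these ingredients has already been supplied by the preceding results, so the proof reduces to assembling them and checking the bookkeeping, with no genuinely new analysis required.

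First I would invoke the earlier lemma stating that $\mathscr{P}\circ\mathscr{P}^{*}$ is a fourth-order elliptic operator; this gives (a). For (b), the AKE is a system of four equations, each of fourth order, so a regular elliptic problem requires $\tfrac{1}{2}\cdot 4\cdot 4 = 8$ boundary conditions. The identity $I$ imposes Dirichlet data on the lapse $X$ and on the three shift components $X_i$ (four conditions), $\Delta_h X$ imposes one further condition on the lapse, and $\pdv{\rho}X_i$ imposes three conditions on the shift, giving exactly eight. Finally, (c) is precisely the content of Lemma \ref{lslemma}, where the model ODEs on the half-line were solved and only the trivial decaying solution was found, treated separately for the lapse with operators $(I,\Delta_h)$ and for the shift with operators $(I,\pdv{\rho})$.

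One subsidiary point I would record is that these boundary operators form a \emph{normal} system, so that the count in (b) is meaningful: for the lapse the pair $(I,\Delta_h)$ has distinct orders $0$ and $2$, both below $2m-1=3$, with non-vanishing conormal symbols, while for the shift $(I,\pdv{\rho})$ has orders $0$ and $1$; this is consistent with the Dirichlet-system structure already exhibited in the Green's formula of Lemma \ref{lemgfAKE}. With (a), (b) and (c) in hand, the boundary value problem is elliptic. I do not expect a substantive obstacle here, since the real analytic work was carried out in Lemma \ref{lslemma}; the corollary is essentially a definitional consequence, and the only thing to guard against is a miscount of the boundary conditions or an inconsistency in their orders, both of which are excluded by the tallying above.
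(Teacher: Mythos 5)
Your proposal is correct and follows essentially the same route as the paper: the paper derives the corollary immediately from Lemma \ref{lslemma} (the Lopatinskij--Shapiro verification), together with the previously established interior ellipticity of $\mathscr{P}\circ\mathscr{P}^{*}$ and the remark that a system of four fourth-order equations requires $8$ boundary conditions. Your additional check that the boundary operators form a normal system is a harmless (and correct) elaboration of bookkeeping the paper leaves implicit via the Dirichlet-system structure in Lemma \ref{lemgfAKE}.
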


\section{Main existence theorem}
\label{Section:MainExistenceTheorem}
We are now in a position to prove a series of existence results to the
\eqref{AKE}. We begin with an auxiliary existence result which will be
essential to prove our main existence result.

\begin{lemma}\label{vanishingbdrylemma}
Let $(S,h_{ij},K_{ij})$ be a complete, smooth asymptotically Euclidean
initial data set for the Einstein vacuum field equations with one
asymptotic end. On $\partial\mcS$, suppose one has the
conditions
\begin{equation*}
\begin{cases}
N\vert_{\partial \mcS} =0,\\
\Delta N \vert_{\partial \mcS} =0,\\
N^{i}\vert_{\partial \mcS}=0,\\
\pdv{\rho}N^{i}\vert_{\partial \mcS}=0.
\end{cases}
\end{equation*}
 For $0<\beta<\frac{1}{2}$, $\mathscr{P}\circ\mathscr{P}^{*}:H_{\beta}^{\infty}\rightarrow H_{\beta-4}^{\infty}$, the equation
\begin{equation*}
\mathscr{P}\circ\mathscr{P}^{*}
\begin{pmatrix}
N\\
N^{i}
\end{pmatrix}
=0,
\end{equation*}
admits a non-zero solution $N,N^{i} \in H_{\beta}^{\infty}$ if and only if we have that
\begin{equation*}
\mathscr{P}^{*}
\begin{pmatrix}
N\\
N^{i}
\end{pmatrix}
=0,
\end{equation*}
i.e. the data is stationary in the sense of Definition \ref{defstat}.
\end{lemma}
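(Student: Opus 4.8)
The plan is to prove the equivalence by a single integration by parts, reducing the fourth--order equation $\mathscr{P}\circ\mathscr{P}^{*}\bmu=0$ to an $L^{2}$ identity for $\mathscr{P}^{*}\bmu$ whose boundary terms are annihilated --- at the inner boundary by the prescribed data, and at infinity by the weighted fall--off. Write $\bmu\equiv(N,N^{i})$. The implication $\mathscr{P}^{*}\bmu=0\Rightarrow\mathscr{P}\circ\mathscr{P}^{*}\bmu=0$ is immediate, and any nontrivial such $\bmu$ lies in $H^{\infty}_{\beta}\subset H^{2}_{1/2}$, so by Definition \ref{defstat} the data is stationary; the substance is the converse.

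For the converse, recall that $\mathscr{P}^{*}$ is by construction the formal adjoint of $\mathscr{P}$. Let $\mcS_{R}\equiv\{x\in\mcS:\sigma(x)\le R\}$, with outer sphere $S_{R}$, so that $\partial\mcS_{R}=\partial\mcS\cup S_{R}$ with the induced orientations. Setting $\psi\equiv\mathscr{P}^{*}\bmu$ and integrating by parts once on $\mcS_{R}$ gives
\begin{equation*}
\int_{\mcS_{R}}\bmu\cdot\mathscr{P}\big(\mathscr{P}^{*}\bmu\big)\,\mathrm{d}\mu=\int_{\mcS_{R}}\big|\mathscr{P}^{*}\bmu\big|^{2}\,\mathrm{d}\mu+\oint_{\partial\mcS}\mathcal{Q}(\bmu)+\oint_{S_{R}}\mathcal{Q}(\bmu),
\end{equation*}
where $\mathcal{Q}$ is the boundary concomitant of the pair $(\mathscr{P},\mathscr{P}^{*})$ read off from the Green formula of Appendix \ref{gfappendix}; it is bilinear in $\bmu$, $\mathscr{P}^{*}\bmu$ and their derivatives. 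Since $\mathscr{P}\circ\mathscr{P}^{*}\bmu=0$ the left--hand side vanishes, leaving
\begin{equation*}
\int_{\mcS_{R}}\big|\mathscr{P}^{*}\bmu\big|^{2}\,\mathrm{d}\mu=-\oint_{\partial\mcS}\mathcal{Q}(\bmu)-\oint_{S_{R}}\mathcal{Q}(\bmu).
\end{equation*}
The goal is to show that the inner integral vanishes identically and that the outer one is $o(1)$ as $R\to\infty$; since the left--hand side is nonnegative and nondecreasing in $R$, this will force $\mathscr{P}^{*}\bmu\equiv0$ on $\mcS$.

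The inner boundary is controlled by the prescribed data together with the self--adjoint structure singled out in Corollary \ref{cornatbo}: the operators $\{I,\Delta_{h}X,\pdv{\rho}X_{i}\}$ were chosen precisely so that $\mathcal{Q}|_{\partial\mcS}$ is a sum of products each carrying at least one of these traces. I would verify by direct substitution, using the tangential/normal split $D_{i}X_{j}=\bh_{i}^{k}D_{k}X_{j}+\rho_{i}\rho^{k}D_{k}X_{j}$ recorded after Corollary \ref{cornatbo}, that $N|_{\partial\mcS}=0$ kills every tangential derivative of $N$, that $N^{i}|_{\partial\mcS}=0$ together with $\pdv{\rho}N^{i}|_{\partial\mcS}=0$ kill every first derivative of $N^{i}$ (hence the lower--order tensors $H_{ij}$, $F_{ij}$ inside $\mathscr{P}^{*}\bmu$ drop out), and that the surviving Hessian contribution is removed by $\Delta_{h}N|_{\partial\mcS}=0$; consequently $\oint_{\partial\mcS}\mathcal{Q}(\bmu)=0$.

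For the outer boundary, the hypothesis $0<\beta<\tfrac12$ is exactly what is needed. By the weighted embedding of \cite{Bar86}, $\bmu\in H^{\infty}_{\beta}$ yields the pointwise bounds $|D^{k}\bmu|\lesssim\sigma^{\beta-k}$, so that $\mathscr{P}^{*}\bmu\sim D\bmu$ and its derivatives carry the expected extra powers. The structural point is that the principal (constant--coefficient, flat) part of $\mathscr{P}$ contributes to $\mathcal{Q}$ only at top concomitant order, while all lower--order contributions come with coefficients built from $K_{ij}$ and curvature, which by Definition \ref{asympeuclidmf} decay at least like $\sigma^{-3/2}$. A term--by--term count then shows that every summand of $\mathcal{Q}$ effectively carries three derivatives of $\bmu$, so its integrand on $S_{R}$ is $O(\sigma^{2\beta-3})$; with $\mathrm{d}S\sim R^{2}\,\mathrm{d}\Omega$ this gives $\oint_{S_{R}}\mathcal{Q}=O(R^{2\beta-1})\to0$ precisely because $\beta<\tfrac12$. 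Combining with the vanishing inner term, $\int_{\mcS_{R}}|\mathscr{P}^{*}\bmu|^{2}\,\mathrm{d}\mu=O(R^{2\beta-1})$; a nonnegative nondecreasing quantity bounded by a null sequence must vanish, so $\mathscr{P}^{*}\bmu=0$, and as $\bmu\neq0$ with $\bmu\in H^{\infty}_{\beta}\subset H^{2}_{1/2}$ the data is stationary. The main obstacle is this asymptotic term: one must confirm that the borderline weight $\beta<\tfrac12$ beats the area growth $R^{2}$ after accounting for the derivatives hidden inside $\mathscr{P}^{*}\bmu$, which rests on the observation that the constant--coefficient top--order part of $\mathscr{P}$ cannot generate slowly decaying boundary terms --- the inner--boundary vanishing being, by contrast, a lengthy but mechanical substitution enabled by the self--adjoint choice of boundary operators.
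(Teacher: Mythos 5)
Your proposal is correct and takes essentially the same route as the paper's own proof: the paper likewise integrates $\bmu\cdot\mathscr{P}\circ\mathscr{P}^{*}\bmu$ by parts to obtain the $L^{2}$ identity for $\mathscr{P}^{*}\bmu$ with boundary integrals over $\partial\mcS$ and a sphere at infinity, kills the inner terms (its $\mathcal{A}_k,\mathcal{B}_k,\mathcal{C}_k,\mathcal{D}_k$) by direct substitution of the prescribed boundary data using the tangential/normal split, and disposes of the integral at infinity via the weighted decay (citing \cite{ValWil17}), exactly as your $R^{2\beta-1}\to 0$ count makes explicit.
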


\begin{proof}
Assume that $\mathscr{P}\circ\mathscr{P}^{*}(N, N^{i})=0$. One has the
following identity from \cite{ValWil17}
\begin{align*}
\int_{\mcS}\mathscr{P}^*\left(
\begin{array}{c}
X \\
X_i
\end{array}
\right) \cdot
\mathscr{P}^*
\left(
\begin{array}{c}
X \\
X_i
\end{array}
\right) = \oint_{\partial\mcS} \rho^k \big( \mathcal{A}_k + \mathcal{B}_k + \mathcal{C}_k + \mathcal{D}_k \big)
\mbox{d}S-\oint_{\partial\mathcal{S}_{\infty}} s^k \big( \mathcal{A}_k + \mathcal{B}_k + \mathcal{C}_k + \mathcal{D}_k \big)
\mbox{d}S
\end{align*}
where $S_{\infty}$ is a sphere at infinity, $\rho^{k}$ is the {\em
inward} pointing normal to $\partial\mcS$ and $s^{k}$ is the outward
pointing normal to the sphere at infinity. The second boundary
integral vanishes by virtue of the decay of $N,N^{i}$ as shown in
\cite{ValWil17}. The boundary integrands are defined as
%%%%
\begin{align*}
 \mathcal{A}_{k} &= N D^i\gamma_{ik} - D^iN \gamma_{ik} + D_k N\gamma -ND_k\gamma,\\
 \mathcal{B}_{k} &=2(K^{ij}q_{kij}-Kq_{kj}{}^{j})N, \\
 \mathcal{C}_{k} &=N^i D^l q_{lik} - D^j N^i q_{kij} + D_i N^i q_{kj}{}^j -N_i D^l q_{lj}{}^j,\\
 \mathcal{D}_{k} &= \frac{1}{2}N_k K^{lj} \gamma_{jl} + \frac{1}{2}N^i K_{ik}\gamma - N^i K_i{}^l \gamma_{kl},\\
 \gamma_{ij} &= D_i D_j N - N r_{ij} -\Delta_{h} N h_{ij} + H_{ij},\\
 q_{kij} &= D_k\big( D_{(i} N_{j)} - D^l N_l h_{ij} - F_{ij}\big),\\
 F_{ij} &=2N (K h_{ij} - K_{ij}).
\end{align*} 
%%%%
Directly applying the boundary conditions, one observes immediately
that 
\[
  \mathcal{B}_{k}=\mathcal{D}_{k}=F_{ij}=0 \quad \mbox{on} \quad \partial \mcS.
  \]

\medskip
Using that $N^{i}=0$, $\mathcal{C}_{k}$ reduces to 
\begin{equation*}
 \mathcal{C}_{k} =- D^j N^i q_{kij} + D_i N^i q_{kj}{}^j.
\end{equation*}
Using the decomposition of the metric onto the boundary we have that
\begin{equation*}
D_{i}X_{j} = h_{i}^{k}D_{k}X_{j}=(\bh_{i}^{k}+\rho_{i}\rho^{k})D_{k}X_{j}=\bh_{i}^{k}D_{k}X_{j}+\rho_{i}\rho^{k}D_{k}X_{j}.
\end{equation*}
The first term on the right hand side vanishes since $X_{i}$ is
constant and thus does not change along the boundary, and the second
term vanishes by the boundary condition $\rho^{k}D_{k}X_{j}=0$. Thus,
$\mathcal{C}_{k}=0$. One also readily sees that the trace of $\gamma_{ij}$
vanishes. Accordingly, the only term left is the second one in $\mathcal{A}_{k}$:
$\rho^{k}D^{i}N\gamma_{ik}$. This reduces to
$\rho^{k}D^{i}ND_{i}D_{k}N$ as all other terms in $\gamma_{ij}$
vanish. We have that, using the condition on the Laplacian of $N$, it
follows that
\begin{equation*}
0 = h^{ij}D_{i}D_{j}N=\bh^{ij}D_{i}D_{j}N +\rho^{i}\rho^{j}D_{i}D_{j}N.
\end{equation*}
The first term on the right hand side vanishes as the derivatives
tangential to $\partial \mcS$ of $N$ vanish. Thus, one obtains the
identity
\begin{equation*}
\rho^{j}\pdv{}{\rho}D_{j}N=0.
\end{equation*}
Then, since the 3-dimensional covariant derivative is assumed to be torsion free, we can write 
\begin{equation*}
\rho^{k}D^{i}ND_{i}D_{k}N=\rho^{k}D^{i}ND_{k}D_{i}N=D^{i}N\pdv{}{\rho}D_{i}N = \rho^{i}\rho_{k}D^{k}N\pdv{}{\rho}D_{i}N=0
\end{equation*}
applying the previous identity.

\medskip
 Putting this all together yields
\begin{equation*}
\int_{\mathcal{U}}\mathscr{P}^*\left(
\begin{array}{c}
X \\
X_i
\end{array}
\right) \cdot
\mathscr{P}^*
\left(
\begin{array}{c}
X \\
X_i
\end{array}
\right) =0,
\end{equation*}
so that we must have $\mathscr{P}^*\left(
\begin{array}{c}
X \\
X_i
\end{array}
\right)=0$ , i.e. the data is stationary. Uniqueness then follows
directly from Lemma 4 of \cite{ValWil17} since we have the same decay
on $(N,N^{i})$.
\end{proof}

We are now able to prove that a solution to the \eqref{AKE} exists
for an arbitrary choice of boundary data.

\begin{theorem}
Let $(\mcS,h_{ij},K_{ij})$ be a complete, smooth asymptotically Euclidean
initial data set for the Einstein vacuum field equations with one
asymptotic end. Given  smooth fields $f$, $g$, $f^i$ and $h^i$ over
$\partial\mathcal{S}$, then there exists a solution $(X,X^i)$ to the
approximate KID equation boundary value problem
\begin{align*}
&\mathscr{P}\circ\mathscr{P}^{*}
\begin{pmatrix}
X\\
X^{i}
\end{pmatrix}
=0 \qquad \text{on } S,\\
&\begin{cases}
X\vert_{\partial S} =f,\\
\Delta_{h} X \vert_{\partial S} =g,\\
X^{i}\vert_{\partial S}=f^i,\\
\pdv{\rho}X^{i}\vert_{\partial S}=h^i,
\end{cases}
\end{align*}
such that on the asymptotic end, one has the asymptotic behaviour
\begin{align*}
&X_{(k)}=\lambda_{}|x|+\vartheta_{}, \qquad \vartheta_{} \in H^{\infty}_{\frac{1}{2}}\\
&X^{i}_{} \in  H^{\infty}_{\frac{1}{2}}
\end{align*}
where $\lambda_{}$ is Dain's invariant ---i.e. if the data is
stationary in the sense of Definition \ref{defstat}--- then
$\lambda_{}$ vanishes.
\end{theorem}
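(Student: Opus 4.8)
The plan is to recast the statement as an existence result for the elliptic boundary value problem $(\mathscr{P}\circ\mathscr{P}^{*},\{I,\Delta_{h},\partial_{\rho}\})$, whose ellipticity (and hence Fredholm property) was already secured in Corollary \ref{cor:akefredholm} via the Lopatinskij--Shapiro analysis. First I would dispose of the inhomogeneous boundary data. Since the operators $\{I,\Delta_{h},\partial_{\rho}\}$ form a normal system, a trace/extension argument produces a smooth field $(W,W^{i})$ supported in a collar of $\partial\mcS$ with $W|_{\partial\mcS}=f$, $\Delta_{h}W|_{\partial\mcS}=g$, $W^{i}|_{\partial\mcS}=f^{i}$ and $\partial_{\rho}W^{i}|_{\partial\mcS}=h^{i}$. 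Writing $X=\lambda\chi|x|+W+u$, where $\chi$ is a cut-off equal to $1$ near the asymptotic end and $0$ near $\partial\mcS$, reduces the problem to solving $\mathscr{P}\circ\mathscr{P}^{*}(u,u^{i})=\lambda\bmf_{0}+\bmf_{W}$ for $(u,u^{i})\in H^{\infty}_{1/2}$ subject to the \emph{homogeneous} boundary conditions of Lemma \ref{vanishingbdrylemma}. Here $\bmf_{W}=-\mathscr{P}\circ\mathscr{P}^{*}(W,W^{i})$ is compactly supported, while $\bmf_{0}=-\mathscr{P}\circ\mathscr{P}^{*}(\chi|x|,0)$ decays (the flat bi-Laplacian annihilates $|x|$ and the coefficients are asymptotically flat); both lie in the relevant source space $H^{\infty}_{\beta-4}$. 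The term $\lambda\chi|x|$ supplies the prescribed asymptotics and isolates $\lambda$ as the coefficient of $|x|$, to be identified with Dain's invariant.

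Next I would invoke the Fredholm alternative, Proposition \ref{prop:fredholmalternative}. Because the problem was arranged to be self-adjoint (Corollary \ref{cornatbo}), the cokernel $N^{*}$ is exactly the kernel of the homogeneous problem $\mathscr{P}\circ\mathscr{P}^{*}(v,v^{i})=0$ with $v|_{\partial\mcS}=\Delta_{h}v|_{\partial\mcS}=v^{i}|_{\partial\mcS}=\partial_{\rho}v^{i}|_{\partial\mcS}=0$, and since all boundary data of $u$ has been absorbed into $W$, the boundary integrals in the solvability criterion vanish. By Lemma \ref{vanishingbdrylemma} this kernel is non-trivial precisely when the data is stationary, its elements then satisfying $\mathscr{P}^{*}(v,v^{i})=0$. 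If the data is not stationary, $N^{*}=\{0\}$ and the equation for $(u,u^{i})$ is solvable; if it is stationary, solvability reduces to the conditions $\int_{\mcS}(\lambda\bmf_{0}+\bmf_{W})\cdot(v,v^{i})=0$ for $(v,v^{i})\in N^{*}$. Integrating by parts and using $\mathscr{P}\circ\mathscr{P}^{*}(v,v^{i})=0$ together with the decay at infinity turns each such condition into a single linear relation between $\lambda$ and the prescribed boundary data, which is what pins down $\lambda$; the identification with the invariant of \cite{ValWil17} then follows as in the boundary-free case, and uniqueness up to rescaling follows verbatim from Lemma 4 of \cite{ValWil17} given the common $H^{\infty}_{1/2}$ decay.

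The hard part will be the stationary case of this solvability analysis, and with it the assertion that $\lambda$ coincides with Dain's invariant and vanishes exactly for stationary data. The cleanest route is to apply the integral identity of \cite{ValWil17} (the one used in the proof of Lemma \ref{vanishingbdrylemma}) directly to the constructed solution $(X,X^{i})$, writing $\int_{\mcS}|\mathscr{P}^{*}(X,X^{i})|^{2}$ as a term at infinity proportional to $\lambda$ minus the inner integral $\oint_{\partial\mcS}\rho^{k}(\mathcal{A}_{k}+\mathcal{B}_{k}+\mathcal{C}_{k}+\mathcal{D}_{k})$. The genuine difficulty, absent in \cite{ValWil17}, is that this inner boundary integral does not vanish automatically in the presence of the MOTS: controlling it requires feeding in the prescribed boundary data together with the pointwise vanishing of the integrands established in the proof of Lemma \ref{vanishingbdrylemma}, so that the left-hand side remains a definite multiple of $\lambda$. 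Once this is in place, positivity of $\int_{\mcS}|\mathscr{P}^{*}(X,X^{i})|^{2}$ yields $\lambda=0\Leftrightarrow\mathscr{P}^{*}(X,X^{i})=0$, i.e.\ the equivalence between the vanishing of the invariant and stationarity survives the introduction of the inner boundary, which is the essential new content of the theorem.
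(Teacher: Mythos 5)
Your proposal follows essentially the same route as the paper's own proof: apply the Fredholm alternative of Proposition \ref{prop:fredholmalternative} to the self-adjoint boundary value problem of Corollary \ref{cornatbo}, use Lemma \ref{vanishingbdrylemma} to conclude the adjoint homogeneous problem has only the trivial solution in the non-stationary case (hence solvability for arbitrary boundary data), and defer the stationary case and the identification of $\lambda$ with Dain's invariant to the integral identity and decay arguments of \cite{ValWil17}. Your extra steps --- the trace/extension of the boundary data and the cut-off ansatz $\lambda\chi|x|$ isolating the asymptotics --- merely make explicit what the paper leaves implicit, so the proposal is correct and matches the paper's argument.
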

\begin{proof}
We make use of the Fredholm alternative, Proposition
\ref{prop:fredholmalternative}. Note first that the boundary value
problem is self adjoint by virtue of Corollary \ref{cornatbo}. From
Lemma \ref{vanishingbdrylemma}, in the non-stationary case we can only have the
trivial solution to the adjoint homogeneous problem. Thus, a solution
to the AKE boundary value problem always exists in this case.

\medskip
In the stationary case, we have a solution to the AKE boundary value
problem for specific boundary data coming from the Killing vector
associated to stationarity. The decay of $(X,X^i)$ allows one to
prove existence and uniqueness exactly as in \cite{ValWil17}.
\end{proof}

In summary, we have shown that there always exists an approximate
Killing vector for an arbitrary choice of the functions specifying the
value of the lapse, its Laplacian, the shift and its normal derivative
on the inner boundary. Ideally, we would like to restrict the choice
of data so that it has physical relevance and connects with the
mathematical description of a black hole. In order to do this we
analyse the tangential parts of the KID equations on a MOTS. In the
sequel, we will show that one can always solve these equations on the
boundary. Thus, we derive a natural prescription of boundary data.

%%%%%%%%%%%%%%%%%%%%%

\section{The KID equations on an apparent horizon}\label{KIDeqnMOTSsec}
In this section we decompose the KID equations into parts tangential
and normal to a MOTS using a 2+1 projector formalism. This works in the same way as the 3+1 decomposition. We
first outline a condition on the extrinsic curvature of $\mcS$ and
$\partial\mathcal{S}$ for the presence of an apparent horizon in
\ref{secmotscondition}. In section \ref{sec12decomp} we decompose the
KID equations using a `2+1 decomposition' making use of the MOTS
condition.

\subsection{MOTS condition}\label{secmotscondition}
In this section we outline the mathematical condition for the
existence of a MOTS or apparent horizon, see \cite{Cho08}. The
spacelike 2-surface $\Sigma$ is embedded into a 4-dimensional
spacetime. In this setting, the orthogonal space to $\Sigma$ is
2-dimensional and is Lorentzian. Thus, one can choose two future
directed null vectors $\bml^{+}$ and $\bml^{-}$ and define the null mean
curvatures of $\Sigma$ by
\[
\chi^{+}=\bh^{ab}\nabla_{a}l^{+}_{b},\qquad \chi^{-}=\bh^{ab}\nabla_{a}l^{-}_{b}.
\]
The MOTS condition is then defined through the vanishing of one of
these null mean curvatures ---by convention this is chosen to be $\chi^{+}=0$. One
can translate the above formalism into a condition on the extrinsic
curvature of $\mcS$ and $\Sigma$ as in \cite{Cho08} to obtain 
\[
\chi^{+}=-K+\rho^{i}\rho^{j}K_{ij}+\bK.
\]
Thus, the MOTS condition can be expressed as
\begin{equation}\label{motscondition}
0=-K+\rho^{i}\rho^{j}K_{ij}+\bK.
\end{equation}

Using this equation, we will be able to encode into the presence of a
black hole into the boundary data of the \eqref{AKE}. The notion of
MOTS stability \cite{AndMarSim05,AndMarSim08} will be utilised in
order to guarantee that the MOTS propagates into the development of
the initial data. In other words, so that the MOTS forms a so-called
{\em trapping horizon}. Fundamental to the study of MOTS stability, is
the MOTS stability operator ---see \cite{AndMarSim08} for details. In
the case of a spacelike, time-symmetric initial hypersurface the MOTS
stability operator takes the form
\begin{equation}\label{motsstabilityoperator}
 \mathcal{L}=-\Delta_{\bh}-(r_{ij}\rho^{i}\rho^{j}+\bK_{pq}\bK^{pq}).
\end{equation}

\subsection{2+1 Decomposition}\label{sec12decomp}
In this subsection we discuss the $2+1$ decomposition of the KID
equations.
We first make a brief comment about notation.
\begin{notation}
{\em Throughout this section all barred quantities correspond to
2-dimensional quantities. For example, $h_{ij}$ is 3-dimensional and $\bh_{AB}$
is 2-dimensional. Following the convention in \cite{ChoDeWDil82}, we
denote the decomposition of a vector field at a point $p$ by
$\bmu=\bmu^{\parallel}+\bmu^{\perp}$ where $\bmu^{\parallel}$ is its
component along $T_{p}\Sigma$ for a 2-surface $\Sigma$ and
$\bmu^{\perp}$ its normal component. This extends to higher rank
tensor fields.}
\end{notation}

In the following consider a 3-dimensional hypersurface $\mcS$ of the
spacetime. Following the conventions outlined above let $\bmh$ denote
the metric induced on $\mcS$. Moroever, let $\Sigma$ denote a
2-dimensional surface within $\mcS$. Let $\varsigma$ be a scalar function such that the covector $\alpha_{i}=D_{i}\varsigma$ is normal to this foliation. In order to define the unit normal, let us set 
\begin{equation*}
\alpha^{i}\alpha_{i}\equiv \frac{1}{X^{2}}.
\end{equation*}
Thus, the {\it unit normal to the foliation} is $\rho_{i}\equiv
X\alpha_{i}$. By applying the above contraction, one sees
$\rho_{i}\rho^{i}=1$. In this way we obtain a foliation of 2-surfaces,
one 2-surface for each value of $\varsigma$.

\medskip
The Riemannian $3$-metric $h_{ij}$ induces a Riemannian $2$-metric
$\bar{h}_{AB}$ on $\Sigma_{r}$, where indices $A,B,...$ indicate the
intrinsic $2$-dimensional nature of $\bar{h}_{AB}$. The metrics
$h_{ij}$ and $\bh_{AB}$ are related through the projector
\[
\bh_{ij}=h_{ij}-\rho_{i}\rho_{j}
\]
One associates a $2$-covariant derivative with the metric
$\bh_{AB}$. For a scalar this is defined as
\[
\bD_{j}\phi\equiv \bh_{j}{}^{i}D_{i}\phi.
\]
The associated Riemann curvature tensor is defined through
\[
\bD_{i}\bD_{j}v^{k}-\bD_{j}\bD_{i}v^{k}=\er\indices{^{k}_{lij}}v^{l}
\]
for an intrinsic vector $v^{A}$. Note that since this curvature tensor is $2$-dimensional, it only has one independent component. Similarly, one obtains the Ricci tensor and Ricci scalar
\[
\er_{ij}=\er\indices{^{k}_{ikj}}\qquad\er=\bh^{ij}\er_{ij}.
\]
The {\it extrinsic curvature} of the $2$-surface embedded in the 3-dimensional surface is defined by
\[
\bK_{pq}=\bh\indices{_{p}^{i}}\bh\indices{_{q}^{j}}D_{i}\rho_{j}
\]
matching the positive sign convention on $K_{ij}$. This tensor is
symmetric and lies purely on the $2$-dimensional surfaces. Another
important quantity is the acceleration of the foliation 
\[
a_{i}=\rho^{j}D_{j}\rho_{i}.
\]
Using $\rho^{i}D_{j}\rho_{i}=0$ one can write
\[
\bK_{ij}=D_{i}\rho_{j}-\rho_{i}a_{j}.
\]
\begin{remark}
{\em In the following, we will often make use of {\em Gaussian normal
  coordinate} so that this acceleration vanishes, $a_i=0$. }
\end{remark}

Now, one can project the KID equations~\eqref{kid1}-\eqref{kid2} onto
the $2$-surfaces of constant $\varsigma$. Let ${N^{\parallel}}^{i}$ denote the projection of
the shift vector $N^i$ onto the $2$-surfaces ---i.e. 
${N^{\parallel}}^{i}\equiv \bh\indices{^{i}_{j}}N^{j}$. We proceed now
to decompose objects into parts perpendicular and parallel to the
normal $\rho^{i}$. Projecting with $$\bh\indices{_{p}^{i}}\bh\indices{_{q}^{j}}$$ we obtain the projection of the the first KID
equation~\eqref{kid1}
\begin{equation}\label{fulldecompkideq1}
NK^{\parallel}_{pq}+\bD_{(p}N^{\parallel}_{q)}+N_{k}\rho^{k}\bK_{pq}=0,
\end{equation}
where
$K^{\parallel}_{pq}\equiv\bh\indices{_{p}^{i}}\bh\indices{_{q}^{j}}K_{ij}$. 

\medskip
For the second KID equation \eqref{kid2}, we project term by term making use of the Gauss-Codazzi equation and Gaussian normal coordinates to obtain the projection of the second KID equation onto $\Sigma$:
\begin{align}\label{fulldecompkideq2}
\begin{split}
&N^{k}\Bigl( D_{k}K_{pq}^{\parallel}+K_{iq} \bigl(\rho^{i} (\bK_{kp}) + (\bK_{k}{}^{i}) \rho_{p}\bigr) + K_{pj} \bigl(\rho^{j} (\bK_{kq}) + (\bK_{k}{}^{j} ) \rho_{q}\bigr)\\
 &- K_{ij} \bigl(\rho^{j} (\bK_{k}{}^{i}) \rho_{p} \rho_{q} + \rho^{i} (\bK_{k}{}^{j} ) \rho_{p} \rho_{q} + \rho^{i}\rho^{j} (\bK_{kq}) \rho_{p} +  \rho^{i}\rho^{j} (\bK_{kp} ) \rho_{q}\bigr)\Bigr) \\
 &+\bh\indices{_{p}^{i}}\bh\indices{_{q}^{j}}D_{i}N^{k}K_{kj}+\bh\indices{_{p}^{i}}\bh\indices{_{q}^{j}}D_{j}N^{k}K_{ik} + \bD_{p}\bD_{q}N+\rho^{k}D_{k}N\bK_{pq}\\
 &=N\left( \bh\indices{_{p}^{i}}\bh\indices{_{q}^{j}}r_{ij}+KK^{\parallel}_{pq}+\bh\indices{_{p}^{i}}\bh\indices{_{q}^{j}}K_{ik}K\indices{^{k}_{j}}\right).
 \end{split}
\end{align}

\medskip
Together, equations \eqref{fulldecompkideq1} and
\eqref{fulldecompkideq2} give conditions on $\Sigma$ that Killing
vectors of the the spacetime must satisfy on this 2-surface. However,
the converse is not true. If $(N,N^i)$ a solution to these two
equations that does not necessarily mean that one has a solution to
the KID equations. This is because there are two other components of
the KID equations in this decomposition which we call the
normal-normal and normal-tangential components. For the first KID equation
\eqref{kid1}, one has
\begin{alignat*}{2}
\mathrm{normal-normal:}&\qquad NK^{\perp}_{pq} &&+\rho_{(p\vert}\rho^{j}D_{j\vert}N^{\perp}_{q)}-N_{i}a^{i}\rho_{p}\rho_{q}-N^{\perp}_{(p}a_{q)}=0,\\
\mathrm{normal-tangential:}& \qquad NK_{pq}^{\perp\parallel}&&+\frac{1}{2}\left(\rho_{p}\rho^{i}D_{i}N^{\parallel}_{q} +N_{j}a^{j}\rho_{p}\rho_{q}+N^{\perp}_{p}a_{q}\right)\\
& &&+\frac{1}{2}\left(\bh_{q}{}^{j}D_{j}N^{\perp}_{p}-N_{i}\rho_{p}\bK_{q}{}^{i}-\rho^{i}N_{i}\bK_{pq}\right)=0,\nonumber
\end{alignat*}
where $K_{pq}^{\perp}\equiv\rho_{p}\rho^{i}\rho_{q}\rho^{j}K_{ij}$ and
$K_{pq}^{\perp\parallel}\equiv
\rho_{p}\rho^{i}\bh_{j}{}^{q}K_{ij}$. For the second KID equation
\eqref{kid2}, one can obtain the projection under the assumption of
Gaussian coordinates. The normal-normal and normal-tangential components of the second KID equations are
\begin{align*}
\begin{split}
\mathrm{normal-normal:}\qquad &\rho_{p}\rho_{q}N^{k}D_{k}(K-\bK)-\rho_{p}\rho_{q}N^{k}K_{ij}(\rho^{i}\bK_{k}{}^{j}+\rho^{j}\bK_{k}{}^{i})\\
+&\rho_{p}\rho_{q}\rho^{i}\rho^{j}(D_{i}N^{k}K_{kj}+D_{j}N^{k}K_{ik})+ \rho_{p}\rho^{i}D_{i}(\rho_{q}
\rho^{j}D_{j}N)\\
=&N(-\bh_{p}{}^{i}\bh_{q}{}^{j}r_{ij}+KK^{\perp}_{pq}+\rho_{p}\rho_{q}\rho^{i}\rho^{j}K_{ik}K_{j}^{k}),\\\\
\mathrm{normal-tangential:} \qquad &h_{p}{}^{i}\rho_{q}\rho^{j}(N^{k}D_{k}K_{ij}+D_{i}N^{k}K_{kj}+D_{j}N^{k}K_{ik})\\
+& \rho_{p}\rho^{i}D_{i}\bD_{q}N\\
=&N(h_{p}{}^{i}\rho_{q}\rho^{j}r_{ij}+KK^{\perp\parallel}+h_{p}{}^{i}\rho_{q}\rho^{j}K_{i}{}^{k}K_{kj}).
\end{split}
\end{align*}

\subsubsection{Time symmetric hypersurfaces}
In order to gain some intuition into the structure of the decomposition
of the KID equations we consider, in first instance, the case of time
symmetric hypersurfaces so that the extrinsic curvature $K_{ij}$
vanishes. Under this assumption, the trapping condition simplifies to
$\bK=0$ i.e the surface $\Sigma$ is minimal. Moroever, the
decomposition of the first KID equation implies that 
\begin{equation}
\bD_{p}{N^{\parallel}}^{p}=0.\label{kiddc1}
\end{equation}
after taking the trace. 
\begin{remark}
{\em Due to the decay condition placed on $N^i$ in the previous
  section, in the time symmetric setting one has that $N^i=0$
  \cite{ChrOMu81,Dai04c}. By performing
  integration by parts on the time symmetric \eqref{AKE}, one finds that $N^i=0$ on
  $\partial\mathcal{S}$. Direct inspection of the KID equations also
  shows that $N^{i}$ must be a Killing vector of $\mcS$ in the time
  symmetric case.}
\end{remark}

In the current setting the decomposition of the second KID reduces to
\[
\bD_{p}\bD_{q}N=N\bh\indices{_{p}^{i}}\bh\indices{_{q}^{j}}r_{ij}.
\]
Taking the trace of the above expression one has that 
\[
\Delta_{\bh}N=N\bh^{ij}r_{ij}=\frac{1}{2}N(\er+\bK^{pq}\bK_{pq}),
\]
where in the second equality it has been used that
\[
\er+\bK^{ij}\bK_{ij}-\bK^{2}=2\bh^{ij}r_{ij},
\]
which is a consequence of the Gauss-Codazzi equation. Accordingly, the
decomposition of the second KID equation implies that 
\begin{equation}
\Delta_{\bh}N-\frac{1}{2}(\er+\bK^{pq}\bK_{pq}) N=0.\label{kiddc2}
\end{equation}
\begin{remark}
{\em This equation has a very similar form to the MOTS stability operator given in \cite{AndMarSim08}.}
\end{remark}

One can perform the same simplification on the normal-normal and normal-tangential components of the
equations. Additionally, using Gaussian normal coordinates one obtains
\begin{subequations}
\begin{alignat}{2}
\mathrm{normal-normal:}&\qquad \rho_{p}\rho^{i}D_{i}(\rho_{q}\rho^{j}D_{j}N)=N(h_{p}{}^{i}h_{q}{}^{j}-\bh_{p}{}^{i}\bh_{q}{}^{j})r_{ij} \label{nncompkid}\\
\mathrm{normal-tangential:}& \qquad  \rho_{p}\rho^{i}D_{i}\bD_{q}N=N \rho_{p}\rho^{i}\bh_{q}{}^{j}r_{ij}\label{nscompkid}
\end{alignat}
\end{subequations}

\section{The KID equations projected onto $\partial \mathcal{S}$}\label{KIDsolvabilitysec}
In this section, we study the existence of solutions to the decomposed
KID equations on an apparent horizon. We first study the
simpler time symmetric case, $K_{ij}=0$ in Subsection
\ref{sec:timesymkid} and move on to the full non-time symmetric case
in Subsection \ref{generalcasesolvability}. 

\subsection{The time symmetric case}\label{sec:timesymkid}
Throughout this section we assume that $K_{ij}=0$. We begin with some
general observations.

\subsubsection{What do Killing vector quantities look like on MOTS in
  a static spacetime?}
Before proceeding to the analysis, we briefly explore the behaviour of
the quantities $X,\Delta_{h}X,X_{i}$ and $\pdv{}{n}X_{i}$ on the
boundary of a static black hole. The
prototypical example of a static black hole is the Schwarzschild black
hole. In this case the outermost MOTS is the event horizon
\cite{BauSha10}. Thus, we only need to consider what happens at $r=2m$
in Schwarzschild coordinates. Now, we need to choose an appropriate
slicing. Choosing the standard Schwarzschild slicing, the shift vector vanishes everywhere by
definition so that $X_{i}=\pdv{}{n}X_{i}=0$. The lapse is
$X=(1-\frac{2m}{r})^{1/2}$ so that on the event horizon
$X=0$. Finally, the Laplacian of the lapse vanishes by virtue of one
of the the static Einstein equations ---namely,  $D_{i}D_{j}X=Xr_{ij}$. Taking the trace and
using that $X=0$ on the event horizon, we obtain
$\Delta_{h}X=0$. Accordingly, the type of boundary conditions given in
Corollary \ref{cornatbo} are physically relevant.

\subsubsection{Existence and uniqueness of solutions}

We now want to discuss the existence of solutions to equation
\eqref{kiddc2}. Recall that this equation 
arises from the  2+1 decomposition of the second KID equation,
equation along the 2-surface $\partial\mathcal{S}$. Letting  
---namely
\begin{equation}\label{bkid1}
\mathcal{K}N\equiv \Delta_{\bh}N-\frac{1}{2}(\er+\bK_{pq}\bK^{pq}) N=0.
\end{equation}

Starting from the quantity $N\mathcal{K}N$, integrating over $\partial\mathcal{S}$
and integrating by parts, one finds that 
\begin{align}\label{IbPkid2}
  \oint_{\partial\mathcal{S}}
\frac{1}{2}(\er+\bK_{pq}\bK^{pq})N^{2}\mathrm{d}S=\oint_{\partial\mathcal{S}}N\Delta_{\bh}N\mathrm{d}S = -\oint _{\partial\mathcal{S}}\bD_{i}N\bD^{i}N\mathrm{d}S.
\end{align}
Thus, if $\er>-\bK_{pq}\bK^{pq}$, in particular if $\er>0$ then $N=0$
is the only solution to equation \eqref{bkid1}. To further this
analysis, assume that the MOTS is stable. In this case, since we have
the MOTS is a surface immersed in a time symmetric slice, the
stability operator takes exactly the form of \eqref{motsstabilityoperator}.
 Using the Gauss-Codazzi equation the expression \eqref{motsstabilityoperator} can be rewritten as
\[
\mathcal{L}=-\Delta_{\bh}+\frac{1}{2}(\er- \bK_{pq}\bK^{pq}). 
\]

\begin{remark}
{\em Note the remarkable similarity with the equation for the
tangential component of the second KID equation above
\eqref{bkid1}. In fact, notice that
\[
\mathcal{K} = \mathcal{L} + |\bar{K}|^2.
\]}
\end{remark}

Making use of the above observations one obtains the following lemma:

\begin{lemma}
Assuming the MOTS is stable, the only solution to \eqref{bkid1} is the trivial one, i.e. $N=0$.
\end{lemma}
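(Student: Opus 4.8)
The plan is to pit the nonnegativity of the MOTS stability operator against the sign of the zeroth-order term produced by $|\bar{K}|^{2}$, using the algebraic identity $\mathcal{K}=\mathcal{L}+|\bar{K}|^{2}$ recorded in the remark preceding the lemma. Concretely, I would start from a solution $N$ of $\mathcal{K}N=0$ and rewrite it as $\mathcal{L}N=-|\bar{K}|^{2}N$. Multiplying pointwise by $N$ and integrating over the closed surface $\partial\mathcal{S}$ gives the identity
\[
\oint_{\partial\mathcal{S}}N\mathcal{L}N\,\mathrm{d}S=-\oint_{\partial\mathcal{S}}|\bar{K}|^{2}N^{2}\,\mathrm{d}S\le 0,
\]
which is just a repackaging of the integration by parts already performed in \eqref{IbPkid2}. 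As a sanity check, the cruder sufficient condition $\er>-|\bar{K}|^{2}$ ---in particular $\er>0$--- already forces $N=0$ from \eqref{IbPkid2} alone, without invoking stability at all.

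Next I would feed in stability. In the time symmetric setting $\mathcal{L}$ is the self-adjoint operator \eqref{motsstabilityoperator}, and stability of the MOTS is precisely the statement that its principal eigenvalue $\lambda_{0}$ is nonnegative. For a self-adjoint Schr\"{o}dinger-type operator on a closed surface this is equivalent to nonnegativity of the associated quadratic form, so that
\[
\oint_{\partial\mathcal{S}}N\mathcal{L}N\,\mathrm{d}S\ge\lambda_{0}\oint_{\partial\mathcal{S}}N^{2}\,\mathrm{d}S\ge 0.
\]
Combined with the previous inequality this sandwiches the middle quantity between $0$ and $0$, forcing both $\oint_{\partial\mathcal{S}}N\mathcal{L}N\,\mathrm{d}S=0$ and $\oint_{\partial\mathcal{S}}|\bar{K}|^{2}N^{2}\,\mathrm{d}S=0$. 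If the horizon is strictly stable, $\lambda_{0}>0$, then $\lambda_{0}\oint_{\partial\mathcal{S}}N^{2}\,\mathrm{d}S\le 0$ immediately yields $N=0$, which is the assertion of the lemma.

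The one delicate point---and what I expect to be the main obstacle---is the borderline case of marginal stability, $\lambda_{0}=0$. There the sandwich no longer kills $N$ outright: a nontrivial $N$ would have to saturate the Rayleigh quotient and hence be a principal eigenfunction of $\mathcal{L}$ with eigenvalue zero. To handle this I would invoke the strong maximum principle, by which the principal eigenfunction of $\mathcal{L}$ on the closed surface $\partial\mathcal{S}$ may be taken strictly positive and therefore nowhere vanishing; then $\oint_{\partial\mathcal{S}}|\bar{K}|^{2}N^{2}\,\mathrm{d}S=0$ forces $|\bar{K}|^{2}\equiv 0$, so that $\Sigma$ is totally geodesic and $\mathcal{K}$ collapses to $\mathcal{L}=-\Delta_{\bh}+\tfrac{1}{2}\er$ carrying a positive zero mode. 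This is a genuinely degenerate, static-type configuration that one must either exclude separately or rule out by restricting to a strictly stable MOTS; outside of it the proof reduces to the single line of integration by parts played off against the stability inequality above.
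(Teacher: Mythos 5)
Your proof is correct and rests on the same mechanism as the paper's: the quadratic form of $-\mathcal{K}$ dominates that of the stability operator $\mathcal{L}$ because the two differ by the nonnegative zeroth-order term $|\bK|^{2}$, and stability makes the form of $\mathcal{L}$ nonnegative. The paper packages this as a Rayleigh--Ritz comparison of lowest eigenvalues, $\lambda\leq\mu$, and then cites Lemma 4.2(iii) of \cite{AndMarSim08} to conclude that the kernel of $\mathcal{K}$ is trivial; you instead test the equation directly against $N$ and run a sandwich argument, which is more elementary and self-contained (your first display is indeed just \eqref{IbPkid2} rearranged). The genuinely different --- and valuable --- part of your proposal is the treatment of the borderline case. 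The paper asserts that stability gives $\lambda>0$, but in the terminology of \cite{AndMarSim08} ``stable'' means principal eigenvalue $\geq 0$, with $>0$ reserved for strictly stable; so the paper's proof, exactly like your sandwich, really closes only under strict stability. Your maximum-principle analysis of the marginal case $\lambda_{0}=0$ is sound: a nontrivial $N$ would have to be the nowhere-vanishing principal eigenfunction of $\mathcal{L}$, forcing $\bK_{pq}\equiv 0$ and a zero mode of $-\Delta_{\bh}+\tfrac{1}{2}\er$, a degenerate totally geodesic configuration that neither your argument nor the paper's excludes. In this sense your proposal is not weaker than the paper's proof; it makes explicit a hypothesis (strict stability, or exclusion of that degenerate case) which the paper uses implicitly. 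One minor point: the identity in the remark preceding the lemma should read $-\mathcal{K}=\mathcal{L}+|\bK|^{2}$ with the paper's sign conventions; your displayed equations (e.g. $\mathcal{L}N=-|\bK|^{2}N$) use the correct sign, so nothing in your argument is affected.
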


\begin{proof}
In the following let $\lambda$ and $\mu$ denote, respectively, the lowest positive eigenvalues
of $\mathcal{L}$ and $\mathcal{K}$. We proceed to compare $\lambda$
and $\mu$. For this, we make use of the Rayleigh-Ritz characterisation
of these eigenvalues. Namely, one has that 
\begin{eqnarray*}
&& \lambda = \inf_u \oint_{\partial\mathcal{S}} \Big(|\bar{D}u|^2
   +\tfrac{1}{2}(\bar{r} -|\bar{K}|^2)u^2\Big)\mathrm{d}S ,\\
&& \mu = \inf_u \oint_{\partial\mathcal{S}} \Big(|\bar{D}u|^2
   +\tfrac{1}{2}(\bar{r} +|\bar{K}|^2)u^2\Big)\mathrm{d}S, 
\end{eqnarray*}
where the infimum is take over functions $u$ on $\partial\mathcal{S}$
with $|| u ||^2_{L^2}=1$. It then follows that $\lambda \leq \mu$.
Thus, if $\partial\mathcal{S}$ is a stable MOTS, then $\lambda>0$ and
accordingly $\mu>0$. Now, using Lemma 4.2, (iii) in \cite{AndMarSim08}
it follows that the Kernel of $\mathcal{K}$ is trivial. Thus,
necessarily $N=0$.
\end{proof}

The existence and uniqueness of the trivial solution in the time symmetric setting implies that these are natural boundary values to place on the \eqref{AKE} here. Interestingly, this coincides with the boundary values of lemma \ref{vanishingbdrylemma} so that in the time symmetric case, with the boundary values we have just prescribed, it turns out that necessarily one has a solution to the KID equations and therefore the spacetime evolving from this initial data will have a killing vector.

\subsubsection{Measuring the deviation from staticity on $\partial\mathcal{S}$}
In the previous subsection it has been shown that, in the time symmetric setting, a natural
prescription of the lapse $N$ on the stable MOTS $\partial\mathcal{S}$
is $N=0$. Moreover, the Einstein evolution equations under time symmetry imply that
$\Delta_{h}N=0$.

Given the choice
\begin{equation}
N=\Delta_{h}N=0, \qquad \mbox{on} \quad \partial\mathcal{S},
\label{BasicBoundaryConditionsTimeSymmetric}
\end{equation}
it is natural to ask how much of the (time symmetric) KID equations are
satisfied on $\partial\mathcal{S}$. Setting $N=\Delta_{h}N=0$ into
equations \eqref{nncompkid} and \eqref{nscompkid} yields 
\begin{alignat*}{2}
\mathrm{normal-normal:}&\qquad \rho_{p}\rho^{i}D_{i}(\rho_{q}\rho^{j}D_{j}N)=0,\\
\mathrm{normal-surface:}& \qquad  \rho_{p}\rho^{i}D_{i}\bD_{q}N= 0.
\end{alignat*}
For the `normal-normal' component observe that, using Gaussian
coordinates,  it follows from the conditions
\eqref{BasicBoundaryConditionsTimeSymmetric} that
\begin{align*}
0&=\rho_{p}\rho_{q}\rho^{i}\rho^{j}D_{i}D_{j}N\\
&=\rho_{p}\rho_{q}\left(h^{ij}-\bh^{ij}\right)D_{i}D_{j}N\\
&=\rho_{p}\rho_{q}\Delta_{\bh}N,
\end{align*}
so that $\Delta_{\bh}N=0$. This is consistent with equation
\eqref{kiddc2}. Thus, solving the intrinsic equation \eqref{kiddc2}
also solves the normal-normal equation. On the other hand, the
normal-surface component does not fully vanish as a consequence of the
conditions \eqref{BasicBoundaryConditionsTimeSymmetric}. Accordingly,
one obtains an extra condition that needs to be imposed to satisfy the
time symmetric KID equations on $\partial\mathcal{S}$.  The
observation is contained in the following lemma in which we derive a
\emph{Dain-like} invariant characterising staticity of of the initial
data set on the MOTS $\partial\mathcal{S}$.

\begin{lemma}
  Given time symmetric initial data set, let 
\[
N=\Delta_{h}N=0, \qquad \mbox{on} \quad \partial\mathcal{S}.
\]
Then the time symmetric KID equations are satisfied on $\partial
\mathcal{S}$ if and only if $\omega=0$ where
\begin{equation}
\omega \equiv \int_{\partial\mathcal{S}} |\bK|^{2}|\msD N|^{2},
\end{equation}
with $|\bK|^{2}=\bK_{pq}\bK^{pq}$ and $D^{\perp}\equiv \rho^{i}D_{i}$.
\end{lemma}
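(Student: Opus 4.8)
The plan is to reduce the problem to the single remaining component of the KID equations, namely the normal-surface (normal-tangential) equation \eqref{nscompkid}, which under the boundary conditions $N=\Delta_h N=0$ is the only one not automatically satisfied. First I would recall that the preceding discussion has already disposed of the intrinsic (tangential-tangential) equation \eqref{kiddc2} and the normal-normal equation \eqref{nncompkid}: the former is solved by $N=0$ by the stability lemma, and the latter was shown above to follow from $N=\Delta_h N=0$ via the identity $\rho_p\rho_q\rho^i\rho^j D_i D_j N = \rho_p\rho_q\Delta_{\bh}N=0$. Hence satisfying the full time symmetric KID system on $\partial\mathcal{S}$ is equivalent to satisfying the normal-surface equation
\[
\rho_{p}\rho^{i}D_{i}\bD_{q}N=0.
\]

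Next I would unpack this tensorial equation. Using $\bD_q N = \bh_q{}^j D_j N$ and that tangential derivatives of $N$ vanish on $\partial\mathcal{S}$ (since $N=0$ there, so $N$ is constant along the surface), I would write out $\rho^i D_i(\bh_q{}^j D_j N)$ and let the derivative of the projector $\bh_q{}^j = h_q{}^j-\rho_q\rho^j$ act, producing a term involving $\rho^i D_i\rho^j$ (the acceleration, which vanishes in Gaussian normal coordinates) and, crucially, a term $\bK_{q}{}^{j}D_j N$ or $\bK_{q}{}^{i}\rho^k D_k N$ coming from the second fundamental form $\bK_{ij}=D_i\rho_j$ contracted against the normal derivative $\msD N=\rho^i D_i N$. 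The upshot should be that the left-hand side reduces, after imposing the vanishing of the tangential derivatives and the Gaussian-coordinate condition $a_i=0$, to an expression proportional to $\bK_q{}^j$ acting on $\msD N$ — schematically $\rho_p\,\bK_q{}^{i}\,D_i(\msD N)$ together with a $\bK$-times-$\msD N$ piece. This is the step I expect to be the main obstacle: carefully tracking which projector and connection terms survive, and confirming that the residual object is built precisely from $\bK$ and $\msD N$ rather than from additional curvature terms.

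Having identified that the obstruction to solving the normal-surface equation is controlled by the contraction of $\bK$ with $\msD N$, I would then construct the scalar invariant by the standard Dain-type procedure: contract the residual tensor equation against an appropriate combination of $\rho$ and $\bh$, multiply by $N$ or by $\msD N$, and integrate over the closed surface $\partial\mathcal{S}$. Integration by parts, together with the compactness of the MOTS (no boundary terms) and the already-established relations $N=\Delta_h N=0$, should convert the statement "the normal-surface equation holds" into the vanishing of a manifestly nonnegative integral. The natural candidate is exactly
\[
\omega \equiv \int_{\partial\mathcal{S}} |\bK|^{2}|\msD N|^{2}\,\mathrm{d}S,
\]
which is nonnegative and vanishes iff $\bK_{pq}\msD N=0$ pointwise, matching the obstruction tensor from the previous paragraph.

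Finally I would close the equivalence in both directions. For the forward implication, if the KID equations are satisfied on $\partial\mathcal{S}$ then the normal-surface equation holds, forcing the integrand to vanish and hence $\omega=0$. For the converse, if $\omega=0$ then $|\bK|^2|\msD N|^2=0$ everywhere, so on the region where $\bK\neq 0$ one has $\msD N=0$; combined with $N=0$, $\Delta_{\bh}N=0$, the vanishing tangential derivatives, and the already-verified normal-normal equation, this supplies all the data needed to conclude that the normal-surface equation — and therefore the complete set of decomposed time symmetric KID equations \eqref{fulldecompkideq1}, \eqref{kiddc2}, \eqref{nncompkid}, \eqref{nscompkid} — holds on $\partial\mathcal{S}$. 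Assembling these gives the stated if-and-only-if characterisation, with $\omega$ playing the role of a Dain-like invariant measuring the deviation from staticity on the MOTS.
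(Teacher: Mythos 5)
Your opening move coincides with the paper's: under $N=\Delta_h N=0$ the intrinsic equation \eqref{kiddc2} and the normal-normal equation \eqref{nncompkid} are disposed of, and everything hinges on the normal-surface equation \eqref{nscompkid}. But your prediction of what that equation reduces to is wrong. Carrying out the computation (as the paper does), using that tangential derivatives of $N$ vanish, that $\bK_{q}{}^{i}\rho_{i}=0$, and Gaussian coordinates $a_i=0$, every extrinsic-curvature term drops out and one is left with
\[
0=\bh_{p}{}^{i}\rho_{q}\rho^{j}D_{i}D_{j}N=\rho_{q}\bD_{p}\bigl(\rho^{j}D_{j}N\bigr),
\]
i.e.\ the obstruction is simply that $\msD N$ be constant along $\partial\mathcal{S}$ --- there is no residual tensor of the form $\rho_{p}\bK_{q}{}^{i}D_{i}(\msD N)$ plus a $\bK\cdot\msD N$ piece, as you conjecture. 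Consequently your pointwise "obstruction tensor" $\bK_{pq}\msD N$ is not what the reduction produces, and the matching you invoke between it and the integrand of $\omega$ has no basis.

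This exposes the genuine gap: the passage from "$\bD_{p}\msD N=0$" to "$\omega=0$" is not a routine contract-and-integrate-by-parts step; it is the core of the lemma and requires a separate pointwise identity, $\Delta_{\bh}\msD N=|\bK|^{2}\msD N$ on $\partial\mathcal{S}$, which the paper derives by (a) taking the \emph{normal derivative} of the intrinsic equation \eqref{kiddc2} (using $N=0$ to kill the Leibniz remainder), (b) commuting three covariant derivatives in $\msD\Delta_{\bh}N$ to generate a Riemann term, and (c) evaluating that term via Gauss--Codazzi, $\bh^{ij}\rho^{k}\rho^{l}r_{ljki}=\tfrac{1}{2}(\er+|\bK|^{2})$. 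Only with this identity does integration by parts on the closed surface give $\int_{\partial\mathcal{S}}|\bD\,\msD N|^{2}=-\int_{\partial\mathcal{S}}\msD N\,\Delta_{\bh}\msD N=-\omega$, which is what ties the two conditions together in \emph{both} directions. Your converse also fails on its own terms: from $\omega=0$ you only get $\msD N=0$ where $\bK\neq 0$; on any open region of $\partial\mathcal{S}$ where $\bK$ vanishes the integrand carries no information, so you cannot conclude $\bD_{q}\msD N=0$ pointwise there. The integral identity above is precisely what closes that loop, and it is absent from your outline.
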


\begin{proof}
For the normal-surface component of the second KID equation one has
that 
\begin{align*}
0&= \bh_{p}{}^{i}\rho_{q}\rho^{j}D_{i}D_{j}N\\
&= \bh_{p}{}^{i}\rho_{q}\left(D_{i}\left(\rho^{j}D_{j}N\right)-D_{j}ND_{i}\rho^{j}\right)\\
&=\bh_{p}{}^{i}\rho_{q}\left(D_{i}\left(\rho^{j}D_{j}N\right) + D_{j}N\bK_{i}{}^{j}\right)\\
&=\rho_{q}\bD_{p}\left(\rho^{j}D_{j}N\right).
\end{align*}
Thus, this condition is equivalent to the statement that
$\rho^{i}D_{i}N$ is constant along $\partial\mathcal{S}$. We can use
this observation to construct a quantity that measures the
non-staticity of the boundary $\partial \mathcal{S}$. Taking the
$L^{2}$ norm of the quantity $\bD_{i}\msD N$ on $\Sigma$, the above
condition becomes, by integrating by parts
\[
0=\int_{\partial\mathcal{S}} \bD_{i}\msD N\bD^{i}\msD N = -\int_{\partial\mathcal{S}}\msD N \Delta_{\bh}\msD N.
\]
In order to simplify this further, recall equation \eqref{kiddc2}. Taking the normal derivative of this equation yields
\[
\msD\Delta_{\bh}N-\frac{1}{2}(\er+|\bK|^{2})\msD N=0
\]
where we have used the Leibniz rule on the second term and that $N=0$
on $\partial\mathcal{S}$. To use this expression in the
integral above, one commutes derivatives in the first term. Using the
assumption of Gaussian coordinates 
\begin{align*}
\msD\Delta_{\bh}N&=\bh^{ij}\rho^{k}D_{k}D_{i}D_{j}N\\
&=\bh^{ij}\rho^{k}(r^{l}{}_{jki}D_{l}N+D_{i}D_{k}D_{j}N).
\end{align*}
In the second term, we now use the Leibniz rule to obtain 
\begin{align*}
\bh^{ij}\rho^{k}D_{i}D_{k}D_{j}N&=\Delta_{\bh}\msD N-\bh^{ij}D_{i}(D_{j}\rho^{k} D_{k}N)-\bK^{jk}D_{k}D_{j}N\\
&=\Delta_{\bh}\msD N-\bh^{ij}D_{i}(\bK_{j}{}^{k} D_{k}N)-\bK^{pq}(\bD_{p}\bD_{q}N+\bK_{pq}\msD N)\\
&=\Delta_{\bh}\msD N-\bh^{ij}\bK_{j}{}^{k} D_{i}(\bD_{k}N) - |\bK|^{2}\msD N\\
&=\Delta_{\bh}\msD N - |\bK|^{2}\msD N
\end{align*}
where we have, again, used that $N=0$ on $\partial\mathcal{S}$ and
changed the 3-covariant derivative to the 2-covariant derivative by
contraction with an intrinsic quantity. To get to the fourth line one uses that $N=0$ on $\partial\mathcal{S}$. To take care of the
Riemann tensor, we observe that 
\begin{align*}
\bh^{ij}\rho^{k}r^{l}{}_{jki}D_{l}N&=\bh^{ij}\rho^{k}r^{l}{}_{jki}h_{l}{}^{m}D_{m}N\\
&=\bh^{ij}\rho^{k}r^{l}{}_{jki}(\bh_{l}{}^{m}+\rho_{l}\rho^{m})D_{m}N\\
&=\bh^{ij}\rho^{k}r^{l}{}_{jki}\rho_{l}\rho^{m}D_{m}N\\
&=\bh^{ij}\rho^{k}\rho^{l}r_{ljki}\msD N.
\end{align*}
Now, we can apply the Gauss-Codazzi identity to obtain
\begin{align*}
\bh^{ij}\rho^{k}\rho^{l}r_{ljki}=\frac{1}{2}(\er+|\bK|^{2}).
\end{align*}
Thus,
\[
\msD\Delta_{\bh}N=\frac{1}{2}(\er-|\bK|^{2})\msD N + \Delta_{\bh}\msD N, 
\]
so that we finally obtain
\[
 \Delta_{\bh}\msD N=\frac{1}{2}(\er+|\bK|^{2})\msD N-\frac{1}{2}(\er-|\bK|^{2})\msD N = |\bK|^{2}\msD N.
\]
Thus, we can rewrite the condition that the second KID equation is
satisfied in terms of the vanishing of 
\[
\omega \equiv \int_{\Sigma} |\bK|^{2}|\msD N|^{2}.
\]
In other words, in time symmetric initial data, if $N=\Delta_{h}N=0$
on the $\partial \mathcal{S}$ then the KID equations will be satisfied
at the boundary if and only if $\omega=0$. 
\end{proof}

\subsection{The non-time symmetric case}\label{generalcasesolvability}

Having shown in the previous section that the KID equations can be
used to choose suitable
boundary values for the approximate Killing equation as well as constructing an
invariant characterising the stationarity at the apparent horizon, we
now move on to the non-time symmetric case, $K_{ab}\neq 0$. In this
setting, the decomposition of the KID equations is much more
complicated. We can no longer consider $N^{i}=0$ and thus have to
study a system of equations. Using the time symmetric case as a
blueprint, we begin by manipulating the equations in order to obtain a
system of equations that has a similar form to the operator
$\mathcal{K}$ above so that we can investigate how much of the full
KID equations can be satisfied on a boundary $\partial\mathcal{S}$
which is assume to be a MOTS.

\subsubsection{Intrinsic equations over $\partial\mathcal{S}$}
We begin with the second decomposed KID equation
\eqref{fulldecompkideq2}. Taking the trace and using that
$\bh^{pq}\rho_{p}=0$ yields
\begin{eqnarray*}
&& N^{k}\Bigl( \bh^{pq}D_{k}K_{pq}^{\parallel}-K_{iq} \bigl(\rho^{i}
   \bK_{k}{}^{q}\bigr) - K_{pj} \bigl(\rho^{j} \bK_{k}{}^{p}
   \bigr)\Bigr)+ \bh^{ij}D_{i}N^{k}K_{kj}+\bh^{ij}D_{j}N^{k}K_{ik} +
   \Delta_{\bh}N-\rho^{k}D_{k}N\bK\\
  && \hspace{2cm} =N\left( \bh^{ij}r_{ij}+\bh^{ij}K_{ij}K+\bh^{ij}K_{ik}K\indices{^{k}_{j}}\right).
\end{eqnarray*}
One can simplify the first term as follows
\begin{align*}
\bh^{pq}D_{k}K_{pq}^{\parallel}&=D_{k}(\bh^{ij}K_{ij})-K_{ij}\bh\indices{_{p}^{i}}\bh\indices{_{q}^{j}}D_{k}(\rho^{p}\rho^{q})\\
&=D_{k}\bK,
\end{align*}
where, to get from the first to the second line, one uses
$\bh^{pq}\rho_{p}=0$ and the MOTS condition
\eqref{motscondition}. Thus, one obtains
\begin{eqnarray*}
&& N^{k}D_{k}\bK-N_{\parallel}^{A}K_{ij}\rho^{i}\bK_{A}{}^{j} -
  N^{A}K_{ij}\rho^{j} \bK_{A}{}^{i}+
   \bh^{ij}D_{i}N^{k}K_{kj}+\bh^{ij}D_{j}N^{k}K_{ik}+\Delta_{\bh}N-\rho^{k}D_{k}N\bK\\
  && \hspace{2cm} =N\left( \bh^{ij}r_{ij}+\bK K+\bh^{ij}K_{ik}K\indices{^{k}_{j}}\right).
\end{eqnarray*}

\emph{Now, we assume that the quantities $\rho_{k}N^k$ and $\pdv{}{\rho}N^{i}$ are and prescribed on $\partial\mathcal{S}$.} We separate
terms into their normal and tangential components as follows:
\begin{align*}
    N^kD_k \bK &= (\bh^k_i+\rho_i\rho^k)N^i D_k \bK\\
    &=N_{\parallel}^A\bD_A \bK +\rho_iN^i\rho^k D_k\bK,
\end{align*}
and
\begin{align*}
    h^{ij}D_i N^k K_{ij} &= h^{ij}D_i N_{\parallel}^k K_{kj} + h^{ij}D_i(\rho^k\rho_l N^l)K_kj\\
    &=\bD^A N_{\parallel}^B K^{\parallel}_{AB} - \rho^l K_{li}\bK^i_AN_{\parallel}^A+h^{ij}D_i(\rho^k\rho_l N^l)K_{kj}.
\end{align*}
Thus,  the second KID equation on $\partial\mathcal{S}$ implies that
\[
\Delta_{\bh}N - \rho^{k}D_{k}N\bK
    +2\bD^A N_{\parallel}^B K^{\parallel}_{AB} +N_{\parallel}^A(\bD_A \bK-4K_{ij}\rho^{i}\bK_{A}{}^{j}) -N\left( \bh^{ij}r_{ij}+\bK K+\bh^{ij}K_{ik}K\indices{^{k}_{j}}\right)=\tilde{F}
  \]
  where
\[
  \tilde{F}\equiv-2h^{ij}D_i(\rho^k\rho_l
  N^l)K_{kj}-\rho_iN^i\rho^k D_k\bK.
\]
In order to remove the normal derivate of $N$ in the above equation, consider the trace of the first decomposed KID equation \eqref{fulldecompkideq1}
\[
N\bK = N_{k}\rho^{k }-\bD_{A}N_{\parallel}^{A}.
\]
Taking the normal derivative of this quantity, one obtains an expression for the normal derivative of $N$ in terms of prescribed quantities and thus the second KID equation on $\partial\mcS$ can be written 
\begin{align*}
\Delta_{\bh}N
    +(2 K_{\parallel}^{AB}-\bK^{AB})\bD_A N^{\parallel}_B &+N_{\parallel}^A(\bD_A \bK-4K_{ij}\rho^{i}\bK_{A}{}^{j}-\rho^{k}r_{kA})\\& -N\left( \bh^{ij}r_{ij}+\bK K+\bh^{ij}K_{ik}K\indices{^{k}_{j}}-\rho^{k}\bD_{k}\bK\right)=F
\end{align*}
 where $F$ now includes terms involving the normal derivative of $N^{i}$.
 
\medskip
For the first KID equation, we cannot just take a derivative tangential to
$\partial\mathcal{S}$ and take the trace as we did in the time symmetric case as the resulting equation is not elliptic. Instead, we
consider the trace free part of \eqref{fulldecompkideq1}, namely
\[
NK^{\parallel}_{AB}+\bD_{(A}N^{\parallel}_{B)}+N_{k}\rho^{k}\bK_{AB}-\frac{1}{2}\bh_{AB}(\bD_C N_{\parallel}^C+\bK(N+N_k\rho^k))=0.
\]
Taking the divergence of this equations and using the special form of
the Riemann tensor in 2-dimensions we can write
\[
\Delta_{\bh}N^{\parallel}_B +(2K^{\parallel}_{AB}-\bK\bh_{AB})\bD^A N +RN^{\parallel}_B +N(2\bD^AK^{\parallel}_{AB}-\bD_B\bK)=F_B,
\]
where
\[
  F_B \equiv \bD_B(\bK N_k\rho^k)-2\bD^A(N_k\rho^k\bK_{AB}).
  \]
Thus, we have the following system of equations on
$\partial\mathcal{S}$:
\begin{subequations}
\begin{align}
&\Delta_{\bh}N^{\parallel}_B +(2K^{\parallel}_{AB}-\bK\bh_{AB})\bD^A N +RN^{\parallel}_B +N(2\bD^AK^{\parallel}_{AB}-\bD_B\bK)=F_B\label{kidwsourceterms1}\\
\begin{split}
&\Delta_{\bh}N
    +(2 K_{\parallel}^{AB}-\bK^{AB})\bD_A N^{\parallel}_B +N_{\parallel}^A(\bD_A \bK-4K_{ij}\rho^{i}\bK_{A}{}^{j}-\rho^{k}r_{kA})\\&\hspace{13.5 em} -N\left( \bh^{ij}r_{ij}+\bK K+\bh^{ij}K_{ik}K\indices{^{k}_{j}}-\rho^{k}\bD_{k}\bK\right)=F
\label{kidwsourceterms2}
\end{split}
\end{align}
\end{subequations}
where $F$ and $F_B$ are source terms. These source terms are
completely determined in terms of the intrinsic geometry of
$\partial\mathcal{S}$, the extrinsic curvatures $K_{ij}$ and
$\bar{K}_{AB}$, the normal component of $N^i$, and the normal derivative of $N^{i}$, $\pdv{}{\rho}N^{i}$. 
\begin{remark}
\emph{As in the time symmetric case, the system \eqref{kidwsourceterms1}-\eqref{kidwsourceterms2} does not incorporate all parts of the KID equation on $\partial \mcS$. Equation \eqref{kidwsourceterms1} is a formulation of the tracefree part of \eqref{fulldecompkideq1} while \eqref{kidwsourceterms2} is the trace of \eqref{fulldecompkideq2}.}
\end{remark}
\medskip
The system \eqref{kidwsourceterms1}-\eqref{kidwsourceterms2} is manifestly
elliptic for $(N, N^{\parallel}_B)$. It can be succinctly written in
the matricial form
\[
    \Delta_{\bh} \vec{N}+T^A\bD_A \vec{N}+C\cdot\vec{N}=\vec{F}
\]
where $T_A$ and $C$ are $3\times3$ matrices and
$\vec{N}\equiv(N,N^{\parallel}_A)$. We note the formal similarity of
this equation with the time symmetric equation \eqref{bkid1}. However,
in contrast to the time symmetric case, it is not clear how to connect
the solvability of the system
\eqref{kidwsourceterms1}-\eqref{kidwsourceterms2} to, for example, the
stability of the MOTS $\partial\mathcal{S}$. This is an interesting
question which falls beyond the scope of the present article. 

\medskip
In order to provide some intuition into the consequences of the system
\eqref{kidwsourceterms1}-\eqref{kidwsourceterms2}, for the rest of
this section, we make the following assumption:

\begin{assumption}\label{assumption:soln}
  The elliptic operator associated to the system
\eqref{kidwsourceterms1}-\eqref{kidwsourceterms2} as well as its
adjoint have trivial Kernel.
  \end{assumption}

  As mentioned earlier, the above assumption ensures the existence of
  a unique solution 
$(N,N^{\parallel}_A)$ to the system
\eqref{kidwsourceterms1}-\eqref{kidwsourceterms2}. Observe that this
is independently of whether the 3-manifold $\mathcal{S}$ admits a
solution to the KID equations. However, it is important to note that
this system will always be solved by a solution to the KID
equations. The values obtained as solutions to the system
\eqref{kidwsourceterms1}-\eqref{kidwsourceterms2} then provide
boundary values for the AKE boundary value problem in the following
way: $N$ and the tangential components of $N^{i}$ are obtained through
solving the system. The value of $\Delta_{h}N$ can be obtained by
using Einstein's equations and the other quantities were already
prescribed. One could obtain values for the quantities that we
prescribed here through analysing the normal-normal and
normal-tangential components of the decomposed KID equations
independently of the above analysis. These components are derived in
appendix \ref{Appendix:DecompsotionKIDs}.

\subsubsection{Constructing an invariant on $\partial\mathcal{S}$}

Constructing an invariant on a MOTS $\partial\mathcal{S}$ immersed on
a non-time symmetric hypersurface $\mathcal{S}$ is more involved than
in the time symmetric case. It is important to note that the way one
derives an invariant is not unique. We outline below one possible way
is to consider the parts of the decomposed KID equations on
$\partial\mathcal{S}$ that are not included in the system \eqref{kidwsourceterms1}-\eqref{kidwsourceterms2}.

\medskip
For example, under Assumption \ref{assumption:soln}, the solution to the system \eqref{kidwsourceterms1}-\eqref{kidwsourceterms2} may not solve the trace of the first decomposed KID equation \eqref{fulldecompkideq1}:
\[
\mathcal{Q}\equiv N\bK - N_{k}\rho^{k }+\bD_{A}N_{\parallel}^{A}=0.
\]
The non-zero value of $\mathcal{Q}$, in conjunction with the tracefree part of the second decomposed KID equation and the normal-normal and normal-tangential components of the decomposed KID equations derived in appendix \ref{Appendix:DecompsotionKIDs}, will characterise the non-stationarity of $\mcS$ on the boundary 2-surface $\partial \mcS$. In particular, the sum of the $L^{2}$ norms of these quantities will provide a geometric invariant that incorporates all of the above quantities.

\section{Conclusion}

We have shown that there exists solutions to the \eqref{AKE},
approximate Killing vectors, on asymptotically Euclidean initial data
along with boundary conditions on an inner boundary 2-dimensional
surface. We associated this boundary with a boundary of a black hole
characterised by a MOTS. In the time symmetric case,  we have then constructed invariants on this
MOTS that classify the staticity of the initial data set. In
particular, the invariant $\omega$ vanishes when there exists a
Killing vector.

\medskip
Combining the analysis of the latter sections with the main theorem
allows us to write down the following theorem in the time symmetric
case:

\begin{theorem}
Let $\lambda$ be the Dain invariant associated to the boundary value
problem
\begin{align*}
&\mathscr{P}\circ\mathscr{P}^{*}
\begin{pmatrix}
X\\
X^{i}
\end{pmatrix}
=0 \qquad \text{on } \mathcal{S},\\
&\begin{cases}
X\vert_{\partial \mathcal{S}} =0,\\
\Delta_{h} X \vert_{\partial \mathcal{S}} =0,\\
X^{i}\vert_{\partial \mathcal{S}}=0,\\
\pdv{\rho}X^{i}\vert_{\partial \mathcal{S}}=0,
\end{cases} 
 \end{align*}
in a time symmetric complete, smooth asymptotically Euclidean
initial data set for the Einstein vacuum field equations with one
 asymptotic end and an inner boundary $\partial\mathcal{S}$. Then, if
 on the one hand 
 $\lambda=0$ then the initial data is static. On the other hand, if
 the 2-surface invariant $\omega$ is non-zero on $\partial\mathcal{S}$ then the initial data cannot
 be static ---and thus $\lambda\neq 0$. 
\end{theorem}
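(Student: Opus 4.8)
The statement is a synthesis of the main existence theorem of Section~\ref{Section:MainExistenceTheorem} with the time-symmetric results of Section~\ref{sec:timesymkid}, so my plan is to assemble those ingredients rather than to compute anew. First I would invoke the existence theorem with the prescribed vanishing boundary data $f=g=f^i=h^i=0$ to produce the solution $(X,X^i)$ whose asymptotic expansion $X=\lambda|x|+\vartheta$, with $\vartheta,X^i\in H^\infty_{1/2}$, is precisely what defines Dain's invariant $\lambda$. These are exactly the homogeneous boundary conditions appearing in Lemma~\ref{vanishingbdrylemma}, which is the tool for the first implication.

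For the implication $\lambda=0\Rightarrow$ static I would argue as follows. Setting $\lambda=0$ removes the growing term, so $X=\vartheta\in H^\infty_{1/2}$ and the full pair $(X,X^i)$ is a non-trivial solution of $\mathscr{P}\circ\mathscr{P}^*(X,X^i)=0$ lying in $H^\infty_{1/2}$ and satisfying the vanishing boundary data. This is the regime of Lemma~\ref{vanishingbdrylemma}, whose conclusion gives $\mathscr{P}^*(X,X^i)=0$; by Definition~\ref{defstat}, and since we are in the time-symmetric setting, this means the data is static. This establishes the first assertion.

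For the second implication I would run the contrapositive chain ``static $\Rightarrow\omega=0$''. If the data is static there is a non-trivial static KID, which in the time-symmetric case has vanishing shift and hence is a lapse $N$ solving the KID equations~\eqref{kid1}-\eqref{kid2} globally, in particular on $\partial\mcS$. On the stable MOTS the traced second KID equation collapses to $\mathcal{K}N=0$ of~\eqref{bkid1}, whose only solution is $N=0$ (by the stable-MOTS lemma), while the evolution equations force $\Delta_hN=0$ on $\partial\mcS$; thus the hypotheses $N=\Delta_hN=0$ of the invariant lemma are met by the static lapse. That lemma identifies the surviving normal-surface content of the KID equations on $\partial\mcS$ with the vanishing of $\omega=\int_{\partial\mcS}|\bK|^2|\msD N|^2$, so staticity forces $\omega=0$. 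Taking the contrapositive yields $\omega\neq0\Rightarrow$ not static, and the contrapositive of the first implication ($\lambda=0\Rightarrow$ static) then gives $\lambda\neq0$.

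The step I expect to be the main obstacle is making the first implication fully rigorous. Two points require care: that the solution at $\lambda=0$ is genuinely non-trivial, so that Lemma~\ref{vanishingbdrylemma} exhibits staticity rather than the vacuous zero solution --- this I would attribute to the normalization built into the construction of the time-translation approximate Killing vector; and the reconciliation of decay rates, since the existence theorem delivers $H^\infty_{1/2}$ whereas Lemma~\ref{vanishingbdrylemma} is phrased with $0<\beta<\tfrac{1}{2}$. The rate $H^\infty_{1/2}$ is exactly the borderline at which the boundary-at-infinity term of the Green identity underlying Lemma~\ref{vanishingbdrylemma} vanishes, so I would verify that the argument there applies at $\beta=\tfrac{1}{2}$, appealing to the discreteness of admissible growth exponents in the weighted-Sobolev/Fredholm framework of \cite{ValWil17,Bar86}. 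By contrast the second implication needs only that the static lapse satisfies $N=\Delta_hN=0$ on the stable MOTS, which the stable-MOTS lemma and the evolution equations already supply, so I anticipate no further difficulty there.
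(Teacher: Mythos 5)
Your proposal is correct and takes essentially the same route as the paper, which justifies this theorem in a single remark by combining the main existence theorem of Section \ref{Section:MainExistenceTheorem}, Lemma \ref{vanishingbdrylemma} (giving $\lambda=0\Rightarrow$ static via $\mathscr{P}^{*}(X,X^i)=0$ under the vanishing boundary data), and the time-symmetric analysis of Section \ref{KIDsolvabilitysec} (stable-MOTS lemma plus the $\omega$-lemma, giving static $\Rightarrow\omega=0$ and hence, by contraposition, $\omega\neq0\Rightarrow$ not static $\Rightarrow\lambda\neq0$). The two technical points you single out --- non-triviality of the solution when $\lambda=0$ and the borderline decay rate $\beta=\tfrac{1}{2}$ versus the hypothesis $0<\beta<\tfrac{1}{2}$ of Lemma \ref{vanishingbdrylemma} --- are genuine subtleties that the paper's one-sentence justification glosses over, so your treatment is, if anything, more careful than the paper's own.
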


Note that, by Lemma \ref{vanishingbdrylemma}, since we have vanishing
boundary conditions, we have a solution to the KID equations and thus
Dain's invariant vanishes.

\medskip
In the non-time symmetric case one could write down an analogue of the
above theorem with the explicit solutions coming from the system
\eqref{kidwsourceterms1}-\eqref{kidwsourceterms2} as well as
prescribed values for the other boundary values. In this case Dain's
invariant $\lambda$ would be non-zero. Further work would entail
removing Assumption 1 and the precise construction of invariants for
specific given initial data. It would also be of interest to explore
the conditions one would have to impose on the initial data in order
to guarantee existence of unique solutions to the projected KID
equations on $\partial\mathcal{S}$.

\subsection*{Acknowledgements}
We have benefited from discussions with J.L. Jaramillo on the MOTS
stability operator. Further enriching discussions with J.L. Williams
about various aspects of this project 
are acknowledged.The calculations in this article have been carried
out in the suit {\tt xAct} for the Wolfram programming language \cite{xAct}. We
have made used of routined for integration by parts in xAct originally
written by T. B\"ackdahl. We thank him for his help in adapting these
to the setting considered in this article.

\newpage
\appendix
\section{Green's Formula for the full AKE}\label{gfappendix}
We will show that the Green formula indeed gives the form found in
Lemma \ref{lemgfAKE}. Throughout this appendix we will suppress the volume and surface forms $\mathrm{d}\mu$ and $\mathrm{d}S$ as well as $\partial \mcS$ on the surface integrals for readability. Start by considering the expression
\begin{align*}
\phantom{=}& \int_{\mathcal{S}}\mathscr{P} \circ \mathscr{P}^{*}
\begin{pmatrix}
X\\
X_{i}
\end{pmatrix}
\cdot
\begin{pmatrix}
Z\\
Z^{i}
\end{pmatrix}
-
\int_{\mathcal{S}}\mathscr{P} \circ \mathscr{P}^{*}
\begin{pmatrix}
Z\\
Z^{i}
\end{pmatrix}
\cdot
\begin{pmatrix}
X\\
X_{i}
\end{pmatrix}\\
=& \int_{\mathcal{S}}\mathscr{P} \circ \mathscr{P}^{*}(X) Z+\mathscr{P} \circ \mathscr{P}^{*}(X_{i}) Z^{i}- \int_{\mathcal{S}}\mathscr{P} \circ \mathscr{P}^{*}(Z) X+\mathscr{P} \circ \mathscr{P}^{*}(Z^{i}) X_{i}.
\end{align*}
Note the slight abuse of notation in writing $\mathscr{P} \circ
\mathscr{P}^{*}(X)$ to mean the lapse component of the AKE and
$\mathscr{P} \circ \mathscr{P}^{*}(X_{i})$ to mean the shift
component. Since the AKE operator is self-adjoint, we only need to
perform integration by parts on one of the integrals and the form of
the bulk integrals will not matter. The AKE takes the form
\begin{equation*}
\hspace{-5mm}\mathscr{P}\circ\mathscr{P}^*\left(\begin{array}{c}
X\\
X_i
\end{array}\right) \equiv \left(\begin{array}{c}
2\Delta_\bmh\Delta_\bmh X-r^{ij}D_iD_jX+2r\Delta_\bmh X+\tfrac{3}{2}D^i rD_iX+(\tfrac{1}{2}\Delta_\bmh r+r_{ij}r^{ij})X\\
+D^iD^j H_{ij}-\Delta_\bmh H_k{}^k-r^{ij}H_{ij}+\bar{H} \\[1em]
D^j\Delta_\bmh D_{(i}X_{j)}+D_i\Delta_\bmh D^kX_k+D^j\Delta_\bmh F_{ij}-D_i\Delta_\bmh F_k{}^k-\bar{F}_i
\end{array}\right)
\end{equation*}
where
\begin{align*}
& \bar{H}\equiv 2(K\bar{Q}-K^{ij}\bar{Q}_{ij})+2(K^{ki}K^j{}_k-KK^{ij})\bar{\gamma}_{ij},\\
& \bar{F}_i\equiv \left(D_i K^{kj}-D^k K^j{}_i\right)\bar{\gamma}_{jk}-\left(K^k{}_i D^j-\tfrac{1}{2}K^{kj}D_i\right)\bar{\gamma}_{jk}+\tfrac{1}{2}K^k{}_i D_k \bar{\gamma}\\
&\bar{\gamma}_{ij}\equiv D_iD_jX-X r_{ij}-\Delta_\bmh X h_{ij}+H_{ij}\\
&\bar{Q}_{ij}\equiv -\Delta_\bmh(D_{(i}X_{j)}-D^kX_k h_{ij}+F_{ij})\\
&H_{ij}\equiv 2X(K^k{}_iK_{jk}-KK_{ij})-K_{k(i}D_{j)}X^k+\tfrac{1}{2}K_{ij}D_kX^k
+\tfrac{1}{2}K_{kl}D^kX^l h_{ij}-\tfrac{1}{2}X^kD_k K_{ij}+\tfrac{1}{2}X^k D_k K h_{ij}\\
& F_{ij}\equiv 2X(Kh_{ij}-K_{ij}).
\end{align*}
We consider each component separately, beginning with the \emph{lapse component}.

\subsection{The lapse component}
The lapse component of the AKE is given by 
\begin{align*}
\mathscr{P}\circ\mathscr{P}^*(X) =&\hspace{0.5em}  2\Delta_\bmh\Delta_\bmh X-r^{ij}D_iD_jX+2r\Delta_\bmh X+\tfrac{3}{2}D^i rD_iX+(\tfrac{1}{2}\Delta_\bmh r+r_{ij}r^{ij})X\\&
+D^iD^j H_{ij}-\Delta_\bmh H_k{}^k-r^{ij}H_{ij}+\bar{H}.
\end{align*}
Therefore, we perform integration by parts term by term on the expression
\[
 \int_{\mathcal{S}}\mathscr{P} \circ \mathscr{P}^{*}(X) Z.
\]

\medskip
Since we know that the operator is self-adjoint, we know the bulk
integral obtained through integration by parts. accordingly, we
concentrate our attention on the boundary terms of the associated bulk
integral in each term of $\mathscr{P}\circ\mathscr{P}^{*}$.

\begin{align*}
\int_{\mathcal{S}} 2\Delta_\bmh\Delta_\bmh X :&\oint \pdv{}{\rho}X\Delta_h Z -\oint \pdv{}{\rho}\left(\Delta_h Z\right) X +\oint \pdv{}{\rho}\left(\Delta_h X\right) Z -\oint \pdv{}{\rho}Z\Delta_h X,\\
\\
\int_{\mathcal{S}} r^{ij}D_{i}D_{j}X Z :& \oint r^{ij}\rho_iD_j\left(X\right) Z -\oint X\rho_jD_i\left(r^{ij}Z\right),\\
\\
\int_{\mathcal{S}} r\Delta_{h}X Z :& \oint rZ\pdv{}{\rho}X -\oint X\pdv{}{\rho}\left(rZ\right),\\
\\
\int_{\mathcal{S}} \frac{3}{2}D^{i}rD_{i}XZ :& \frac{3}{2}\oint \pdv{}{\rho}\left(rZ\right)X, \\
\\
\int_{\mathcal{S}}D^{i}D^{j}H_{ij}Z :& \oint 2\rho^{i}D^{j}(XB_{ij})Z - \oint2\rho^{j}XB_{ij}D^{i}Z\\
&-\frac{1}{2}\left(\oint \rho^{i}D^{j}(K_{ki}D_{j}X^{k})Z - \oint \rho^{j}K_{ki}D_{j}X^{k}D^{i}Z +\oint K_{ki}\rho_{j}X^{k}D^{j}D^{i}Z \right)\\
&-\frac{1}{2}\left(\oint \rho^{i}D^{j}(K_{kj}D_{i}X^{k})Z - \oint \rho^{j}K_{kj}D_{i}X^{k}D^{i}Z +\oint K_{kj}\rho_{i}X^{k}D^{j}D^{i}Z \right)\\
&+\frac{1}{2}\left(\oint \rho^{i}D^{j}(K_{ij}D_{k}X^{k})Z -\oint \rho^{j}K_{ij}D_{k}X^{k}D^{i}Z+\oint \rho_{k}X^{k}K_{ij}D^{j}D^{i}Z\right)\\
&+\frac{1}{2}\left(\oint \pdv{}{\rho}\left(K_{kl}D^{k}X^{l}\right)Z -\oint K_{kl}D^{k}X^{l}\pdv{}{\rho}Z + \oint \rho^{k}K_{kl}X^{l}\Delta_{h}Z \right)\\
&-\frac{1}{2}\left(\oint \rho^{i}D^{j}X^{k}D_{k}(K_{ij})Z -\oint X^{k}\rho^{j}D_{k}(K_{ij})D^{i}Z \right)\\
&+\frac{1}{2}\left(\oint \pdv{}{\rho}\left(X^{k}D_{k}K\right)Z-\oint X^{k}D_{k}K\pdv{}{\rho}Z \right),\\
\\
\int_{\mathcal{S}} \Delta_{h}H_{k}{}^{k}Z : & \oint \pdv{}{\rho}\left(XB_{k}{}^{k}\right)Z - \oint2 XB_{k}{}^{k}\pdv{}{\rho}Z \\
&+\frac{1}{2}\left(\oint  \rho_{j}D^{j}(K_{kl}D^{k}X^{l})Z - \oint \rho^{j}K_{kl}D^{k}X^{l}D_{j}Z +\oint K_{kl}\rho^{k}X^{l}D^{j}D_{j}Z  \right)\\
&+\frac{1}{2}\left(\oint \rho_{j}D^{j}(K D_{k}X^{k})Z -\oint \rho^{j}K D_{k}X^{k}D_{j}Z+\oint \rho_{k}X^{k}D^{j}D_{j}Z\right)\\
&+\oint \pdv{}{\rho}\left(X^{k}D_{k}K\right)Z-\oint X^{k}D_{k}K\pdv{}{\rho}Z, \\
\\
\int_{\mathcal{S}} r^{ij}H_{ij}Z : & \frac{1}{2}\left(-\oint r^{ij}K_{ki}\rho_{j}X^{k}Z+\oint r^{ij}K_{ij}\rho_{k}X^{k}Z+\oint rK_{kl}\rho^{k}X^{l}Z\right),\\
\\
\int_{\mathcal{S}}\bar{H}Z :&  4\left(\oint K\rho_{i}D^{i}D_{k}X^{k}Z -\oint \rho^{i}D_{k}X^{k}D_{i}(KZ) +\oint \rho_{k}X^{k}\Delta_{h}(KZ) \right)\\
&-8\left(\oint \pdv{}{\rho}X Z -\oint X\pdv{}{\rho}Z\right)\\
&+2\left( \oint \rho_{k}K^{ij}D^{k}D_{i}X_{j}Z -\oint \rho^{k}D_{i}X_{j}D_{k}(K^{ij}Z)+\oint \rho_{i}X_{j}\Delta_{h}(K^{ij}Z)\right)\\
&-2\left(\oint K\rho_{i}D^{i}D_{k}X^{k}Z -\oint n^{i}D_{k}X^{k}D_{i}(KZ) +\oint \rho_{k}X^{k}\Delta_{h}(KZ) \right)\\
&+4\left(\oint K^{ij}\rho_{k}D^{k}(A_{ij}X)Z-\oint n^{k}A_{ij}XD_{k}(K^{ij}Z)\right)\\
&+2\left(\oint B^{ij}\rho_iD_j\left(X\right) Z -\oint X\rho_jD_i\left(B^{ij}Z\right)\right)\\
&-2\left(\oint BZ\pdv{}{\rho}X -\oint X\pdv{}{\rho}\left(BZ\right)\right)\\
&+\left(-\oint r^{ij}K_{ki}\rho_{j}X^{k}Z+\oint r^{ij}K_{ij}\rho_{k}X^{k}Z+\oint rK_{kl}n^{k}X^{l}Z\right),
\end{align*}
where $A_{ij}\equiv Kh_{ij}-K_{ij}$ and $B_{ij}\equiv
K_{ik}K^{k}{}_{j}-KK_{ij}$. By inspection one sees that $X=0,
\Delta_{h}X=0, X_{i}=0, \pdv{}{\rho}X^{k}=0$ and the same for $Z$ is
enough to for all the boundary terms to vanish.

\subsection{The shift component}
The shift component of the AKE is
\[
\mathscr{P}\circ\mathscr{P}^{*}(X_{i}) =D^j\Delta_\bmh D_{(i}X_{j)}+D_i\Delta_\bmh D^kX_k+D^j\Delta_\bmh F_{ij}-D_i\Delta_\bmh F_k{}^k-\bar{F}_i.
\]
In analogy to the lapse component, we want to perform integration by parts on 
\[
 \int_{\mathcal{S}}\mathscr{P} \circ \mathscr{P}^{*}(X_{i}) Z^{i}.
\]
As above, we compute term by term ignoring the final bulk term. 
The first two terms (i.e. the ones with highest order derivatives)
have the following boundary terms:
\begin{align*}
\oint \frac{1}{2}n^{j}Z^{i}\Delta_{h}D_{i}X_{j} &- \oint \frac{1}{2}\rho_{i}\Delta_{h}D^{j}Z^{i}X_{j}\\+\oint \frac{1}{2}n^{j}Z^{i}\Delta_{h}D_{j}X_{i} &- \oint \frac{1}{2}\rho_{j}\Delta_{h}D^{j}Z^{i}X_{i}\\
+\oint \rho_{i}Z^{i}\Delta_{h}D^{k}X_{k} &- \oint n^{k}\Delta_{h}D_{i}Z^{i}X_{k}\\
+\oint \frac{1}{2}\pdv{}{\rho}\left(D^{j}Z^{i}\right)D_{j}X_{i} &-\oint \frac{1}{2}D^{j}Z^{i}\pdv{}{\rho}\left(D_{i}X_{j}\right)\\
+\oint \frac{1}{2}\pdv{}{\rho}\left(D^{j}Z^{i}\right)D_{i}X_{j} &-\oint \frac{1}{2}D^{j}Z^{i}\pdv{}{\rho}\left(D_{j}X_{i}\right)\\
+\oint \pdv{}{\rho} \left(D_{i}Z^{i}\right)D^{k}X_{k} &- \oint D_{i}Z^{i}\pdv{\rho}\left(D^{j}X_{j}\right),
\end{align*}
\begin{align*}
\int_{\mathcal{S}}D^j\Delta_\bmh F_{ij}Z^{i} : \oint n^{j}\Delta_{h}(2A_{ij}X)Z^{i} - \oint \rho_{k}D^{k}(2A_{ij}X)D^{j}Z^{i}+\oint 2A_{ij}Xn^{k}D_{k}D^{j}Z^{i},
\end{align*}

\begin{align*}
\int_{\mathcal{S}}D_i\Delta_\bmh F_k{}^k : \oint \rho_{i}\Delta_{h}(4KX)Z^{i}-\oint \rho_{k}D^{k}(4KX)D_{i}Z^{i} +\oint 4n^{k}KXD_{k}D_{i}Z^{i}.
\end{align*}
Now, let $C_{i}{}^{jk}\equiv D_{i}K^{kj}-D^{k}K_{i}{}^{j}$ so that $\bar{F}_{i}$ can be written as 
\begin{equation*}
\bar{F}_i = C_{i}{}^{jk}\bar{\gamma}_{jk}-\left(K^k{}_i D^j-\tfrac{1}{2}K^{kj}D_i\right)\bar{\gamma}_{jk}+\tfrac{1}{2}K^k{}_i D_k \bar{\gamma}.
\end{equation*}
Then
\begin{align*}
\int_{\mathcal{S}}\bar{F}_i  Z^{i} : &\left(\oint C_{i}{}^{jk}\rho_jD_k\left(X\right) Z^{i} -\oint X\rho_kD_j\left(C_{i}{}^{jk}Z^{i}\right)\right)\\
&-\left(\oint C_{i}Z^{i}\pdv{}{\rho}X -\oint X\pdv{}{\rho}\left(C_{i}Z^{i}\right)\right)\\
&+\frac{1}{2}\left(-\oint r^{ij}K_{ki}\rho_{j}X^{k}Z+\oint r^{ij}K_{ij}\rho_{k}X^{k}Z+\oint rK_{kl}n^{k}X^{l}Z\right)\\
&\\
&-\left[\oint K^{k}{}_{i}\rho_{l}D^{l}D_{k}XZ^{i} -\oint n^{l}D_{k}XD_{l}(K^{k}{}_{i}Z^{i})+\oint \rho_{k}X\Delta_{h}(K^{k}{}_{i}Z^{i})\right.\\
&-\oint n^{j}K_{i}{}^{k}Xr_{jk}Z^{i}\\
&-\left(\oint K_{i}{}^{k}\rho_{k}\Delta_{h}X Z^{i}-\oint \rho_{l}D^{l}XD_{k}(K_{i}{}^{k}Z^{i}) + \oint n^{l}XD_{l}D_{k}(K_{i}{}^{k}Z^{i})\right)\\
&+\oint 2K_{i}{}^{k}n^{j}B_{jk}XZ^{i}\\
&-\frac{1}{2}\left(\oint K_{i}{}^{k}n^{j}K_{lj}D_{k}X^{l}Z^{i} -\oint K_{lj} \rho_{k}X^{l}D^{j}(K_{i}{}^{k}Z^{i})
\right)\\
&-\frac{1}{2}\left(\oint K_{i}{}^{k}n^{j}K_{lk}D_{j}X^{l}Z^{i} -\oint K_{lk} \rho_{j}X^{l}D^{j}(K_{i}{}^{k}Z^{i})\right)\\
&+\frac{1}{2}\left(\oint K_{i}{}^{k}n^{j}K_{jk}D_{l}X^{l}Z^{i}-\oint K_{jk}\rho_{l}X^{l}D^{j}(K_{i}{}^{k}Z^{i})\right)\\
&+\frac{1}{2}\left(\oint K_{i}{}^{k}\rho_{k}K_{lm}D^{l}X^{m}Z^{i}-\oint K_{lm}n^{l}X^{m}D_{k}(K_{i}{}^{k}Z^{i})\right)\\
&-\frac{1}{2}\oint K_{i}{}^{k}n^{j}X^{l}D_{i}K_{jk}Z^{i}\\
&\left.+\frac{1}{2}\oint K_{i}{}^{k}\rho_{k}X^{l}D_{l}KZ^{i}\right]\\
&\\
&+\frac{1}{2}\left[\oint K^{kj}\rho_{i}D_{j}D_{k}XZ^{i} -\oint \rho_{j}D_{k}XD_{i}(K^{kj}Z^{i})+\oint \rho_{k}XD_{j}D_{i}(K^{kj}Z^{i})\right.\\
&-\oint \rho_{i}K^{kj}Xr_{jk}Z^{i}\\
&-\left(\oint K\rho_{i}\Delta_{h}X Z^{i}-\oint \rho_{l}D^{l}XD_{i}(KZ^{i}) + \oint n^{l}XD_{l}D_{i}(KZ^{i})\right)\\
%%%%%
&+\oint 2K^{kj}\rho_{i}B_{jk}XZ^{i}\\
&-\frac{1}{2}\left(\oint K^{kj}\rho_{i}K_{lj}D_{k}X^{l}Z^{i} -\oint K_{lj} \rho_{k}X^{l}D_{i}(K^{kj}Z^{i})
\right)\\
&-\frac{1}{2}\left(\oint K^{kj}\rho_{i}K_{lk}D_{j}X^{l}Z^{i} -\oint K_{lk} \rho_{j}X^{l}D_{i}(K^{kj}Z^{i})\right)\\
&+\frac{1}{2}\left(\oint K^{kj}\rho_{i}K_{jk}D_{l}X^{l}Z^{i}-\oint K_{jk}\rho_{l}X^{l}D_{i}(K^{kj}Z^{i})\right)\\
&+\frac{1}{2}\left(\oint K \rho_{i}K_{lm}D^{l}X^{m}Z^{i}-\oint K_{lm}n^{l}X^{m}D_{i}(KZ^{i})\right)\\
&-\frac{1}{2}\oint K^{kj}\rho_{i}X^{l}D_{l}K_{jk}Z^{i}\\
&\left.+\frac{1}{2}\oint K\rho_{i}X^{l}D_{l}KZ^{i}\right]\\
%%%%%
&\\
&+\frac{1}{2}\left[-\left(\oint K_{i}{}^{k}\rho_{k}\Delta_{h}XZ^{i} -\oint \rho_{l}D^{l}XD_{k}(K_{i}{}^{k}Z^{i}) +\oint n^{l}XD_{l}D_{k}(K_{i}{}^{k}Z^{i})\right) \right.\\
&-\frac{1}{2}\oint K_{i}{}^{k}\rho_{k}XrZ^{i}\\
&+\oint K_{i}{}^{k}\rho_{k}XBZ^{i}\\
&+\frac{1}{4}\left(\oint K_{i}{}^{k}\rho_{k}K_{lm}D^{l}X^{m}Z^{i}-\oint n^{l}X^{m}K_{lm}D_{k}(K_{i}{}^{k}Z^{i})\right)\\
&+\frac{1}{4}\left(\oint K_{i}{}^{k}\rho_{k}KD_{l}X^{l}Z^{i}-\oint \rho_{l}X^{l}KD_{k}(K_{i}{}^{k}Z^{i})\right)\\
&\left.+\frac{1}{2}\oint K_{i}{}^{k}\rho_{k}X^{l}D_{l}KZ^{i}\right].
\end{align*}
We see that setting $X=0$, $\Delta_h X=0$, $X^i=0$ and
$\pdv{}{\rho}X^i=0$ and, equivalently, for $Z$: $Z=0$, $\Delta_h Z=0$,
$Z^i=0$ and $\pdv{}{\rho}Z^i=0$ that all boundary integrals in the above expressions
vanish.  This assertion can be verified by using the fact that we can separate the derivative
terms like $D^{i}X^{j}$ into an tangential part and normal part to the
boundary $\partial \mcS$. Integrating by parts on the intrinsic
derivatives yields $X^{i}$ and the normal derivative part. Both of
these vanish using the vanishing of $X^i$ and its normal
derivative.

\medskip
In summary, we can write Green's formula as
\begin{align*}
\phantom{=}& \int_{\mathcal{S}}\mathscr{P} \circ \mathscr{P}^{*}
\begin{pmatrix}
X\\
X_{i}
\end{pmatrix}
\cdot
\begin{pmatrix}
Z\\
Z^{i}
\end{pmatrix}
-
\int_{\mathcal{S}}\mathscr{P} \circ \mathscr{P}^{*}
\begin{pmatrix}
Z\\
Z^{i}
\end{pmatrix}
\cdot
\begin{pmatrix}
X\\
X_{i}
\end{pmatrix}\\
=&\sum_{j=1}^2\sum_{\alpha=1}^{4}\left( \oint S^{\alpha}_{j}(X,X^i)B'^{\alpha}_{j}(Z,Z^i)-\oint b^{\alpha}_j(X,X^i)T^{\alpha}_{j}(Z,Z^i)\right)
\end{align*}
where
\begin{align*}
&b_1^{1} = X\\
&b_2^{1} = \Delta_h X\\
&b_{1}^{2,3,4}=X^{i}\\
&b_{2}^{2,3,4}=\pdv{}{\rho}X^{i}
\end{align*}
and
\begin{align*}
&B'^{1}_1 = Z\\
&B_2^{'1} = \Delta_h Z\\
&B_{1}^{'2,3,4}=Z^i\\
&B_{2}^{'2,3,4}=\pdv{}{\rho}Z^i
\end{align*}
Thus, we verify that the boundary operators satisfy $b_j^{\alpha}=B_j^{;\alpha}$ and,
accordingly, the associated boundary value problems is self-adjoint. 

\section{Deriving the solution to the ODE arising from the LS condition}\label{LSODEappendix}
 In this appendix we derive the solution to the system of equations 
 \[
\begin{cases}
\left(\displaystyle\dv[2]{}{\rho}-|\xi|^{2}\right)\left(4 \displaystyle\dv[2]{}{\rho}X -|\xi|^{2}X+3\mi\xi^{A}\displaystyle\dv{}{\rho}X_{A}\right)=0,\\
\left(\displaystyle\dv[2]{}{\rho}-|\xi|^{2}\right)\left(\left(\displaystyle\dv[2]{}{\rho}-|\xi|^{2}\right)X_{A}+3\mi\xi_{A}\left(\displaystyle\dv{}{\rho}X+\mi\xi^{B}X_{B}\right)\right)=0.
\end{cases}
\]
This system is used in the analysis of the Lopatinskij-Shapiro
conditions in Section \ref{Subsection:LS}.

\medskip
Using the Ansatz
\[
X_{i}=\sum_{n=0}^{k}
X_{*i}\rho^{n}\me^{\pm |\xi| \rho}
\]
where $k=0,1,2$, we obtain the general
solution in vector notation
\begin{align}\label{LP2sol}
\begin{split}
\vec{X}=&\vec{a}\me^{ |\xi| \rho}+\vec{b}\me^{ -|\xi| \rho}+\alpha
\begin{pmatrix}
c\\
c_{1}\\
\frac{-c_{1}\xi_{1}+\mi c |\xi|}{\xi_{2}}
\end{pmatrix}
\rho\me^{ |\xi| \rho}
+\beta
\begin{pmatrix}
d\\
d_{1}\\
\frac{-d_{1}\xi_{1}-\mi d |\xi|}{\xi_{2}}
\end{pmatrix}
\rho\me^{ -|\xi| \rho}\\
&
+\gamma
\left( \rho^{2}
\begin{pmatrix}
-\frac{3r|\xi|}{10}-\frac{3\mi r_{1}(\xi_{1}+\xi_{2})}{10}\\
\frac{3}{10}\xi_{1}\left(-\mi r+\frac{r_{1}(\xi_{1}+\xi_{2})}{|\xi|}\right)\\
\frac{3}{10}\xi_{2}\left(-\mi r+\frac{r_{1}(\xi_{1}+\xi_{2})}{|\xi|}\right)
\end{pmatrix}
+
\rho
\begin{pmatrix}
r\\
r_{1}\\
r_{2}
\end{pmatrix}
\right)\me^{ |\xi| \rho}\\
&
+\delta
\left( \rho^{2}
\begin{pmatrix}
\frac{3s|\xi|}{10}-\frac{3\mi s_{1}(\xi_{1}+\xi_{2})}{10}\\
\frac{3}{10}\xi_{1}\left(-\mi s-\frac{s_{1}(\xi_{1}+\xi_{2})}{|\xi|}\right)\\
\frac{3}{10}\xi_{2}\left(-\mi s+\frac{s_{1}(\xi_{1}+\xi_{2})}{|\xi|}\right)
\end{pmatrix}
+
\rho
\begin{pmatrix}
s\\
s_{1}\\
s_{2}
\end{pmatrix}
\right)\me^{ -|\xi| \rho},
\end{split}
\end{align}
where $\vec{a}$ and $\vec{b}$ are constant vectors,
$\{c,c_{1},d,d_{1},r,r_{1},r_{2},s,s_{1},s_{2}\}$ are constants and
$\alpha$, $\beta$, $\gamma$, $\delta$ are constants of multiplicity. We write the vector $\vec{a}$ as 
\begin{equation*}
\vec{a}=
\begin{pmatrix}
a\\
a_{1}\\
a_{2}
\end{pmatrix}
=a
\begin{pmatrix}
1\\
0\\
0
\end{pmatrix}
+
a_{1}
\begin{pmatrix}
0\\
1\\
0
\end{pmatrix}
+a_{2}
\begin{pmatrix}
0\\
0\\
1
\end{pmatrix}
\end{equation*}
where these vectors are linearly independent. This is the same as
setting the values of $a,a_{1},a_{2}$ and using the multiplicity of
this solution of the system of ODEs to then multiply each vector by a
constant. We can do the same thing with the vector
$\vec{c}=(c,c_{1},c_{2})$ where $c_{2}=\frac{-c_{1}\xi_{1}+\mi c
|\xi|}{\xi_{2}}$. Since there are two constants and $c_{2}$
is a linear combination of $c$ and $c_{1}$ there are two linearly
independent vectors we can construct from this. Rearranging the
expression for $c_{2}$ so that $c_{1}$ and $c_{2}$ are free, one can
then make the choice of $c_{1}=1,c_{2}=0$ and $c_{1}=0,c_{2}=1$ to get
two solutions
\begin{equation}\label{LPpsol}
\gamma
\begin{pmatrix}
\frac{-\mi\xi_{1}}{|\xi|}\\
1\\
0
\end{pmatrix}\rho\me^{ |\xi| \rho}
+
\delta
\begin{pmatrix}
\frac{-\mi\xi_{2}}{|\xi|}\\
0\\
1
\end{pmatrix}\rho\me^{ |\xi| \rho}.
\end{equation}
Finally, one notes that in the final two terms of the full solution
above, equation \eqref{LP2sol}, that the vector
$\vec{r}=(r,r_{1},r_{2})$ (and $\vec{s}=(s,s_{1},s_{2})$) has to be
linearly independent to the two vectors in \eqref{LPpsol}. Choosing
$r=1,r_{1}=r_{2}=0$, the term becomes
\[
\gamma\left(\rho^{2}
\begin{pmatrix}
-\frac{3|\xi|}{10}\\
-\frac{3}{10}\mi\xi_{1}\\
-\frac{3}{10}\mi\xi_{2}
\end{pmatrix}
+
\rho
\begin{pmatrix}
1\\0\\0
\end{pmatrix}
\right)
\me^{ |\xi| \rho}.
\]
One can then perform these manipulations to all the terms with negative exponential solution to obtain the full general solution the ODE system
\begin{align*}
\vec{X}(\rho) =& 
c_{1}
\begin{pmatrix}
1\\
0\\
0
\end{pmatrix}
\me^{ -|\xi| \rho}
+
c_{2}
\begin{pmatrix}
0\\
1\\
0
\end{pmatrix}\me^{ -|\xi| \rho}
+c_{3}
\begin{pmatrix}
0\\
0\\
1
\end{pmatrix}\me^{ -|\xi| \rho}
+c_{4}
\begin{pmatrix}
\frac{\mi\xi_{1}}{|\xi|}\\
1\\
0
\end{pmatrix}\rho\me^{-|\xi| \rho}
+
c_{5}
\begin{pmatrix}
\frac{\mi\xi_{2}}{|\xi|}\\
0\\
1
\end{pmatrix}\rho\me^{-|\xi| \rho}\\
&
c_{6}\left(\frac{3}{10}
\begin{pmatrix}
|\xi|\\
-\mi\xi_{1}\\
-\mi\xi_{2}
\end{pmatrix}\rho^{2}
+
\begin{pmatrix}
1\\0\\0
\end{pmatrix}p
\right)
\me^{-|\xi| \rho}
+c_{7}
\begin{pmatrix}
1\\
0\\
0
\end{pmatrix}\me^{|\xi| \rho}
+
c_{8}
\begin{pmatrix}
0\\
1\\
0
\end{pmatrix}\me^{|\xi| \rho}
+c_{9}
\begin{pmatrix}
0\\
0\\
1
\end{pmatrix}\me^{|\xi| \rho} \\
&
c_{10}
\begin{pmatrix}
\frac{-\mi\xi_{1}}{|\xi|}\\
1\\
0
\end{pmatrix}\rho\me^{ |\xi| \rho}
+
c_{11}
\begin{pmatrix}
\frac{-\mi\xi_{2}}{|\xi|}\\
0\\
1
\end{pmatrix}\rho\me^{ |\xi| \rho}
+
c_{12}\left(-\frac{3}{10}
\begin{pmatrix}
|\xi|\\
\mi\xi_{1}\\
\mi\xi_{2}
\end{pmatrix}\rho^{2}
+
\begin{pmatrix}
1\\0\\0
\end{pmatrix}\rho
\right)
\me^{ |\xi| \rho}.
\end{align*}
with 12 constants. We can then write the stable solution as the first six terms of the this solution:
\begin{align*}
\vec{X}_{s}(\rho) =& 
c_{1}
\begin{pmatrix}
1\\
0\\
0
\end{pmatrix}
\me^{ -|\xi| \rho}
+
c_{2}
\begin{pmatrix}
0\\
1\\
0
\end{pmatrix}\me^{ -|\xi| \rho}
+c_{3}
\begin{pmatrix}
0\\
0\\
1
\end{pmatrix}\me^{ -|\xi| \rho}
+c_{4}
\begin{pmatrix}
\frac{\mi\xi_{1}}{|\xi|}\\
1\\
0
\end{pmatrix}\rho\me^{-|\xi| \rho}
+
c_{5}
\begin{pmatrix}
\frac{\mi\xi_{2}}{|\xi|}\\
0\\
1
\end{pmatrix}\rho\me^{-|\xi| \rho}\\
&
c_{6}\left(\frac{3}{10}
\begin{pmatrix}
|\xi|\\
-\mi\xi_{1}\\
-\mi\xi_{2}
\end{pmatrix}\rho^{2}
+
\begin{pmatrix}
1\\0\\0
\end{pmatrix}\rho
\right)
\me^{-|\xi| \rho}.
\end{align*}
By relabelling the constants, one obtains the form of solution found
in the proof of Lemma \ref{lslemma}. 

\section{Deriving the decomposed KID equations}
\label{Appendix:DecompsotionKIDs}

One can project the KID equations~\eqref{kid1}-\eqref{kid2} onto
the $2$-surfaces of constant $\varsigma$. Let ${N^{\parallel}}^{i}$ denote the projection of
the shift vector $N^i$ onto the $2$-surfaces ---i.e. 
${N^{\parallel}}^{i}\equiv \bh\indices{^{i}_{j}}N^{j}$. We proceed now
to decompose objects into parts perpendicular and parallel to the
normal $\rho^{i}$. Observing that 
\begin{align*}
D_{i}N_{j}&=D_{i}\left(\delta\indices{_{j}^{k}}N_{k}\right)\\
&=D_{i}((\bh_{j}{}^{k}+\rho_{j}\rho^{k})N_{k}),
\end{align*}
one concludes that 
\[
D_{i}N_{j}=D_{i}N^{\parallel}_{j}+\rho_{j}\rho^{k}D_{i}N_{k}-N_{k}\rho^{k}(\bar{K}_{ij}-\rho_{i}a_{j})-N_{k}\rho_{j}(\bar{K}\indices{_{i}^{k}}-\rho_{i}a^{k}).
\]
Thus, the projection of $D_{i}N_{j}$ onto the 2-surfaces is given by
\[
\bh\indices{_{p}^{i}}\bh\indices{_{q}^{j}}D_{i}N_{j} = \bD_{p} N^{\parallel}_{q}+N_{k}\rho^{k}\bK_{pq}.
\]
Making use of this expression on the projection of the first KID
equation~\eqref{kid1} with
$\bh\indices{_{p}^{i}}\bh\indices{_{q}^{j}}$ yields
\[
NK^{\parallel}_{pq}+\bD_{(p}N^{\parallel}_{q)}+N_{k}\rho^{k}\bK_{pq}=0,
\]
where
$K^{\parallel}_{pq}\equiv\bh\indices{_{p}^{i}}\bh\indices{_{q}^{j}}K_{ij}$. 

\medskip
For the second KID equation \eqref{kid2}, we  project term by term to derive the following using Gaussian normal coordinates:
\begin{alignat*}{2}
&\bh\indices{_{p}^{i}}\bh\indices{_{q}^{j}}D_{i}D_{j}N=\bD_{p}\bD_{q}N+\rho^{k}\bK_{pq}D_{k}N,\\\\
&\bh\indices{_{p}^{i}}\bh\indices{_{q}^{j}}N^{k}D_{k}K_{ij}=N^{k}\Bigl( D_{k}K_{pq}^{\parallel}+K_{iq} \bigl(\rho^{i} (\bK_{kp} + {a}_{p} \rho_{k}) + (\bK_{k}{}^{i} +  {a}^{i} \rho_{k}) \rho_{p}\bigr)\\ 
&+ K_{pj} \bigl(\rho^{j} (\bK_{kq} +  {a}_{q} \rho_{k}) + (\bK_{k}{}^{j} +  {a}^{j} \rho_{k}) \rho_{q}\bigr)
 -K_{ij} \bigl(\rho^{j} (\bK_{k}{}^{i} + {a}^{i} \rho_{k}) \rho_{p} \rho_{q} \\&+ \rho^{i} (\bK_{k}{}^{j} + {a}^{j} \rho_{k}) \rho_{p} \rho_{q} + \rho^{i}\rho^{j} (\bK_{kq} +  {a}_{q} \rho_{k}) \rho_{p} + \rho^{i}\rho^{j} (\bK_{kp} +  {a}_{p} \rho_{k}) \rho_{q}\bigr)\Bigr),\\\\
&2\bh^{ij}r_{ij}=\er-\bK^{2}+\bK_{pr}\bK^{pr},
\end{alignat*}
where we have used the Gauss-Codazzi equation to derive the final
identity. Putting these expression together, using Gaussian normal coordinates, we
obtain the projection of the second KID equation onto $\partial\mathcal{S}$:
\begin{align*}
\begin{split}
&N^{k}\Bigl( D_{k}K_{pq}^{\parallel}+K_{iq} \bigl(\rho^{i} (\bK_{kp}) + (\bK_{k}{}^{i}) \rho_{p}\bigr) + K_{pj} \bigl(\rho^{j} (\bK_{kq}) + (\bK_{k}{}^{j} ) \rho_{q}\bigr)\\
 &- K_{ij} \bigl(\rho^{j} (\bK_{k}{}^{i}) \rho_{p} \rho_{q} + \rho^{i} (\bK_{k}{}^{j} ) \rho_{p} \rho_{q} + \rho^{i}\rho^{j} (\bK_{kq}) \rho_{p} +  \rho^{i}\rho^{j} (\bK_{kp} ) \rho_{q}\bigr)\Bigr) \\
 &+\bh\indices{_{p}^{i}}\bh\indices{_{q}^{j}}D_{i}N^{k}K_{kj}+\bh\indices{_{p}^{i}}\bh\indices{_{q}^{j}}D_{j}N^{k}K_{ik} + \bD_{p}\bD_{q}N+\rho^{k}D_{k}N\bK_{pq}\\
 &=N\left( \bh\indices{_{p}^{i}}\bh\indices{_{q}^{j}}r_{ij}+KK^{\parallel}_{pq}+\bh\indices{_{p}^{i}}\bh\indices{_{q}^{j}}K_{ik}K\indices{^{k}_{j}}\right).
 \end{split}
\end{align*}
Together, equations \eqref{fulldecompkideq1} and
\eqref{fulldecompkideq2} give conditions on $\partial\mathcal{S}$ that Killing
vectors of the the spacetime must satisfy on this 2-surface. However,
the converse is not true. If $(N,N^i)$ a solution to these two
equations that does not necessarily mean that one has a solution to
the KID equations. This is because there are two other components of
the KID equations in this decomposition which we call the
normal-normal and normal-tangential components. For the first KID equation
\eqref{kid1}, one has
\begin{alignat*}{2}
\mathrm{normal-normal:}&\qquad NK^{\perp}_{pq} &&+\rho_{(p\vert}\rho^{j}D_{j\vert}N^{\perp}_{q)}-N_{i}a^{i}\rho_{p}\rho_{q}-N^{\perp}_{(p}a_{q)}=0\\
\mathrm{normal-tangential:}& \qquad NK_{pq}^{\perp\parallel}&&+\frac{1}{2}\left(\rho_{p}\rho^{i}D_{i}N^{\parallel}_{q} +N_{j}a^{j}\rho_{p}\rho_{q}+N^{\perp}_{p}a_{q}\right)\\
& &&+\frac{1}{2}\left(\bh_{q}{}^{j}D_{j}N^{\perp}_{p}-N_{i}\rho_{p}\bK_{q}{}^{i}-\rho^{i}N_{i}\bK_{pq}\right)=0,\nonumber
\end{alignat*}
where $K_{pq}^{\perp}\equiv\rho_{p}\rho^{i}\rho_{q}\rho^{j}K_{ij}$ and
$K_{pq}^{\perp\parallel}\equiv
\rho_{p}\rho^{i}\bh_{j}{}^{q}K_{ij}$. For the second KID equation
\eqref{kid2}, one can obtain the projection under the assumption of
Gaussian coordinates. The normal-normal component of the first term of
\eqref{kid2} can be written as  
\begin{align*}
\rho_{p}\rho_{q}\rho^{i}\rho^{j}N^{k}D_{k}K_{ij}&=\rho_{p}\rho_{q}N^{k}D_{k}(\rho^{i}\rho^{j}K_{ij})-\rho_{p}\rho_{q}N^{k}K_{ij}(\rho^{i}\bK_{k}{}^{j}+\rho^{j}\bK_{k}{}^{i})\\
&=\rho_{p}\rho_{q}N^{k}D_{k}(K-\bK)-\rho_{p}\rho_{q}N^{k}K_{ij}(\rho^{i}\bK_{k}{}^{j}+\rho^{j}\bK_{k}{}^{i}),
\end{align*}
where we have used the MOTS condition in the final line. Then the normal-normal component of the second KID equations is
\begin{align*}
\begin{split}
\mathrm{normal-normal}\qquad &\rho_{p}\rho_{q}N^{k}D_{k}(K-\bK)-\rho_{p}\rho_{q}N^{k}K_{ij}(\rho^{i}\bK_{k}{}^{j}+\rho^{j}\bK_{k}{}^{i})\\
+&\rho_{p}\rho_{q}\rho^{i}\rho^{j}(D_{i}N^{k}K_{kj}+D_{j}N^{k}K_{ik})+ \rho_{p}\rho^{i}D_{i}(\rho_{q}
\rho^{j}D_{j}N)\\
=&N(-\bh_{p}{}^{i}\bh_{q}{}^{j}r_{ij}+KK^{\perp}_{pq}+\rho_{p}\rho_{q}\rho^{i}\rho^{j}K_{ik}K_{j}^{k}).
\end{split}
\end{align*}
For the normal-tangential component, we obtain
\begin{align*}
\begin{split}
\mathrm{normal-tangential} \qquad &h_{p}{}^{i}\rho_{q}\rho^{j}(N^{k}D_{k}K_{ij}+D_{i}N^{k}K_{kj}+D_{j}N^{k}K_{ik})\\
+& \rho_{p}\rho^{i}D_{i}\bD_{q}N\\
=&N(h_{p}{}^{i}\rho_{q}\rho^{j}r_{ij}+KK^{\perp\parallel}+h_{p}{}^{i}\rho_{q}\rho^{j}K_{i}{}^{k}K_{kj}).
\end{split}
\end{align*}

\newpage

\end{document}